\newcommand{\orcid}[1]{\href{https://orcid.org/#1}{\includegraphics[width=10pt]{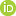}}}
\definecolor{PineGreen}{rgb}{0.0,0.47,0.44}
\definecolor{MidnightBlue}{rgb}{0.1,0.1,0.44}
\definecolor{magenta}{rgb}{1.0,0.0,1.0}
\definecolor{bl1}{HTML}{4479A1}
\definecolor{pur1}{HTML}{52196D}
\definecolor{mag1}{HTML}{2AD0F1}
\definecolor{org1}{rgb}{.92,.39.21}
\definecolor{pur2}{rgb}{.53,.47,.7}
\definecolor{desycyan}{rgb}{0.00,0.68,0.93}
\definecolor{desyorange}{rgb}{0.96,0.52,0.07}
\definecolor{desygray}{rgb}{0.47,0.47,0.47}
\newcites{P}{Publications}
\newcommand{\eqnum}{\refstepcounter{equation}\textup{\tagform@{\theequation}}}
\newtheorem{theorem}{Theorem}
\numberwithin{theorem}{section}
\newtheorem{proposition}[theorem]{Proposition}
\newtheorem*{theorem*}{Theorem}
\newtheorem{lemma}[theorem]{Lemma}
\newtheorem{corollary}[theorem]{Corollary}
\theoremstyle{definition}
\newtheorem{definition}[theorem]{Definition}
\theoremstyle{remark}
\newtheorem{remark}[theorem]{Remark}
\newtheorem{example}[theorem]{Example}
\newtheorem*{example*}{Example}
\newcommand{\Con}{\mathbf{Con}}
\newcommand{\RR}{\mathbb{R}}
\newcommand{\QQ}{\mathbb{Q}}
\newcommand{\PP}{\mathbb{P}}
\newcommand{\pp}{\mathbb{P}}
\newcommand{\CC}{\mathbb{C}}
\newcommand{\ZZ}{\mathbb{Z}}
\newcommand{\scrO}{\mathscr{O}}
\newcommand{\scrF}{\mathscr{F}}
\newcommand{\scrM}{\mathscr{M}}
\newcommand{\scrN}{\mathscr{N}}
\newcommand{\scrJ}{\mathscr{J}}
\newcommand{\scrI}{\mathscr{I}}
\newcommand{\scrD}{\mathscr{D}}
\newcommand{\Eu}{\mathbf{Eu}}
\newcommand{\bV}{\mathbf{V}}
\newcommand{\set}[1]{{\left\{{#1}\right\}}}
\newcommand{\cD}{\mathcal{D}}
\newcommand{\cI}{\mathcal{I}}
\newcommand{\cT}{\mathcal{T}}
\newcommand{\cS}{\mathcal{S}}
\newcommand{\re}{\mathrm{Re}}
\newcommand{\ann}{\mathrm{Ann}}
\newcommand{\Char}{\mathrm{Char}}
\newcommand{\mult}{\mathrm{mult}}
\renewcommand{\ss}{\boldsymbol{s}}
\newcommand{\xs}{.8}
\DeclareRobustCommand
\newcommand{\avg}[1]{\left< #1 \right>} 
\def\DD{D\kern-.7em\raise0.3ex\hbox{\char '55}\kern.33em}
\definecolor{Ftitle}{RGB}{11,46,108}
\colorlet{tableheadcolor}{Ftitle!25} 
\colorlet{tablerowcolor}{gray!10} 
\title{Spectral Decomposition of Euler-Mellin Integrals
\\
 }
\author{Martin Helmer\orcid{0000-0002-9170-8295}$\,{}^1$}
\author{Felix Tellander\orcid{0000-0001-6418-8047}$\,{}^2$}
\email{\href{mailto:martin.helmer@swansea.ac.uk}{martin.helmer@swansea.ac.uk}}
\email{\href{mailto:felix@tellander.se}{felix@tellander.se}}
\address[1]{Department of Mathematics, Swansea University, Swansea, Wales, UK.}
\address[2]{Mathematical Institute, University of Oxford, Oxford OX2 6GG, UK}
\date{\today}
\begin{document}

\maketitle
\begin{abstract}
We consider the spectral decomposition of singularities of integrals and their integrands. Our results apply to any integral of Euler-Mellin type, and thus especially to every scalar Feynman integral.
Specifically we provide for both the integrand and integral respectively; two explicit constructions of the characteristic
variety and characteristic cycle of the constructible function and $D$-module they are associated with.   
From this we also obtain the singular locus or Landau singularities of the integral. En route we give a simple procedure to compute the local Euler obstruction function of a variety, and using this, to compute the Euler characteristic of the complex link of a Whitney stratum.
\end{abstract}
\section{Introduction}
Feynman integrals are a fundamental object underpinning a wide array of topics and calculations in modern physics \cite{Weinzierl:2022eaz}. As such there is significant interest in both understanding their mathematical structures and in computing them effectively. More broadly these integrals fall into the family of integrals of {\em Euler-Mellin} type, a generalization of the integrals studied in \cite{Nilsson2013,Berkesch2014}; that is integrals of the form
\begin{equation}\label{eq:EM_Int}
   \mathcal{I}(z):=\int_{\RR_+^n} \frac{x_1^{\nu_1}\cdots x_n^{\nu_n}}{f_1(x,z)^{\lambda_1}\cdots f_p(x,z)^{\lambda_p}}\frac{dx_1}{x_1}\wedge\cdots\wedge\frac{dx_n}{x_n} 
\end{equation}
for any polynomials $f_i(x,z)$ and fixed (but generic) constants $\lambda_i$ and $\nu_j$ where we treat $x$ as integration variables and $z$ as parameters.

In particular we seek to understand the singularities of the integral \eqref{eq:EM_Int}, that is the regions in $\CC^m\ni z$ where the function (or distribution) $\cI(z)$ fails to be analytic. More precisely we seek to calculate the characteristic variety and cycle of the $D$-module which annihilates $\cI(z)$ in a practical and effective manner. 

In physics integrals of the form \eqref{eq:EM_Int} with a single denominator polynomial $f(x,z)$ correspond to scalar Feynman integrals in the Lee-Pomeransky \cite{Lee2013} representation. The singularities of these integrals are called \emph{Landau singularities}~\cite{Landau:1959fi} and studied in detail in e.g.~\cite{Eden:1966dnq,Nakanishi1971}. Knowing these singularities in advance may significantly aid the evaluation of the integrals, for example with the canonical differential equations approach~\cite{Henn:2013pwa}: They may constrain or even fully predict~\cite{Dlapa:2023cvx} the analytic building blocks or (symbol) \emph{letters}~\cite{Goncharov:2010jf} of this approach. This turns the derivation of the differential equations from a symbolic, to a much simpler numerical problem. Thanks to their utility, also in other aspects of Feynman integration~\cite{Gardi:2022khw}, Landau analysis is currently experiencing a revival, see for example~\cite{Brown:2009ta,Panzer:2014caa,Klausen:2021yrt,Mizera:2021icv,Berghoff:2022mqu,Dlapa:2023cvx,Fevola:2023short,Helmer:2024wax,Caron-Huot:2024brh,Hannesdottir:2024hke} and many others.

The physicist's Landau singularity is directly related to the characteristic variety of the $D$-module annihilating \eqref{eq:EM_Int} by simply removing the zero-section and projecting onto the space of $z$-variables. In this case, Theorem \ref{thm:A} below gives us the Landau singularities and Theorem \ref{thm:C} (see also Section \ref{sec: integration}) implies that the singular locus coincides with the so called \emph{Euler discriminant} at codimension one, something that was recently proved in \cite[Theorem 5.3]{Fevola:2024acq}.

In this note we will give two recipes which can be used to obtain this characteristic variety, and from it, the characteristic cycle. The first works by computing the blow-up of (a compactification of) the ambient space of the hypersurface defined by $F=f_1\cdots f_p$ along a Jacobian ideal derived from a certain hypersurface associated to $F$. The second employs a coarsening of the Whitney stratification of the union of the (homogenization of the) hypersurface defined by $F$ with the coordinate hyperplanes $x_i=0$; the coarsening is obtained from certain multiplicity calculations for each strata.

We begin by considering the homogenization $F_h$ of the product $F=f_1\cdots f_p$ in \eqref{eq:EM_Int} as a polynomial in $x$ with respect to the new variable $x_0$; we then consider $F_h(x,z)$ as a polynomial in the ring $\CC[x_0,\dots, x_n, z_1, \dots, z_m]$ and define $G=x_0\cdots x_n F_h$. 

Consider the blow-up ${\rm Bl}_{{\rm Jac}(G)}(\pp^n\times \CC^m)$ of $\pp^n\times \CC^m$ along the ideal $${\rm Jac}(G)=\langle \frac{\partial G}{\partial x_0}, \dots, \frac{\partial G}{\partial x_n}, \frac{\partial G}{\partial z_1}, \dots ,\frac{\partial G}{\partial z_m} \rangle,$$ and let $I_{\rm Bl}$ be the ideal defining ${\rm Bl}_{{\rm Jac}(G)}(\pp^n\times \CC^m)$ in $$\CC[x,z,\xi,\zeta]=\CC[x_0,\dots, x_n, z_1, \dots, z_m, \xi_0, \dots, \xi_n, \zeta_1, \dots, \zeta_m].$$ Then, considering ${\rm Jac}(G)$ as an ideal in $\CC[x,z,\xi,\zeta]$, the exceptional divisor of the blow-up is defined by the ideal $I_E={\rm Jac}(G)+I_{\rm Bl}$.

\renewcommand*{\thetheorem}{\Alph{theorem}}
\begin{theorem}[Characteristic Variety of Integral via Blow-up]With the notations above, and letting $\scrM$ be the $D$-module which annihilates the integral \eqref{eq:EM_Int}, if we set $$I=I_{E}\cap I_{\Con(Y)}\cap I_{\Con(0)}\;{\rm and}\; I_{\Char}=(I+\langle \xi_0, \dots, \xi_n\rangle)\cap \CC[z,\zeta]$$
then the characteristic variety of $\scrM$ is given by
 $\Char(\scrM)=\bV(I_{\Char}).$ 
    \label{thm:A}
\end{theorem}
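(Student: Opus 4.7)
The plan is to reduce the claim to two standard steps: first, identify the characteristic variety of the $D$-module $\scrN$ on $\PP^n\times\CC^m$ whose direct image along the projection $\pi\colon\PP^n\times\CC^m\to\CC^m$ equals $\scrM$; and second, apply the well-known formula for the characteristic variety under this projection. For the second step, since $\pi$ is proper (as $\PP^n$ is compact), $\Char(\pi_+\scrN)$ coincides with the image of $\Char(\scrN)$ under the Lagrangian correspondence
\[
T^*(\PP^n\times\CC^m)\;\supset\;\{\xi_0=\cdots=\xi_n=0\}\;\longrightarrow\;T^*\CC^m.
\]
Algebraically this correspondence acts on ideals exactly as $I\mapsto(I+\langle\xi_0,\dots,\xi_n\rangle)\cap\CC[z,\zeta]$, which is the recipe defining $I_{\Char}$ in the statement.

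The heart of the proof is therefore the first step, which I would carry out as follows. For generic $\lambda_j,\nu_i$ the integrand $x^\nu F^{-\lambda}\,dx_1/x_1\wedge\cdots\wedge dx_n/x_n$ is a flat section of a rank-one integrable connection on the complement of the divisor $D=\bV(G)=\bV(x_0\cdots x_n F_h)$ in $\PP^n\times\CC^m$. Standard results on the characteristic cycles of such generic rank-one logarithmic connections (Kashiwara-Malgrange) then give that $\Char(\scrN)$ equals, set-theoretically, the union $\bigcup_S T^*_S(\PP^n\times\CC^m)$ of the conormal varieties to the Whitney strata $S$ of $D$. The task is therefore to realize this union via the three ideals appearing in the statement. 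The closure of the conormal to the smooth locus of the hypersurface $\bV(G)$ is classically the projectivized exceptional divisor of $\mathrm{Bl}_{\mathrm{Jac}(G)}(\PP^n\times\CC^m)$: over a smooth point of $\bV(G)$ the Jacobian is locally principal and generated by $dG$, so its projectivization in $(\xi,\zeta)$ cuts out precisely the conormal line. This is the geometric content of $I_E$. The conormals of the remaining strata --- the singular locus $Y$ of $\bV(G)$ and the smallest stratum recorded as $0$ --- are taken directly by $I_{\Con(Y)}$ and $I_{\Con(0)}$, and the ideal intersection $I_E\cap I_{\Con(Y)}\cap I_{\Con(0)}$ produces the set-theoretic union of the three Lagrangians. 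Combining this identification with the pushforward formula above yields $\Char(\scrM)=\bV(I_{\Char})$.

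The principal obstacle will be the bookkeeping in the first step. The Whitney stratification of $\bV(x_0\cdots x_n F_h)$ can be intricate near the intersections of the coordinate hyperplanes with $\bV(F_h)$ and along the compactifying divisor $\{x_0=0\}$, and one must verify both that no conormal of a Whitney stratum is missed --- that is, that the three components $I_E$, $I_{\Con(Y)}$, $I_{\Con(0)}$ together absorb every stratum's conormal via closure relations --- and that no spurious components are introduced by the blow-up away from the smooth locus of $\bV(G)$. This is where the coarsening of the Whitney stratification advertised in the introduction does the real work; once that coarsening is in hand, the remaining verification is routine.
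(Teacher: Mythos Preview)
Your two-step architecture --- identify $\Char$ of the integrand module on $\PP^n\times\CC^m$, then push forward along the projection --- is exactly the paper's, but both steps contain gaps as you have written them.

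In Step 1 you have misread the three pieces. In the paper $Y=\bV(G)$ is the hypersurface itself, not its singular locus, so $I_{\Con(Y)}$ already cuts out the closure of the conormal bundle to the \emph{smooth} part of $\bV(G)$ --- precisely the component you are trying to extract from $I_E$. Likewise $\Con(0)$ denotes the zero section (the conormal of the whole ambient space), not the conormal of a smallest stratum. The exceptional divisor $E$ of $\mathrm{Bl}_{\mathrm{Jac}(G)}$ is supported over the critical locus $\bV(\mathrm{Jac}(G))$, not over the smooth locus of $\bV(G)$; its irreducible components $\Lambda_\alpha$ are exactly the \emph{additional} Lagrangian components of the relative conormal variety $W_G^0$ beyond $\Con(0)$ and $\Con(Y)$. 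The decomposition
\[
W_G^0\;=\;\Con(0)\,\cup\,\Con(\bV(G))\,\cup\,\textstyle\bigcup_\alpha\Lambda_\alpha
\]
is a theorem of L\^e--Teissier, and the paper invokes it directly (Section~\ref{subsubsec: computing W0 via blowup}) together with \eqref{eq: char N generic bold lambda} to finish Step 1. No Whitney stratification or coarsening is needed for Theorem~\ref{thm:A}; that route is what Theorem~\ref{thm:B} does.

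In Step 2, properness of the projection gives only Kashiwara's \emph{inclusion} $\Char\bigl(\int_\varphi\scrN\bigr)\subset\varphi_\pi\varphi_d^{-1}\Char(\scrN)$ (Theorem~\ref{thm:KashiwaraIntegral}), not the equality you assert. The paper upgrades this to equality in Lemma~\ref{lemma:int_Dmod} by arguing that $\varphi_\pi$ restricted to $\varphi_d^{-1}\Char(\scrN)$ is a \emph{finite} morphism, which is the hypothesis under which Kashiwara's argument yields equality (Remark~\ref{remark:CC_integral_Finite_equal}). This step cannot be skipped.
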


Consider a Whitney stratification $\{S_\gamma \}$ of a variety  $Y\subset \CC^n$ of dimension $d$ and suppose our chosen stratification has a total of $r$ strata with $d_\alpha=\dim (S_\alpha)$ . For a pair of strata $S_\alpha$, $S_\beta$, with ${S}_\alpha \subset \overline{S}_\beta$, let $\Eu_{\overline{S}_\beta}:\overline{S}_\beta \to \ZZ$ denote the {\em local Euler obstruction} function  associated to $S_\beta$ and let $\Eu_{\overline{S}_\beta}(S_\alpha)$ be the (constant) integer value $\Eu_{\overline{S}_\beta}(x)$ for any point $x$ in $S_\alpha$.  If $S_\alpha \not\subset \overline{S}_\beta$ then $\Eu_{\overline{S}_\beta}(S_\alpha)=0$.

Associated to the constructable function $\mathbb{1}_Y$ we have the {\em characteristic cycle}  which is the cycle class given by the formal sum $$CC(\mathbb{1}_Y) =\sum_{i=1}^r m_{\alpha_i} \Con (S_{\alpha_i}),$$where $ \Con (S_\alpha)$ is the Zariski closure of the conormal bundle to $S_\alpha$ and where the integers $m_\alpha$ are computed by solving the following explicit linear system:
\begin{equation}\hspace{-0.3cm}
    \begin{pmatrix}
        1\\
        \vdots \\
        1
    \end{pmatrix}=\Eu(Y)\begin{pmatrix}
        m_{\alpha_1}\\
        \vdots \\
        m_{\alpha_r}
    \end{pmatrix}=\begin{pmatrix}
       (-1)^{d-d_1} \Eu_{\overline{S}_{\alpha_1}}(S_{\alpha_1}) & \cdots &(-1)^{d-d_r
       }\Eu_{\overline{S}_{\alpha_r}}(S_{\alpha_1})\\
        \vdots & \ddots & \vdots\\
        (-1)^{d-d_1}\Eu_{\overline{S}_{\alpha_1}}(S_{\alpha_r}) & \cdots &(-1)^{d-d_r}\Eu_{\overline{S}_{\alpha_r}}(S_{\alpha_r})
    \end{pmatrix}\begin{pmatrix}
        m_{\alpha_1}\\
        \vdots \\
        m_{\alpha_r}
    \end{pmatrix},\label{eq:Eu_Linear_System}
\end{equation}with $d_i=d_{\alpha_i}=\dim(S_{\alpha_i})$. We emphasize that the above is, in fact, an explicit linear system with the entries of the $r\times r$ integer matrix $\Eu(Y)$ being computed from the multiplicities of subvarieties in certain {\em polar varieties} as in \eqref{eq:Eu_polar_mult}, see Remark \ref{remark:EUofComplexLink}. Note that if we arrange the strata so that $\dim(S_{\alpha_i})$ is monotonically increasing with the index $i$ the resulting integer matrix $\Eu(Y)$ is upper triangular. 

\begin{theorem}[Characteristic Variety of Integral via Whitney Stratification]\label{thm:B}Using the notation above consider $G=x_0\cdots x_n F_h$. Let $Y=\bV(G)$, take $\{S_i \;|\; i=1,\dots, r \}$ to be a Whitney stratification of $Y$, and let  $m_1, \dots, m_r$ denote the corresponding multiplicities obtained by solving the integer linear system \eqref{eq:Eu_Linear_System} using the matrix $\Eu(Y)$ resulting from this choice of Whitney stratification for $Y$. Set $\Theta\subset \{1,\dots, r\}$ to be the subset of indices for which $m_i$ is non-zero. With $I_{E}$ and $I=I_{\rm Bl}\cap I_{\Con(Y)}\cap I_{\Con(0)}$ as in Theorem \ref{thm:A} we have that
    $$\bV{(I)}=\bigcup_{i\in \Theta} {\Con(S_i)} \cup \Con(0).$$
Or in other words, the characteristic variety may be obtained from the parts of the Whitney stratification with non-zero multiplicity.  
\end{theorem}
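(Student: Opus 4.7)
The plan is to identify both sides of the claimed equality with the support of the characteristic cycle $CC(\mathbb{1}_Y)$ of the constructible function $\mathbb{1}_Y$, together with the zero section $\Con(0)$.

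For the right-hand side, I would invoke the MacPherson-Kashiwara theory of constructible functions on Whitney stratified varieties. That theory gives $CC(\mathbb{1}_Y)=\sum_i m_i\,\Con(S_i)$, with the multiplicities $m_i$ uniquely pinned down by the identity $\mathbb{1}_Y=\sum_i m_i\,(-1)^{d_i}\Eu_{\overline{S}_i}$ read off stratum by stratum, which is exactly the linear system \eqref{eq:Eu_Linear_System}. Hence the set-theoretic support of $CC(\mathbb{1}_Y)$ equals $\bigcup_{i\in\Theta}\Con(S_i)$.

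For the left-hand side I would use that intersection of ideals corresponds to union of zero sets, giving
$$\bV(I)=\bV(I_E)\cup\Con(Y)\cup\Con(0).$$
By definition $\Con(Y)$ is the closure of the conormal of $Y_\mathrm{reg}$, i.e.\ $\Con(S_{i_0})$ for the open stratum $S_{i_0}$ of dimension $d$, and this stratum lies in $\Theta$ since its coefficient in $CC(\mathbb{1}_Y)$ is $(-1)^d\neq 0$. So the main task is to identify $\bV(I_E)$ with $\bigcup_{i\in\Theta,\,\dim S_i<d}\Con(S_i)$. Here $I_E$ cuts out the exceptional divisor of the blow-up of $\pp^n\times\CC^m$ along $\text{Jac}(G)$; over $Y_\mathrm{reg}$ this blow-up is an isomorphism since $\nabla G\neq 0$ there, so $\bV(I_E)$ projects into the critical locus $\bV(\text{Jac}(G))\subseteq Y_\mathrm{sing}$. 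Along a singular stratum $S_\alpha$, the exceptional fibers consist of the limits $\lim_{y\to s}[\nabla G(y)]$ for $y\in Y_\mathrm{reg}$ approaching $s\in S_\alpha$, and by Whitney's condition (b) these limits sit inside the projectivized conormal $\PP(\Con_s S_\alpha)$. The stratum contributes $\Con(S_\alpha)$ as an irreducible component of $\bV(I_E)$ precisely when these limits fill the full projectivized conormal over a dense open subset of $S_\alpha$.

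The main obstacle will be to prove that this filling condition is equivalent to $m_\alpha\neq 0$. The bridge is the polar-variety interpretation of the local Euler obstruction summarized in Remark \ref{remark:EUofComplexLink} and equation \eqref{eq:Eu_polar_mult}: the geometric multiplicity of $\Con(S_\alpha)$ in $\bV(I_E)$ can be read off from polar multiplicities that also populate the matrix $\Eu(Y)$. Since that matrix is upper triangular with $\pm 1$ on the diagonal once strata are ordered by increasing dimension, the coefficient $m_\alpha$ of the unique solution of \eqref{eq:Eu_Linear_System} vanishes exactly when the column-wise combination of polar multiplicities describing the fiber over $S_\alpha$ degenerates, which is in turn equivalent to the exceptional fiber not spanning $\PP(\Con_s S_\alpha)$. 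Proceeding through the strata in increasing order of dimension then yields the asserted equality $\bV(I)=\bigcup_{i\in\Theta}\Con(S_i)\cup\Con(0)$.
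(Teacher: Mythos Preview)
Your overall plan—identify both sides with $|CC(\mathbb{1}_Y)|\cup\Con(0)$—is the paper's strategy too, and your treatment of the right-hand side via Kashiwara's index theorem and the linear system \eqref{eq:Eu_Linear_System} is correct. The problem is your handling of the left-hand side.

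The paper does \emph{not} attempt a direct geometric comparison between the exceptional divisor $\bV(I_E)$ and the strata with $m_\alpha\neq 0$. Instead it routes through the $D$-module $\scrN_{\boldsymbol{\lambda}}=D/\ann_D(G^\lambda)$. By Kashiwara's theorem $W_G^0=\Char(\scrN_{\boldsymbol{\lambda}})$, and by the L\^e--Teissier description of $W_G^0$ via the Jacobian blow-up one has $\bV(I)=W_G^0$. On the other side, the short exact sequence $0\to\scrO_X\to\scrO_X[*Y]\to\scrO_X[*Y]/\scrO_X\to 0$ together with the Riemann--Hilbert correspondence (Theorem~\ref{thm: sol perserves cc}) yields $CC(\scrN_{\boldsymbol{\lambda}})=\Con(0)+CC(\mathbb{1}_Y)$. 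Since for a holonomic $D$-module the characteristic variety is precisely the support of its characteristic cycle (the $D$-module multiplicities are positive integers by construction), one concludes
\[
\bV(I)=\Char(\scrN_{\boldsymbol{\lambda}})=|CC(\scrN_{\boldsymbol{\lambda}})|=\Con(0)\cup\bigcup_{m_i\neq 0}\Con(S_i)
\]
with no further geometric analysis of the blow-up required.

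Your final paragraph, by contrast, asserts without proof that ``the geometric multiplicity of $\Con(S_\alpha)$ in $\bV(I_E)$ can be read off from polar multiplicities'' and that vanishing of $m_\alpha$ is ``equivalent to the exceptional fiber not spanning $\PP(\Con_s S_\alpha)$''. Neither claim follows from what you have written. The polar multiplicities in \eqref{eq:Eu_polar_mult} are \emph{absolute} polar multiplicities of the strata closures $\overline{S}_\beta$, whereas the components of the exceptional divisor of the Jacobian blow-up are governed by the \emph{relative} polar varieties of $G$ (cf.\ the proposition from \cite{BMM1994} quoted just after \eqref{eq:Eu_polar_mult}). The identity $m_\alpha=p_\alpha(P_{d_\alpha+1}(G))$ stated there does open a purely geometric route, but you would then need the L\^e--Teissier result identifying the irreducible components of $W_G^0$ with conormals whose relative polar multiplicities are nonzero—none of which your sketch invokes. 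As it stands, the equivalence at the heart of your argument is simply asserted, so the proposal has a genuine gap; the paper closes it by passing through the $D$-module rather than by the direct comparison you attempt.
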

Our final result employs either one of the two results above and gives an explicit recipe to obtain the characteristic cycle of the integral \eqref{eq:EM_Int}. As above consider  the polynomial $G=x_0\cdots x_n F_h$ in $\CC[x, z]$, where $F_h$ is the homogenization of the product $F=f_1\cdots f_p$ in \eqref{eq:EM_Int} with respect to the new variable $x_0$. Again take $Y=\bV(G) \subset \pp^n\times \CC^m$, let $\varphi: \pp^n\times \CC^m\to \CC^m $ be the projection, and let $\scrM$ denote the $D$-module which annihilates the integral \eqref{eq:EM_Int}. From either Theorem \ref{thm:A} or Theorem \ref{thm:B} we obtain the characteristic variety $\Char(\scrM)\subset \CC_z^m\times \PP^{m-1}_\zeta$ of the integral. We denote by $\Char^*(\scrM)$ the characteristic variety with the zero-section removed.  Considering the projection map $\pi_z:\CC_z^m\times \PP^{m-1}_\zeta\to \CC_z^m$ we have that the singular locus of $\scrM$ is given by $Z:={\rm Sing}(\scrM)=\overline{\pi_z(\Char^*(\scrM))}\subset \CC_z^m$. Let $\{W_i \; |i= 1, \dots, \rho\}$ be a Whitney stratification of the algebraic variety ${\rm Sing}(\scrM)$ with some fixed ordering of strata. Similar to the case of Theorem \ref{thm:B} we construct an explicit linear system arising from a $\rho \times \rho$ integer matrix of Euler obstructions of the strata. Set $c(G):=\chi((\CC^*)^n-\bV(x_1\cdots x_n f_1\cdots f_p))-(n+1)$ where the polynomials $f_i$ are  evaluated at a generic point $z_0$ outside of the singular locus $Z$, and the $x_i$ are treated as variables with the Euler characteristic calculation taking place in $\CC^n$ (since all the $x_i$ appear in the second part of the difference we can replace $(\CC^*)^n$ with $\CC^n$ or even $\PP^n$ in a straightforward way for computational purposes). We emphasize that this is an explicit and computable integer, see Remark \ref{remark:compMultCCIntegral} for further details. In particular we consider the linear system \footnotesize\begin{equation}\label{eq:CCIntMultsEu}
    \begin{pmatrix}
       c(G) +\chi(Y\cap\varphi^{-1}(z_1))\\
        \vdots\\
        c(G)+\chi(Y\cap\varphi^{-1}(z_\rho))
    \end{pmatrix}=\Eu(Z)\begin{pmatrix}
        \mu_{1}\\
        \vdots \\
        \mu_{\rho}
    \end{pmatrix}=\begin{pmatrix}
       (-1)^{d-d_1} \Eu_{\overline{W}_{1}}(W_{1}) & \cdots &(-1)^{d-d_\rho
       }\Eu_{\overline{W}_{\rho}}(W_{1})\\
        \vdots & \ddots & \vdots\\
        (-1)^{d-d_1}\Eu_{\overline{W}_{1}}(W_{\rho}) & \cdots &(-1)^{d-d_\rho}\Eu_{\overline{W}_{\rho}}(W_{\rho})
    \end{pmatrix}\begin{pmatrix}
        \mu_{1}\\
        \vdots \\
        \mu_{\rho}
    \end{pmatrix},
\end{equation}\normalsize where $z_i$ is a generic point in $W_i$. As in Theorem 
\ref{thm:B} we again emphasize that the above is, in fact, an explicit linear system that is upper triangular if the strata are ordered by dimension. The entries of the $\rho\times \rho$ integer matrix $\Eu(Z)$ are computed from the multiplicities of polar varieties as in \eqref{eq:Eu_polar_mult}, see Remark \ref{remark:EUofComplexLink}. Similarly the integers  $\chi(Y\cap\varphi^{-1}(z_i))$ may also be computed explicitly, see Remark \ref{remark:compMultCCIntegral}. 
\begin{theorem}[Characteristic Cycle of Integral via Whitney Stratification]\label{thm:C}
    Let $\scrM$ be the $D$-module which annihilates the integral \eqref{eq:EM_Int} and let $\Char(\scrM)\subset \CC_z^m\times \PP^{m-1}_\zeta$ be its characteristic variety with the associated projection map $\pi_z:\CC_z^m\times \PP^{m-1}_\zeta\to \CC_z^m$.  If $\{W_i \; |\,i= 1, \dots, \rho\}$ is a Whitney stratification of the algebraic variety $Z:={\rm Sing}(\scrM)=\overline{\pi_z(\Char^*(\scrM))}\subset \CC_z^m$ then the characteristic cycle of $\scrM$ is given by $$
    CC(\scrM)=\mu_0\Con(0)+\sum_{i=1}^\rho \mu_iW_i,
    $$with $\mu_0=c(G)+n+1$, $\mu_i$ for $i=1,\ldots,\rho$ are obtained by solving the linear system \eqref{eq:CCIntMultsEu} and $\Char(\scrM)$ is obtained via Theorem \ref{thm:A} or Theorem \ref{thm:B}.
\end{theorem}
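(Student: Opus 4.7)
The plan is to combine the decomposition of $\Char(\scrM)$ provided by Theorem~\ref{thm:A} (or~\ref{thm:B}) with Kashiwara's local index formula for the characteristic cycle of a holonomic $D$-module, reading off the $\mu_i$ from jumps in the Euler characteristic of the twisted cohomology controlled by $\scrM$. Since $\scrM$ is holonomic, $\Char(\scrM)$ is Lagrangian, and by Theorem~\ref{thm:A} its irreducible components are precisely the zero-section $\Con(0)$ together with the closures of the conormal bundles to the Whitney strata $W_i$ of $Z={\rm Sing}(\scrM)$. Thus one necessarily has
$$CC(\scrM)=\mu_0\,\Con(0)+\sum_{i=1}^\rho \mu_i\,\overline{T^*_{W_i}\CC^m}$$
for integers $\mu_0,\mu_1,\dots,\mu_\rho$, and the entire task reduces to determining these multiplicities.

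Next I would apply the Kashiwara--MacPherson local index formula to the constructible function $z\mapsto \chi(DR(\scrM))_z$ whose characteristic cycle is $CC(\scrM)$: at each point $z$,
$$\chi(DR(\scrM))_z - \mu_0 \;=\; \sum_{i=1}^\rho \mu_i\,(-1)^{d-d_i}\,\Eu_{\overline{W}_i}(z),$$
with the sign conventions of \eqref{eq:CCIntMultsEu}. Evaluating at a generic point $z_j$ of each stratum $W_j$ reproduces the matrix equation \eqref{eq:CCIntMultsEu} once the left-hand side is identified with $c(G)+\chi(Y\cap\varphi^{-1}(z_j))$, and upper-triangularity of $\Eu(Z)$ (when strata are ordered by increasing dimension) then uniquely pins down $\mu_1,\dots,\mu_\rho$. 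The hard step is justifying this identification: interpreting $\scrM$ as the Gauss--Manin-type $D$-module governing the family of twisted cohomologies of the very affine fibers $(\CC^*)^n\setminus\bV(f_1(\cdot,z)\cdots f_p(\cdot,z))$, the jump $\chi(DR(\scrM))_{z_j}-\mu_0$ equals the discrepancy between the generic fiber's twisted Euler characteristic and the one at $z_j$. Passing to the compactification $\pp^n\times\CC^m$ with boundary $Y$, additivity gives $\chi\bigl((\CC^*)^n\setminus\bV(f_1\cdots f_p)\bigr)_{z_j}=(n+1)-\chi(Y\cap\varphi^{-1}(z_j))$, while the generic value is $c(G)+(n+1)$; the difference is then exactly $c(G)+\chi(Y\cap\varphi^{-1}(z_j))$, as required. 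The main obstacle is carrying this Gauss--Manin/vanishing-cycle analysis out rigorously along each stratum, i.e., verifying that the constructible function underlying $CC(\scrM)$ on $\CC^m$ really is given by the signed Euler characteristic of the fiberwise very-affine complement.

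Finally, $\mu_0=c(G)+n+1$ comes from evaluating the same index formula at a generic $z_0\notin Z$: every $\Eu_{\overline{W}_i}(z_0)=0$ for $i\ge 1$, so $\mu_0=\chi(DR(\scrM))_{z_0}$, which is the rank of the generic local system of solutions of $\scrM$. By the Huh/Huh--Sturmfels formula for twisted Euler characteristics of very affine varieties, specialized to $(\CC^*)^n\setminus\bV(f_1\cdots f_p)$, this rank equals $\chi\bigl((\CC^*)^n\setminus\bV(f_1\cdots f_p)\bigr)=c(G)+n+1$. Combining this value of $\mu_0$ with the $\mu_i$ obtained from \eqref{eq:CCIntMultsEu} and the support description from Theorem~\ref{thm:A} (or~\ref{thm:B}) then yields the claimed formula for $CC(\scrM)$.
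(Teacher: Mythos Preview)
Your proposal is correct and follows essentially the same strategy as the paper: decompose $\Char(\scrM)$ via Theorem~\ref{thm:A}/\ref{thm:B}, apply Kashiwara's index theorem stratum by stratum to obtain the linear system \eqref{eq:CCIntMultsEu}, and identify the generic rank $\mu_0$ with the Euler characteristic of the very affine complement using Huh's formula.

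The one genuine difference is in how the ``hard step'' is handled. You frame the identification $\chi_{\scrM}(z)=n+1-\chi(Y\cap\varphi^{-1}(z))$ through a Gauss--Manin/twisted-cohomology interpretation of $\scrM$ and a vanishing-cycle argument along each stratum, and you flag this as the main obstacle. The paper bypasses this entirely: it first identifies $CC(\scrN_{\boldsymbol{\lambda},\boldsymbol{\nu}})$ with $CC(\mathbb{1}_{\pp^n\times\CC^m}-\mathbb{1}_Y)$ (Section~\ref{sec: sheaves}), then invokes the compatibility of $CC$ with proper push-forward of constructible functions (MacPherson's formalism, \cite[Proposition~3.46]{Laurentiu2022}) to conclude that the constructible function underlying $CC(\scrM)$ is literally $\varphi_*(\mathbb{1}_{\pp^n\times\CC^m}-\mathbb{1}_Y)(z)=n+1-\chi(Y\cap\varphi^{-1}(z))$. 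This makes the identification immediate and avoids any fiberwise vanishing-cycle analysis. Your route would work but is more laborious; the paper's is shorter because it stays on the constructible-function side throughout.
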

\renewcommand*{\thetheorem}{\arabic{theorem}}
From a computational perspective, perhaps surprisingly, using the recent algorithm of \cite{helmer2023effective} to compute the Whitney stratification and the algorithm of \cite{Harris_2019} to compute multiplicities is often the most effective approach. The added benefit of this later approach is that we obtain $CC(\mathbb{1}_Y)$, as well as the characteristic cycle of the $D$-module associated to the annihilator ideal of $G$.

To emphasize the practicality of our approach we have implemented these methods in the \texttt{WhitneyStatifications} Macaulay2 \cite{M2} package which may be downloaded at the link below:
\begin{center}
\vspace{-0.2cm}
    \url{http://martin-helmer.com/Software/WhitStrat/}
\end{center}
\medskip

The paper is arranged as follows. In Section \ref{sec: Whitney stratifications and euler obstructions} we introduce the necessary geometric constructions like Whitney stratifications, Euler obstructions and polar varieties. Following this in Section \ref{sec: microlocal Dmod} we describe the annihilator ideal for one and several polynomials, their characteristic varieties, and how they relate to the microlocal description of the singularities of the solutions to these modules.  In Section \ref{sec: sheaves} we use the Riemann-Hilbert correspondence and Kashiwara's index theorem to calculate the characteristic cycle of the $D$-module annihilating the product of polynomials. This section also gives a method to compute the Euler characteristic of the complex link of a Whitney stratum (Remark \ref{remark:EUofComplexLink}). The integration of distributions and $D$-modules are described in Section \ref{sec: integration} where we also explicitly calculate the characteristic variety and cycle of several Euler-Mellin integrals.

We conclude this section with a simple example illustrating the main results. We note that in the example below we will present a Whitney stratification of a variety $Y\subset \CC^n$ in terms of closed sets to mirror the presentation used on a computer when computing such a stratification algorithmically.

\begin{example*}[Simple example of Theorem's \ref{thm:A} and \ref{thm:B}]
The Euler-Mellin integral
\small
\begin{equation*}
        \int_{\RR_+^2}\frac{x_1^{\nu_1}x_2^{\nu_2}}{(x_1+x_2+(z_1+z_2-z_3)x_1x_2+z_1x_1^2+z_2x_2^2)^{D/2}}\frac{dx_1\wedge dx_2}{x_1x_2}
    \end{equation*}
    \normalsize
    correspond to the Feynman integral for the process:\newpage
\begin{center}
\vspace*{-0.6cm}
\begin{tikzpicture}[baseline=-\the\dimexpr\fontdimen22\textfont2\relax,transform shape,scale=0.85]
    \begin{feynman}
    \vertex (a);
    \vertex [right = of a] (b);
    \vertex [right = of b] (c);
    \vertex [right = of c] (d);
    \diagram* {
	    (a) --[fermion] (b) -- [half left,edge label'=\({x_1}\)](c) -- [anti fermion](d), (c) -- [half left,edge label'=\({x_2}\)](b),
};
    \end{feynman}
    \end{tikzpicture}
\vspace{-0.2cm}
\end{center}
To use Theorem \ref{thm:A} to determine the characteristic variety of the integral we first need the characteristic variety of the modified homogenized integrand. We can obtain this directly using the blow-up construction in \ref{thm:A} or using the Whitney stratification method in Theorem \ref{thm:B}, we do the latter in this example. 

We define the polynomial of the homogenized integrand
\small
\begin{equation*}
    G=x_0x_1x_2(x_0(x_1+x_2)+(z_1+z_2-z_3)x_1x_2+z_1x_1^2+z_2x_2^2)\in\CC[x_0,\ldots,x_2,z_1,\ldots,z_3]
\end{equation*}
\normalsize
which defines a hypersurface $Y=\bV(G)\subset\PP_x^2\times\CC_z^3$. There are 24 strata in the minimal Whitney stratification of this surface and these strata consist of one of dimension zero, six of dimension one, four of dimension two, three of dimension three, six of dimension four, and are given below:
\small
\begin{align*}
  Y_0=&\bV(x_0, x_1, x_2, z_1, z_2, z_3)\\
  Y_1=&\bV(z_{3},z_{1},x_{2},x_{1},x_{0})\cup \bV(z_{3},z_{1}-z_{2},x_{2},x_{1},x_{0})\cup \bV(z_{2},z_{1},x_{2},x_{1},x_{0})\cup\bV(z_{2},z_{1}-z_{3},x_{2},x_{1},x_{0})\\&\cup \bV(z_{3},z_{2},x_{2},x_{1},x_{0} )\cup \bV(z_{2}-z_{3},z_{1},x_{2},x_{1},x_{0})\\
    Y_2=&\bV(z_3,x_2,x_1,x_0)\cup\bV(z_2,x_2,x_1,x_0)\cup\bV(z_1,x_2,x_1,x_0)\\&
    \cup\bV(x_2,x_1,x_0,z_1^2-2z_1z_2+z_2^2-2z_1z_3-2z_2z_3+z_3^2)\\
    Y_3=&\bV(x_2,x_1,x_0)\cup \bV(z_1,x_2,x_0)\cup \bV(z_2,x_1,x_0)\\
    Y_4=&\bV(x_1,x_0)\cup \bV(x_2,x_0)\cup \bV(x_1,x_2z_2+x_0),
       \bV(x_2,x_1z_1+x_0)\cup \bV(x_2,x_1)\\
       &\cup\bV(x_0,x_1^2z_1+x_1x_2z_1+x_1x_2z_2+x_2^2z_2-x_1x_2z_3)\\
    Y_5=&Y=\bV (x_0)\cup \bV(x_1^2z_1+x_1x_2z_1+x_1x_2z_2+x_2^2z_2-x_1x_2z_3+x_
      0x_1+x_0x_2)\cup \bV (x_2)\cup \bV( x_1).
\end{align*}\normalsize
Using the ordering implied above, i.e.~with the one strata of dimension zero first and the dimension five strata arising from $\bV(x_1)$ last, the matrix $\Eu(Y)$ is

\scriptsize
\begin{equation*}
\left(\!\begin{array}{cccccccccccccccccccccccc}
-1&1&1&1&1&1&1&-1&-1&-1&0&1&1&1&-1&-1&-1&-1&-1&-2&1&1&1&1\\
0&1&0&0&0&0&0&-1&0&-1&0&1&1&0&-1&-1&-1&-1&-1&-2&1&1&1&1\\
0&0&1&0&0&0&0&-1&0&0&-1&1&0&0&-1&-1&-1&-1&-1&-1&1&1&1&1\\
0&0&0&1&0&0&0&0&-1&-1&0&1&1&1&-1&-1&-1&-1&-1&-2&1&0&1&1\\
0&0&0&0&1&0&0&0&-1&0&-1&1&0&1&-1&-1&-1&-1&-1&-1&1&0&1&1\\
0&0&0&0&0&1&0&-1&-1&0&0&1&0&1&-1&-1&-1&-1&-1&-2&1&1&1&1\\
0&0&0&0&0&0&1&0&0&-1&-1&1&1&0&-1&-1&-1&-1&-1&-1&1&0&1&1\\
0&0&0&0&0&0&0&-1&0&0&0&1&0&0&-1&-1&-1&-1&-1&-2&1&1&1&1\\
0&0&0&0&0&0&0&0&-1&0&0&1&0&1&-1&-1&-1&-1&-1&-2&1&0&1&1\\
0&0&0&0&0&0&0&0&0&-1&0&1&1&0&-1&-1&-1&-1&-1&-2&1&0&1&1\\
0&0&0&0&0&0&0&0&0&0&-1&1&0&0&-1&-1&-1&-1&-1&-1&1&0&1&1\\
0&0&0&0&0&0&0&0&0&0&0&1&0&0&-1&-1&-1&-1&-1&-2&1&0&1&1\\
0&0&0&0&0&0&0&0&0&0&0&0&1&0&0&-1&0&-1&0&-1&1&1&1&0\\
0&0&0&0&0&0&0&0&0&0&0&0&0&1&-1&0&-1&0&0&-1&1&1&0&1\\
0&0&0&0&0&0&0&0&0&0&0&0&0&0&-1&0&0&0&0&0&1&0&0&1\\
0&0&0&0&0&0&0&0&0&0&0&0&0&0&0&-1&0&0&0&0&1&0&1&0\\
0&0&0&0&0&0&0&0&0&0&0&0&0&0&0&0&-1&0&0&0&0&1&0&1\\
0&0&0&0&0&0&0&0&0&0&0&0&0&0&0&0&0&-1&0&0&0&1&1&0\\
0&0&0&0&0&0&0&0&0&0&0&0&0&0&0&0&0&0&-1&0&0&1&1&1\\
0&0&0&0&0&0&0&0&0&0&0&0&0&0&0&0&0&0&0&-1&1&1&0&0\\
0&0&0&0&0&0&0&0&0&0&0&0&0&0&0&0&0&0&0&0&1&0&0&0\\
0&0&0&0&0&0&0&0&0&0&0&0&0&0&0&0&0&0&0&0&0&1&0&0\\
0&0&0&0&0&0&0&0&0&0&0&0&0&0&0&0&0&0&0&0&0&0&1&0\\
0&0&0&0&0&0&0&0&0&0&0&0&0&0&0&0&0&0&0&0&0&0&0&1
\end{array}\!\right).
\end{equation*}\normalsize
Solving the linear system \eqref{eq:Eu_Linear_System} with this matrix shows that the seven strata of smallest dimension (i.e.~all those in $Y_0$ and $Y_1$) appear with multiplicity zero in the characteristic cycle. Hence we obtain: 
 \begin{align*}
CC(\mathbb{1}_Y)&=\Con(z_3,x_2,x_1,x_0)+\Con(x_2,x_1,x_0,z_1^2-2z_1z_2+z_2^2-2z_1z_3-2z_2z_3+z_3^2)\\
    &+\Con(z_2,x_2,x_1,x_0)+\Con(z_1,x_2,x_1,x_0)+6\Con(x_2,x_1,x_0)+\Con(z_1,x_2,x_0)\\
    &+\Con(z_2,x_1,x_0)+2\Con(x_2,x_1)+\Con(x_2,x_0)+\Con(x_1,x_0)+\Con(x_2,x_1z_1+x_0)\\
    &+\Con(x_1,x_2z_2+x_0)+\Con(x_0,x_1^2z_1+(z_1+z_2-z_3)x_1x_2+x_2^2z_2)+\Con(x_0)\\
    &+\Con(x_2)+\Con(x_1)+\Con(x_1^2z_1+(z_1+z_2-z_3)x_1x_2+x_2^2z_2+x_0(x_1+x_2)).
\end{align*}
By Theorem \ref{thm:B}, the characteristic variety of the $D$-module annihilating the integrand is given by the union of the above 17 varieties together with the zero-section $\Con(0)$ and we denote the ideal defining this variety by $I$. Now using Theorem \ref{thm:A} we get that the characteristic variety of the $D$-module annihilating the integral is
\begin{align*}
    \bV(I_\Char)=&\bV((I+\avg{\xi_0,\ldots,\xi_2})\cap\CC[z,\zeta])\\
    =&\bV(\zeta_1\zeta_2+\zeta_1\zeta_3+\zeta_2\zeta_3,\, z_1\zeta_1+z_2\zeta_2+z_3\zeta_3,\, z_1\zeta_2\zeta_3-z_2\zeta_2\zeta_3-z_3\zeta_2\zeta_3-2z_3\zeta_3^2,\nonumber \\
        &{}\quad z_2\zeta_1\zeta_3-z_3\zeta_1\zeta_3+z_2\zeta_2\zeta_3-z_3\zeta_3^2)\\
        =&\Con(0)\cup\Con(z_3)\cup\Con((z_1+z_2-z_3)^2-4z_1z_2)\cup\Con(z_1)\cup\Con(z_2).\nonumber
\end{align*}
The Landau singularities of this integral can now be read of as the non-zero arguments of the $\Con(\cdot)$, i.e. these singularities are the surface defined by $z_3((z_1+z_2-z_3)^2-4z_1z_2)z_1z_2=0$.

Now that we have the characteristic variety and singular locus of the integral we can use Theorem \ref{thm:C} to calculate the characteristic variety. Let $Z=\bV(z_1z_2z_3((z_1+z_2-z_3)^2-4z_1z_2))$ be the singular locus. This variety has a Whitney stratification consisting of eleven strata:
    \small
    \begin{align*}
        Z_0&=\bV(z_1,z_2,z_3)\\
        Z_1&=\bV(z_2,z_3)\cup\bV(z_1,z_3)\cup\bV(z_1,z_2)\bV(z_1-z_2,z_3)\cup\bV(z_1-z_3,z_2)\cup\bV(z_2-z_3,z_1)\\
        Z_2&=\bV(z_3)\cup\bV(z_2)\cup\bV(z_1)\cup\bV((z_1+z_2-z_3)^2-4z_1z_2).
    \end{align*}
    \normalsize
    With the order indicated here, the Euler obstruction matrix is given by
    \small
    \begin{equation*}
        \Eu(Z)=\left(\!\begin{array}{ccccccccccc}
1&-1&-1&-1&-1&-1&-1&1&1&1&0\\
0&-1&0&0&0&0&0&1&1&0&0\\
0&0&-1&0&0&0&0&1&0&1&0\\
0&0&0&-1&0&0&0&0&1&1&0\\
0&0&0&0&-1&0&0&1&0&0&1\\
0&0&0&0&0&-1&0&0&1&0&1\\
0&0&0&0&0&0&-1&0&0&1&1\\
0&0&0&0&0&0&0&1&0&0&0\\
0&0&0&0&0&0&0&0&1&0&0\\
0&0&0&0&0&0&0&0&0&1&0\\
0&0&0&0&0&0&0&0&0&0&1
\end{array}\!\right).
    \end{equation*}
    \normalsize
    The rank of this $D$-module is three and calculating remaining necessary Euler characteristics in Theorem \ref{thm:C} we get the linear system:
    \small
    \begin{equation*}
    \begin{pmatrix}
        3\\2\\2\\2\\2\\2\\2\\1\\1\\1\\1
    \end{pmatrix}=
        \left(\!\begin{array}{ccccccccccc}
1&-1&-1&-1&-1&-1&-1&1&1&1&0\\
0&-1&0&0&0&0&0&1&1&0&0\\
0&0&-1&0&0&0&0&1&0&1&0\\
0&0&0&-1&0&0&0&0&1&1&0\\
0&0&0&0&-1&0&0&1&0&0&1\\
0&0&0&0&0&-1&0&0&1&0&1\\
0&0&0&0&0&0&-1&0&0&1&1\\
0&0&0&0&0&0&0&1&0&0&0\\
0&0&0&0&0&0&0&0&1&0&0\\
0&0&0&0&0&0&0&0&0&1&0\\
0&0&0&0&0&0&0&0&0&0&1
\end{array}\!\right)\begin{pmatrix}
        \mu_1\\\mu_2\\\mu_3\\\mu_4\\\mu_5\\\mu_6\\\mu_7\\\mu_8\\\mu_9\\\mu_{10}\\\mu_{11}
    \end{pmatrix}\iff\begin{pmatrix}
        \mu_1\\\mu_2\\\mu_3\\\mu_4\\\mu_5\\\mu_6\\\mu_7\\\mu_8\\\mu_9\\\mu_{10}\\\mu_{11}
    \end{pmatrix}=\begin{pmatrix}
        0\\0\\0\\0\\0\\0\\0\\1\\1\\1\\1
    \end{pmatrix}.
    \end{equation*}\normalsize
    This means that the characteristic cycle of this $D$-module is
    \begin{equation*}
        3\Con(0)+\Con(z_3)+\Con((z_1+z_2-z_3)^2-4z_1z_2)+\Con(z_1)+\Con(z_2).
    \end{equation*}
\end{example*}
\counterwithin{theorem}{section}

\section{Whitney Stratifications and Euler Obstructions}\label{sec: Whitney stratifications and euler obstructions}
In this section we briefly review several important geometric constructions which will play a key role in our results in later sections. 

Let $Z$ be a variety of dimension $d$. A {\em Whitney stratification} of $Z$ is a subdivision into a finite number of smooth connected manifolds $S_\alpha$, called {\em strata}, such that $Z=\bigcup_\alpha {S_\alpha}$ and such that {\em Whitney's condition} B holds for all pairs $M=S_\alpha$, $N=S_\beta$, of these manifolds. A pair of strata, $M,N$, is said to be \emph{incident}, denoted $M\preceq N$ if $M\subseteq \overline{N}$. Incident strata $M,N$ satisfy Whitney's condition B \cite[Section 19]{Whitney1965} at a point $x\in M$ if for every sequence $\{p_\ell\} \subset M$ and $\{q_\ell\}\subset N$ with $\lim p_\ell=\lim q_\ell=x$, 
  the limit of secant lines between $p_\ell,q_\ell$ is contained in the limit of tangent planes to $N$ at $q_\ell$.  We say the pair $M, N$ satisfies condition B if condition B holds at all points $x\in M$. For any $M\not\subset \overline{N}$ condition B is vacuously true. When computing Whitney stratifications, and representing them on a computer, it will often be convenient (as in e.g.~\cite{hnFOCM,helmer2023effective}) to represent the Whitney stratification as a flag $Z_\bullet$ of varieties $Z_0\subset \cdots \subset Z_d=Z$ where the strata are then the connected components of the successive differences $Z_i-Z_{i-1}$. 

  Given a variety $X=\bV(f_1, \dots, f_m)\subset \CC^n$, $\dim(X)=d$, with $f_i\in \CC[x_1,\dots, x_n]$, we let $X_{\rm Sing}$ denote the singular locus of $X$ and let  $X_{\rm reg} =X-X_{\rm Sing}$ denote the (open) manifold of smooth points in $X$. The {\em conormal variety} of a variety $X\subset \CC^n$ is the subvariety of $X
 \times \pp^{n-1}$ given by 
 \begin{align}\label{eq:conormal}
{\rm\bf Con}(X)=\overline{\left\lbrace  (p,\xi)\;|\; p \in X_{\rm reg} \text{ and } T_pX_{\rm reg} \subset\xi^\perp \right\rbrace};
\end{align} as noted in the introduction ${\rm\bf Con}(X)$ can also be thought of as the Zariski closure of the conormal bundle to the smooth part of a variety (where we hence consider the direction vectors as projective to get a unique representative, up to projective equivalence, for each direction). 
We note that we can compute equations for $\Con(X)$ in a straightforward manner. Work in the ring $\CC[x_1,\dots,x_n,\xi_1,\dots, \xi_n]$ and set $$
\mathfrak{K}=\begin{pmatrix}
\xi_1& \cdots & \xi_n\\
	\frac{\partial f_1}{\partial x_1} &\cdots& \frac{\partial f_1}{\partial x_n}\\
	\vdots& \ddots & \vdots \\
	\frac{\partial f_m}{\partial x_1} &\cdots& \frac{\partial f_m}{\partial x_n}\\
	\end{pmatrix} \quad \text{ and }
 \quad \mathfrak{J}=\begin{pmatrix}
	\frac{\partial f_1}{\partial x_1} &\cdots& \frac{\partial f_1}{\partial x_n}\\
	\vdots& \ddots & \vdots \\
	\frac{\partial f_m}{\partial x_1} &\cdots& \frac{\partial f_m}{\partial x_n}\\
	\end{pmatrix}.
$$ Let $K$ be the ideal generated by all $(n-d+1)\times (n-d+1)$ minors of the matrix $\mathfrak{K}$, and let $J$ be the ideal defined by the $(n-d)\times (n-d)$ minors of the matrix $\mathfrak{J}$. The conormal variety ${\rm\bf Con}(X)=\bV(I_{{\rm\bf Con}(X)})$ is defined by the radical ideal given by the saturation $I_{{\rm\bf Con}(X)}=(I_X+K):J^\infty$. Note also that $X_{\rm Sing}=\bV(J)$. Conormal varieties are naturally connected to Whitney stratification, see e.g.~\cite{hnFOCM}, where an algorithm to compute Whitney a stratification using conormal varieties is given. 

A related concept is that of the \emph{polar varieties} of $X$; these are defined as follows. Consider a flag $L_\bullet$ of length $d$ 
\[
L_\bullet = \left(L_{d} \supset L_{d-1} \supset \cdots \supset L_{1} \right),
\]
where each $L_i \subset \CC^n$ is an $i$-dimensional linear subspace.
  For each $i$ in $\set{0,1,\ldots,d-1}$, the dimension $i$ polar variety of $X$ along the flag $L_\bullet$ is defined as the closure 
  \[
  P_i(X;L_\bullet) := \overline{\set{p \in X_\text{reg} \mid \dim(T_pX_\text{reg} \cap L_{i+1}) > d-n+i+1}}.
  \] For $i=d$ we set $P_d(X;L_\bullet)=X$. We may write down the defining equations for the polar varieties as follows. Take $$
\mathfrak{K}_i=\begin{pmatrix}
c^{(0)}_1& \cdots & c^{(0)}_n\\
\vdots& \ddots & \vdots\\
c^{(i)}_1& \cdots & c^{(i)}_n\\
	\frac{\partial f_1}{\partial x_1} &\cdots& \frac{\partial f_1}{\partial x_n}\\
	\vdots& \ddots & \vdots \\
	\frac{\partial f_m}{\partial x_1} &\cdots& \frac{\partial f_m}{\partial x_n}\\
	\end{pmatrix},
$$ where the $c^{(i)}_j\in \CC$ are general constants for $i \in \set{0, \dots, d-1}$. 
Let $K_i$ be the ideal generated by all $(n-d+i+1)\times (n-d+i+1)$ minors of the matrix $\mathfrak{K}_i$. Then \begin{equation}
P_{i}(X;L_\bullet)=\bV((K_i+I_X):J^\infty),    \label{eq:polarVariety}
\end{equation}
where $L_{i+1}$ is the linear space spanned by the first $i+1$ rows of $\mathfrak{K}_i$. 

Polar varieties contains a lot of geometric information. For example, they can be used to effectively calculate Whitney stratifications \cite{helmer2023effective} and reduce them to a canonical coarsest stratification. Also note that the similarity in the equations of the polar and conormal varieties is not co-incidental, and in fact if we take $\pi_X:\Con(X)\to X$ to be the natural projection, the polar variety $P_{i}(X;L_\bullet)$ is the image of the (closure of the) intersection of $\Con(X)\cap L_{i+1}$ under the projection $\pi_X$ where we treat $L_{i+1}$ as an $i$ dimensional projective liner space. 

Let $Z\subset \CC^n$ be a complex variety, defined by the ideal $I_Z$ and let $W\subset Z$ be an irreducible variety defined by a prime ideal $I_W$. Recall that the {\em multiplicity} of $Z$ along
  $W$, denoted ${\rm mult}_W(Z)$, is defined as the Hilbert-Samuel multiplicity
  (see e.g.\cite[Chapter 12]{eisenbud2013commutative}) of the local ring
  $(\CC[x_1,\dots, x_n]/I_Z)_{I_W}$ at the prime ideal $I_W$; note that if $p$ is a general point in $W$ then ${\rm mult}_W(Z)={\rm mult}_p(Z)$. If $W \cap Z=\emptyset$ we define ${\rm mult}_W(Z):=0$. These multiplicities can be computed efficiently using Segre classes \cite{Harris_2019}.

  In \cite[Chapter V, Theorem 1.2]{teissier1981varietes} Teissier shows that two open strata $M,\,N$ with $M\subset\overline{N}$ satisfy Whitney's condition B if and only if the sequence of Hilbert-Samuel multiplicities
  \begin{equation*}
      \mult_\bullet(\overline{N},z)=\{\mult_zP_{0}(\overline{N};L_\bullet),\ldots,\mult_zP_{d-1}(\overline{N};L_\bullet),\,\mult_z\overline{N}\}
  \end{equation*}
is constant for every $z\in M$. Given any Whitney stratification, strata with the same sequence of polar multiplicities can be joined to obtain a unique canonical coarsest Whitney stratification.

Another geometric construction that can be obtained from the polar varieties is the \emph{local Euler obstruction}. The local Euler obstruction (or Euler obstruction) was introduced by MacPherson in his generalization of the Chern class to singular varieties \cite{MacPherson1974}, an object we now call the Chern-Schwartz-MacPherson (CSM) class. In this context, for a variety $Y$, it is thought of as a constructible function $Y\to \ZZ$ and is shown to take a constant value on each Whitney stata of $Y$. 

The same invariant was also introduced by Kashiwara a year or so earlier \cite{Kashiwara1973} and we will use his recursive definition of this invariant. Let $Y$ be a variety of dimension $d$, fix a Whitney stratification $Y=\cup S_\alpha$, with $d_\alpha=\dim(S_\alpha)$, and denote the Euler obstruction of $\overline{S}_\gamma$ along $S_\alpha$ as $\Eu_{\overline{S}_\gamma}(S_\alpha)$; as noted above the Euler obstruction takes a constant value on strata hence $\Eu_{\overline{S}_\gamma}(x)= \Eu_{\overline{S}_\gamma}(S_\alpha)$ for all $x\in S_\alpha$. We define $\Eu_{\overline{S}_\alpha}(S_\alpha)=1$ and for non-incident strata $S_\alpha$ and $\overline{S}_\beta$ we have $\Eu_{\overline{S}_\beta}(S_\alpha)=0$. Recursively we now define 
\begin{equation}\label{eq: definition Euler obstruction}
    \Eu_{\overline{S}_\gamma}(S_\alpha)=\sum_{S\alpha\prec S_\beta\preceq S_\gamma}\Eu_{\overline{S}_\gamma}(S_\beta)c(S_\alpha,\,S_\beta),\qquad c(S_\alpha,\,S_\beta)=\chi(B(x,\epsilon)\cap S_\beta\cap H_\eta)
\end{equation}
where $c(S_\alpha,\,S_\beta)$ is the Euler characteristic of a ball centered at $x\in S_\alpha$ with radius $\epsilon$ intersected with $S_\beta$ intersected with a generic linear affine space of codimension $\mathrm{dim}S_\alpha+1$ at a distance $\eta$ from $x$, where $\epsilon>>\eta>0$. This space is closely related to the \emph{complex link} of Goresky-MacPherson \cite[Part I, Sec. 2.2]{SMTbook}, the crucial difference is that the complex link is the intersection with an algebraic variety and not an open strata as above:
\begin{equation}
    CL_{\overline{S}_\beta}(S_\alpha):=B(x,\epsilon)\cap \overline{S}_\beta\cap H_\eta.
\end{equation}
Using a special case of Kashiwara's index theorem \cite{Kashiwara1973, BDK1981, Dubson1984}, which we will describe in Section \ref{sec: sheaves}, we have
\begin{equation}\label{eq: index thm for char fnc}
    \mathbb{1}_Y(S_\alpha)=\sum_{S_\alpha\preceq S_\beta}(-1)^{d-d_\alpha}m_\beta\Eu_{\overline{S}_\beta}(S_\alpha)
\end{equation}
which we can invert to
\begin{equation}
    (-1)^{d-d_\alpha}m_\alpha=1-\sum_{S_\alpha\prec S_\beta}c(S_\alpha,\,S_\beta)
\end{equation}
and using inclusion/exclusion relations for the Euler characteristic we obtain finally
\begin{equation}\label{eq:mult_complex_link}
    (-1)^{d-d_\alpha}m_\alpha=1-\chi(CL_{Y}(S_\alpha)).
\end{equation}
It is well-known in stratified Morse theory that the complex link is constant on Whitney strata. The sign convention here is chosen to guarantee that $m_\alpha\ge 0$ for perverse sheaves.

We know from the work of Teissier that the multiplicities of the polar varieties of $\overline{S}_\beta$ are constant on $S_\alpha$ \cite{teissier1981varietes}. Let $P_l(\overline{S}_\beta)$ be the polar variety of dimension $l$ with respect to a generic linear flag $L_\bullet$ . We denote the multiplicity of this polar variety at  a general point $p$ in $S_\alpha$ as ${\rm mult}_\alpha(P_l(\overline{S}_\beta)):={\rm mult}_p(P_l(\overline{S}_\beta))$. For pair of strata, $S_\alpha, S_\beta$, with  $S_\alpha\subset \overline{S_\beta}$, the polar multiplicity ${\rm mult}_\alpha(P_l(\overline{S}_\beta)=0$ for $l\le d_\alpha$. The Euler obstruction can now be calculated from the polar multiplicities \cite[Corollary 5.1.2]{trang1981varietes}
\begin{equation}\label{eq:Eu_polar_mult}
    \Eu_{\overline{S}_\beta}(S_\alpha)=\sum_{k=d_\alpha+1}^{d_\beta}(-1)^{d_\beta-k}{\rm mult}_{\alpha}(P_k(\overline{S}_\beta)).
\end{equation}
If $\cup S_\alpha$ is a Whitney stratification of the hypersurface $Y=\bV(f)$ we denote by $p_x(k)$ the multiplicity at $x$ of the \emph{relative} polar variety $P_k(f)$ which are also constant along Whitney strata so we can label the multiplicity as $p_\alpha(P_k(f))$. These numbers are also recursively defined by the normal polar multiplicities:
\begin{proposition}[{\cite[Proposition 5.1.1]{BMM1994}}]
    For $k\le d_\alpha$ we have $p_\alpha(P_k(f))=0$ and for $k>d_\alpha$
    \begin{equation}
        p_\alpha(P_k(f))=(-1)^{n-k}+\sum_{S_\alpha\prec S_\beta:k\le d_\beta}p_\beta(P_{d_\beta+1}(f))\left(\sum_{l=k}^{d_\beta}(-1)^{k-l}{\rm mult}_{\alpha}(P_l(\overline{S}_\beta))\right).
    \end{equation}
\end{proposition}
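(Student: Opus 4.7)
The plan is to derive this recursion by combining Teissier's characterisation of polar multiplicities with the L\^e--Greuel formula, applied at a generic point of the stratum $S_\alpha$. To begin, I would fix a generic point $x \in S_\alpha$, choose a germ of a smooth slice $N$ of dimension $n - d_\alpha$ transverse to $S_\alpha$ at $x$, and work with the restriction $f|_N$. Genericity of the flag $L_\bullet$ together with Whitney's condition B ensures that, locally near $x$, the relative polar variety $P_k(f)$ pulls back cleanly from a relative polar variety of $f|_N$, and its Hilbert--Samuel multiplicity along $S_\alpha$ equals the multiplicity at $x$ of this sliced polar variety. For $k \le d_\alpha$, a direct dimension count then forces $p_\alpha(P_k(f)) = 0$, since the generic linear constraints cutting out $P_k(f)$ intersect the transverse slice in the empty germ at $x$.

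For $k > d_\alpha$, the idea is to express $p_\alpha(P_k(f))$ as an Euler characteristic through L\^e--Greuel, and then decompose that Euler characteristic across the strata of $Y$ incident to $S_\alpha$. Concretely, choose a generic linear subspace $L$ through $x$ of the appropriate codimension and consider the Milnor fibre $F$ of $f|_{L \cap N}$ at $x$. By L\^e--Greuel (iterated in the standard way), $p_\alpha(P_k(f))$ is obtained from $\chi(F)$ minus a correction coming from $P_{k+1}(f)$; since the recursion is precisely packaging this reduction, it suffices to compute $\chi(F)$ in terms of the stratification data.

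The Euler characteristic $\chi(F)$ decomposes by inclusion--exclusion over the induced (Whitney) stratification $\{S_\beta \cap L \cap N\}_{S_\alpha \preceq S_\beta}$ on the slice. The open top stratum contributes the term $(-1)^{n-k}$, the Euler characteristic of a generic Milnor fibre of a smooth function in the appropriate number of complex variables. Each proper incident stratum $S_\beta$ contributes, via the product-like local structure secured by Whitney's condition B, a factor of $p_\beta(P_{d_\beta+1}(f))$ (the relative polar multiplicity of $f$ at a generic point of $S_\beta$) times the Euler characteristic of the complex link of $\overline{S}_\beta$ along $S_\alpha$ intersected with the generic plane $L$. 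Teissier's telescoping identity, equation \eqref{eq:Eu_polar_mult} in the form used earlier in this section, then rewrites this complex-link Euler characteristic as the alternating sum $\sum_{l=k}^{d_\beta} (-1)^{k-l}\, \mult_\alpha(P_l(\overline{S}_\beta))$, and reassembling everything yields the stated formula.

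I expect the main obstacle to be twofold. First, there is a transversality step: one must show that the generic flag $L_\bullet$ can be chosen simultaneously generic for the ambient polar varieties of every $\overline{S}_\beta$ with $S_\alpha \preceq S_\beta$ \emph{and} for the relative polar varieties $P_\bullet(f)$, a routine but delicate genericity argument. Second, the sign bookkeeping required to reconcile the $(-1)^{n-k}$ ambient contribution, the $(-1)^{k-l}$ Teissier terms, and the dimensional conventions in the L\^e--Greuel step must be tracked carefully. Rigorously justifying the local product structure near $x$ -- in particular, compatibility of the Milnor fibration of $f|_{L\cap N}$ with the induced stratification on the slice -- is where the deep input from stratified Morse theory in the BMM framework is genuinely needed, and this is the step I would expect to demand the most care.
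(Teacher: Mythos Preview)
The paper does not prove this proposition at all: it is stated with an explicit citation to \cite[Proposition~5.1.1]{BMM1994} and used as a black box, the only follow-up being the one-line observation that substituting $k = d_\alpha + 1$ recovers $m_\alpha = p_\alpha(P_{d_\alpha+1}(f))$. There is therefore no ``paper's own proof'' against which to compare your proposal.

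For what it is worth, your outline is in the spirit of the original BMM argument --- slicing transversally to $S_\alpha$, invoking L\^e--Greuel to convert relative polar multiplicities into Euler characteristics of Milnor fibres, and then decomposing those Euler characteristics across incident strata using Teissier's polar-multiplicity formula. The genericity and sign-tracking concerns you flag are exactly the places where the BMM paper does the real work, so your self-assessment of the obstacles is accurate. But since the present paper simply imports the result, there is nothing further to compare.
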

It follows by direct substitution that in the special case $k=d_\alpha +1$ we have $m_\alpha=p_\alpha(P_{d_\alpha+1}(f))$. 

\section{Microlocal Analysis of $D$-modules and Annihilator Ideals}\label{sec: microlocal Dmod}
The Weyl algebra $D$ on $\CC^n$ is given by 
\begin{equation}
    D:=R\avg{\partial_1,\ldots,\partial_n},\quad R=\CC[x_1,\ldots,x_n]
\end{equation}
where $[x_i,\,x_j]=[\partial_i,\,\partial_j]=0$ and $\partial_ix_j=x_j\partial_i+\delta_{ij}$ for all $1\le i,j\le n$. To replace $\CC^n$ with a  general manifold $X$, the ring $R$ is replaced by the sheaf of regular functions $\scrO_X$ and $D$ is replaced by $\scrD_X$, the sheaf of $\CC$-linear differential operators on $\scrO_X$. A system of partial differential equations $P_1u=0,\ldots,P_ku=0$ is encoded as the left-ideal $I=\avg{P_1,\ldots,P_k}$ and generates the $D$-module $M=D/I.$

For a non-zero element $p$ in the Weyl algebra we define the \emph{principal symbol} as the following initial form (cf.~\cite[Chapter 1]{SST}):
\begin{equation}\label{eq: principal symbol}
    \sigma(p):=\mathrm{in}_{(0,1)}(p).
\end{equation}
This is an element in the associated graded ring $\mathrm{gr}_{(0,1)}(D)=\CC[x_1,\ldots,x_n,\xi_1,\ldots,\xi_n]$ which is a commutative ring in $2n$ variables. For any left-ideal $I$ in a Weyl algebra we can similarly define the \emph{characteristic ideal} as the ideal containing all principal symbols of $p\in I$:
\begin{equation}\label{eq: characteristic ideal}
    \mathrm{in}_{(0,1)}(I):=\{\sigma(p)\,|\,p\in I\}.
\end{equation}
The \emph{characteristic variety} describes regions where the solutions to the system of PDEs (partial differential equations) associated to $I$ may be singular and is simply the variety defined by the characteristic ideal above, i.e. $$\mathrm{Char}(I):=\mathbf{V}(\mathrm{in}_{(0,1)}(I)).$$
The characteristic ideal can directly be calculated using non-commutative Gr\"obner bases in e.g. \texttt{Macaulay2} \cite{M2} and it is a theorem of \={O}aku \cite{Oaku1994} that this definition coincides with the analytic $D$-module definition using good filtrations for sheaves of modules. A $D$-ideal (or module) is said to be \emph{holonomic} if $\dim(\Char(I))=n$. All $D$-modules of interest in this paper are of an even nicer type called \emph{regular holonomic}, however, we do not give the technical definition here, see e.g. \cite[Definition 2.4.1]{SST} or \cite[Definition 5.2]{Kashiwara2003} for two different but equivalent definitions.

Let $\{\Lambda_\alpha\}$ denote the irreducible components of $\mathrm{Char}(I)$, or the irreducible components of the characteristic variety of any $D$-module $\scrM$. We the construct the algebraic cycle
\begin{equation}
    CC(\scrM)=\sum_{\alpha} m_\alpha\Lambda_\alpha
\end{equation}
called the \emph{characteristic cycle} where the multiplicities $m_\alpha$ are unique positive integers. For an intrinsic $D$-module definition of $m_\alpha$ see \cite[Proposition 1.8.2]{Bjork1993}. For holonomic modules $CC(\scrM)$ is a Lagrangian cycle.  For a variety $Y$ one may also define the characteristic cycle of the constructable function $\mathbb{1}_Y$, in this case the multiplicities $m_\alpha$ above are given by \eqref{eq:mult_complex_link}; see Section \ref{sec: sheaves} for further discussion.
 \subsection{Characteristic Varieties and Spectral Properties of Distributions}The spectral decomposition of a distribution $u\in\cD'(\RR^n)$ gives complete microlocal information on its singularity structure. This decomposition is called the \emph{wave front set} and describes not only where a distribution is singular but which components in the cotangent space are causing this behavior. We call a set $\Lambda\subset\RR^n\times\RR^n\setminus 0$ \emph{conic} if $(x,\xi)\in\Lambda\implies(x,\lambda\xi)\in\Lambda$ for all real $\lambda\neq 0$. This means that we could think of $\Lambda$ as being in $\RR^n\times\RR\PP^{n-1}$. For the construction of the analytic wave front set we follow the presentation by Treves \cite[Section 3.5]{Treves2022}.
\begin{definition} Let $v$ be a distribution with compact support in $\RR^n$. We say that $v$ is \emph{microanalytic} at a point $(x',\xi')\in\RR^n\times\RR^n\setminus 0$ if there is a conic neighborhood $\Lambda\ni(x',\xi')$ and positive real numbers $\kappa,\,c$ such that
\begin{equation}
    \left|\int_{\RR^n}e^{i\xi\cdot(x-y)-\kappa|\xi||x-y|^2}\nu(y)\Delta_\kappa(x-y,\xi)\,dy\right|\lesssim e^{-c|\xi|}
\end{equation}
where $\Delta_\kappa(x,\xi):=1+i\kappa(x\cdot \xi)/|\xi|$ holds for all $(x,\xi)\in\Lambda$ with $|\xi|$ sufficiently large.
\end{definition}
For a distribution $u\in\cD'(\Omega)$, with not necessarily compact support, we say that it is microanalytic at a point $(x',\xi')\in\Omega\times\RR^n\setminus 0$ if there is a distribution $v$ with compact support in $\Omega$ equal to $u$ in some neighborhood of $x'$ and microanalytic at $(x',\xi')$.
\begin{definition}\label{def: analytic wave front set}
    For a distribution $u\in\cD'(\Omega)$ we define the \emph{analytic wave front set} $WF_A(u)$ as the closed subset of $\Omega\times\RR^n\setminus 0$ consisting of points $(x',\xi')$ at which $u$ is \textbf{not} microanalytic.
\end{definition} 
The image of $WF_A(u)$ under the canonical projection $\pi:\Omega\times\RR^n\setminus 0\to\Omega:(x,\xi)\mapsto x$ is precisely the set where $u$ fails to be an analytic function. 

We will now see that the analytic wave front set of a distribution is closely related to the characteristic variety of the differential operators that annihilate it. Let $P$ be a differential operator of order $m$ with analytic coefficients defined in an open set $\Omega\subset\RR^n$. It has the form
\begin{equation*}
    P=P(x,\partial)=\sum_{|\alpha|\le m}a_\alpha(x)\partial^\alpha.
\end{equation*}
The principal symbol of $P$, denoted $\sigma_m$ is defined by
\begin{equation}
\sigma_m(x,\xi)=\sum_{|\alpha|=m}a_\alpha(x)\xi^\alpha.
\end{equation}
For polynomial coefficients this coincides with \eqref{eq: principal symbol}. The following result is by H\"ormander \cite[Theorem 8.6.1]{HormanderVolI}.
\begin{theorem}
    If $P$ is a differential operator of order $m$ with analytic coefficients, then
    \begin{equation}
        WF_A(u)\subset\Char^*(P)\cup WF_A(Pu),\quad u\in\cD'(\Omega),
    \end{equation}
    where $\Char^*(P)=\{(x,\,\xi)\in T^*\Omega\setminus 0\,|\, \sigma_m(x,\xi)=0\}$.
\end{theorem}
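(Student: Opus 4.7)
The plan is to prove the contrapositive: fix $(x_0,\xi_0) \in \Omega \times (\RR^n \setminus 0)$ with $\sigma_m(x_0,\xi_0)\neq 0$ and assume $Pu$ is microanalytic at $(x_0,\xi_0)$; I will show $u$ is microanalytic at the same point. By continuity and $m$-homogeneity of $\sigma_m$ in $\xi$ there is an open conic neighborhood $\Lambda_0 \ni (x_0,\xi_0)$ and a constant $c>0$ with $|\sigma_m(x,\xi)| \geq c|\xi|^m$ on $\Lambda_0$; the strategy is then to invert $P$ microlocally on $\Lambda_0$ by an analytic pseudodifferential parametrix and to transfer the hypothesis on $Pu$ to $u$ itself.

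First I would construct a formal analytic symbol $e(x,y,\xi) \sim \sum_{j \geq 0} e_{-m-j}(x,y,\xi)$ with $e_{-m} = 1/\sigma_m(x,\xi)$ and the lower-order $e_{-m-j}$ fixed recursively so that, for $Ef(x) = (2\pi)^{-n}\iint e^{i(x-y)\cdot\xi} e(x,y,\xi) f(y)\, dy\, d\xi$, the composition $P\circ E$ has total symbol vanishing to infinite order on $\Lambda_0$. Using the analyticity of the coefficients $a_\alpha(x)$ and Cauchy's estimate, each $e_{-m-j}$ satisfies $|e_{-m-j}(x,y,\xi)| \leq C^{j+1} j!\, |\xi|^{-m-j}$ on a slightly narrower subcone. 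Following the analytic-symbol calculus of Boutet de Monvel--Kr\'ee and Sj\"ostrand, I would then Borel-resum by truncating the formal series at $j \approx |\xi|/(eC)$ to obtain an honest symbol $e_N$ whose operator $E_N$ satisfies $PE_N = I + R_1$ and $E_N P = I + R_2$, where the error kernels $R_1,R_2$ have FBI transforms with exponential decay on $\Lambda_0$, and hence map any distribution to one that is microanalytic at $(x_0,\xi_0)$.

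With such a parametrix in hand one writes $u = E_N(Pu) - R_2 u$. By hypothesis $Pu$ is microanalytic at $(x_0,\xi_0)$, and since $E_N$ is an analytic pseudodifferential operator whose full symbol is analytic with the appropriate exponential control on $\Lambda_0$, it propagates microanalyticity within that cone; hence $E_N(Pu)$ is microanalytic at $(x_0,\xi_0)$. The remainder $R_2 u$ is microanalytic there by the construction of $R_2$. Consequently $u$ is microanalytic at $(x_0,\xi_0)$, which is precisely the statement $(x_0,\xi_0) \notin WF_A(u)$.

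The principal obstacle is the Borel-resummation/truncation step: producing from the formal symbol an actual operator whose remainder term obeys \emph{exponential} (rather than merely rapidly decreasing) bounds in the FBI sense of the microanalyticity definition, which is precisely what distinguishes the analytic wave-front-set calculus from its $C^\infty$ analogue. Establishing the factorial Cauchy estimates on the iterates $e_{-m-j}$ and controlling the geometric losses incurred when differentiating and composing analytic symbols on a shrinking sequence of cones is the core technical difficulty; once those bounds are in place, the application to $u = E_N(Pu) - R_2 u$ and the final conclusion follow by tracking the exponential decay through the integral operators $E_N$ and $R_2$ using the estimate in the definition of microanalyticity.
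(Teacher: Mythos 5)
The paper does not actually prove this statement: it is quoted verbatim from H\"ormander \cite[Theorem 8.6.1]{HormanderVolI}, so there is no internal proof to compare against. Your route --- microlocal ellipticity via an analytic pseudodifferential parametrix --- is a correct and standard way to get the result, but it is genuinely different from H\"ormander's own argument in the cited reference, which avoids the analytic symbol calculus entirely and instead works directly with characterizations of $WF_A$ by sequences of cutoffs and approximate solutions of the transposed equation with factorial bounds, paired against $u$, so that only elementary Cauchy estimates are needed. What your approach buys is conceptual clarity and generality (it is the argument that extends naturally to the microlocal/$\scrD$-module point of view used elsewhere in the paper); what it costs is importing the Boutet de Monvel--Kr\'ee/Sj\"ostrand machinery, and the two facts you defer --- the realization of the formal symbol by truncation at $j\approx|\xi|/(eC)$ with an \emph{exponentially} small remainder, and the pseudolocality/microlocality of analytic pseudodifferential operators for $WF_A$ in the FBI sense of the definition adopted in the paper (Treves) --- are precisely where all the work lies; asserting them with citation is acceptable, re-deriving them is not trivial. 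Two small points you should also make explicit: since the definition of microanalyticity reduces to compactly supported distributions, you must first cut $u$ off near $x_0$ (the cutoff error is supported away from $x_0$ and is harmless at $(x_0,\xi_0)$), and the identities $PE_N=I+R_1$, $E_NP=I+R_2$ can only be arranged microlocally on the cone $\Lambda_0$, i.e.\ modulo operators that are analytically regularizing at $(x_0,\xi_0)$, not as global exact identities.
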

In particular if $P$ annihilates $u$, i.e. $Pu=0$, we have $WF_A(u)\subset\Char^*(P)$. This directly generalizes to the following result. 
\begin{corollary} Let $\scrM$ be the $\scrD_\Omega$-module consisting of all elements annihilating $u$, then $WF_A(u)\subset \Char(\scrM)\setminus 0\cap(\RR^n\times\RR^n\setminus 0)$.
\end{corollary}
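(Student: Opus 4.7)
My plan is to obtain the corollary as a direct quantification of the preceding H\"ormander theorem over every (local) section of $\scrM$. The decisive observation will be that by hypothesis $\scrM = \ann_{\scrD_\Omega}(u)$, so for any open $U\subset\Omega$ and any section $P\in\scrM(U)$ one has $Pu|_U=0$, whence $WF_A(Pu)=\emptyset$ above $U$. Under this vanishing the H\"ormander inclusion $WF_A(u)\subset\Char^*(P)\cup WF_A(Pu)$ collapses, above $U$, to $WF_A(u)\cap(U\times(\RR^n\setminus 0))\subset\Char^*(P)$. This is the only analytic input needed; the rest is a purely algebraic passage.

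Next I would take the intersection of these inclusions as $P$ ranges over all local sections of $\scrM$. Recalling that $\Char^*(P)=\{(x,\xi)\in T^*\Omega\setminus 0 : \sigma(P)(x,\xi)=0\}$ and using the definitions \eqref{eq: principal symbol} and \eqref{eq: characteristic ideal}, a point $(x,\xi)$ lies in $\bigcap_{P\in\scrM}\Char^*(P)$ precisely when it is a common zero of every principal symbol appearing in the characteristic ideal $\mathrm{in}_{(0,1)}(\scrM)$, i.e.\ when $(x,\xi)\in\mathbf{V}(\mathrm{in}_{(0,1)}(\scrM))=\Char(\scrM)$. Intersecting with $\RR^n\times\RR^n\setminus 0$, the ambient space in which $WF_A(u)$ is defined via Definition~\ref{def: analytic wave front set}, then yields the stated inclusion.

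The one non-routine point I anticipate is the compatibility between the two guises of $\Char(\scrM)$ appearing in the paper: on one hand the algebraic definition through the $(0,1)$-filtration with principal symbols in $\gr_{(0,1)}(D)=\CC[x,\xi]$, and on the other the analytic one (needed to talk about wave front sets on the real manifold $\Omega\subset\RR^n$) formulated via good filtrations of the sheaf $\scrD_\Omega$. I will invoke \={O}aku's theorem \cite{Oaku1994}, already cited in the excerpt, to identify the two; once this bridge is in place, the passage from an intersection over algebraic differential operators to an intersection over local sections of $\scrM$ is immediate. The remaining bookkeeping, namely tracking the zero section and restricting from complex algebraic zeros to real covectors, is entirely cosmetic and poses no difficulty.
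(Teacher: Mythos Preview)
Your proposal is correct and matches the paper's intent: the paper offers no detailed proof, merely stating that the corollary ``directly generalizes'' H\"ormander's theorem, and your argument---applying the theorem to each $P$ in the annihilator (so $WF_A(Pu)=\emptyset$), then intersecting over all such $P$ to obtain $\Char(\scrM)\setminus 0$---is exactly the natural unpacking of that remark. The care you take with \={O}aku's theorem to reconcile the algebraic and analytic notions of $\Char$ is a welcome clarification beyond what the paper spells out.
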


\subsection{Annihilator Ideals}
Extending the coefficient field of the Weyl algebra from $\CC$ to polynomials $\CC[s]$ in a new commuting variable $s$, we get the algebra $D[s]:=D\otimes_\CC\CC[s]$. Denote by $\ann_{D[s]}(f^s)$ the left-ideal 
\begin{equation}
    \ann_{D[s]}(f^s):=\{Q\in D[s]\,|\,Q\bullet f^s=0\}
\end{equation}
for any $s\in \CC$ and any $x$ with $f(x)\neq 0$, this is the \emph{parametric annihilator}. It is a result by Bernstein \cite{Bernstein1972} that for a non-zero polynomial $f(x)$ there exists a non-zero polynomial $b(s)$ and differential operator $P(s)\in D[s]$ such that
\begin{equation}\label{eq: Bernstein-Sato}
    b(s)f(x)^{s}=P(s)\bullet f(x)^{s+1}.
\end{equation}
When $b(s)$ is chosen to be the monic polynomial of lowest possible degree, it is called the \emph{Bernstein-Sato} polynomial. 

The actual problem we are interested in is finding the annihilators of $f^\lambda$ with $\lambda\in\CC$ being treated as a fixed number. Replacing $s\to\lambda$ in $\ann_{D[s]}(f^s)$ give elements that annihilates $f^\lambda$, however, these elements might not fully generate $\ann_D(f^\lambda)$. The following theorem fully classifies when direct replacement provides the correct answer.
\begin{theorem}[{\cite[Theorem 5.3.13]{SST}}]\label{thm: exponent substitution}
    Let $\alpha_0$ be the smallest integer root of the Bernstein-Sato polynomial of $f$. If $\lambda\neq\alpha_0+k,\ k=1,\,2,\,\ldots$, then $\ann_D(f^\lambda)$ is given by direct replacement $s\to\lambda$ in $\ann_{D[s]}(f^s)$.
\end{theorem}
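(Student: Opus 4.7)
The plan is to prove the nontrivial inclusion $\ann_D(f^\lambda)\subseteq\ann_{D[s]}(f^s)\big|_{s=\lambda}$; the reverse inclusion is immediate because specializing $s\to\lambda$ in any identity $P(s)\cdot f^s=0$ yields $P(\lambda)\cdot f^\lambda=0$. The essential tool will be the Bernstein--Sato functional equation \eqref{eq: Bernstein-Sato}, iterated with shifted parameter, to transfer information from $f^\lambda$ back to the universal symbol $f^s$.

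First I would fix $Q\in\ann_D(f^\lambda)$ of order $d$ and compute $Q\cdot f^s$ inside the free rank-one $\CC[x,f^{-1},s]$-module generated by $f^s$. An induction on $d$ using $\partial_i\cdot f^s=s(\partial_i f)f^{s-1}$ shows that $Q\cdot f^s=R(x,s)\,f^{s-d}$ for a unique $R(x,s)\in\CC[x,s]$. Evaluating at $s=\lambda$ and using $Q\cdot f^\lambda=0$ forces $R(x,\lambda)\equiv 0$ identically in $x$, so $R(x,s)=(s-\lambda)\,R'(x,s)$ for some $R'\in\CC[x,s]$.

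Next I would iterate the Bernstein--Sato equation to express $f^{s-d}$ in terms of $f^s$. Applying $b(s-j)\,f^{s-j}=P(s-j)\cdot f^{s-j+1}$ for $j=1,\dots,d$ in order, and using that the scalar factors $b(s-j)\in\CC[s]$ commute with every operator in $D[s]$, produces the identity
$$
B(s)\cdot f^{s-d}\;=\;\widetilde{P}(s)\cdot f^s,\qquad B(s):=\prod_{k=1}^{d}b(s-k),\quad \widetilde{P}(s):=P(s-d)P(s-d+1)\cdots P(s-1),
$$
in $D[s]\cdot f^s$. Combining with $Q\cdot f^s=(s-\lambda)\,R'(x,s)\,f^{s-d}$ gives
$$
T(s):=B(s)\,Q-(s-\lambda)\,R'(x,s)\,\widetilde{P}(s)\in\ann_{D[s]}(f^s),\qquad T(s)\big|_{s=\lambda}=B(\lambda)\,Q.
$$

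The final step is to verify $B(\lambda)\in\CC^{\times}$: then $B(\lambda)^{-1}T(s)$ lies in $\ann_{D[s]}(f^s)$ and specializes at $s=\lambda$ to $Q$, giving $Q\in\ann_{D[s]}(f^s)\big|_{s=\lambda}$ as required. Under the hypothesis $\lambda\neq\alpha_0+k$ for $k=1,2,\ldots$, any integer root $\alpha_i$ of $b$ satisfies $\alpha_0\le\alpha_i\le -1$ (using Kashiwara's theorem that roots of $b$ are negative rationals), so $\lambda-k=\alpha_i$ for some $k\ge 1$ would force $\lambda\in\alpha_0+\ZZ_{>0}$, contradicting the hypothesis; hence no factor $b(\lambda-k)$ vanishes on account of integer roots. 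The principal technical obstacle I anticipate is the bookkeeping of the ordering in the iterated Bernstein identity: the operators $P(s-j)$ need not commute with one another, so the recurrence must be applied from the innermost $f^s$ outward in a fixed order, and only the scalar factors $b(s-k)\in\CC[s]$ may be freely rearranged. A secondary subtlety is that the order $d$ depends on $Q$ and can be arbitrarily large, so the argument really requires $b(\lambda-k)\neq 0$ for every $k\ge 1$ simultaneously; for non-integer (negative rational) roots of $b$ this is handled by the generic choice of the $\lambda_i$ implicit in the Euler--Mellin setting \eqref{eq:EM_Int}, which ensures $\lambda$ avoids the countable set of non-integer shifts as well.
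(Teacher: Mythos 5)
Your core construction is sound and is the standard telescoping argument: the identity $Q\cdot f^s=R(x,s)f^{s-d}$, the factorization $R=(s-\lambda)R'$, the iterated functional equation $B(s)f^{s-d}=\widetilde P(s)f^s$ with $B(s)=\prod_{k=1}^{d}b(s-k)$ (your ordering of the $P(s-j)$ is handled correctly), and the element $T(s)=B(s)Q-(s-\lambda)R'(x,s)\widetilde P(s)\in\ann_{D[s]}(f^s)$ with $T(\lambda)=B(\lambda)Q$ are all fine, as is the easy reverse inclusion. But note that the paper does not reprove this statement at all: it is quoted from [SST, Theorem 5.3.13], so what you are really being asked to reproduce is that theorem in its stated generality.

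The genuine gap is the last step, $B(\lambda)\in\CC^{\times}$. Your argument needs $b(\lambda-k)\neq 0$ for \emph{every} root of $b$ and every $k\in\ZZ_{>0}$, whereas the theorem's hypothesis only excludes $\lambda\in\alpha_0+\ZZ_{>0}$ for the smallest \emph{integer} root $\alpha_0$; these conditions agree only when $\lambda$ is an integer (or when $b$ has no non-integer roots). Concretely, for $f=x^2+y^3$ one has $b(s)=(s+1)(s+\tfrac56)(s+\tfrac76)$, so $\alpha_0=-1$ and $\lambda=-\tfrac16$ (or $\lambda=\tfrac56$) satisfies the theorem's hypothesis, yet $b(\lambda-1)=b(-\tfrac76)=0$, so $B(\lambda)=0$ and your $T(\lambda)=0$ gives no information; worse, in this resonant situation the functional equation itself produces annihilators of $f^\lambda$ of the form $P(\lambda-1)$ (since $P(\lambda-1)f^{\lambda}=b(\lambda-1)f^{\lambda-1}=0$), and the minimality of $b$ blocks the same telescoping from exhibiting them as specializations, so the missing case is not a technicality that more careful bookkeeping along your lines will recover. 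Your closing appeal to ``the generic choice of the $\lambda_i$ in the Euler--Mellin setting'' does not repair this: it silently replaces the cited theorem (a statement for every $\lambda\notin\alpha_0+\ZZ_{>0}$, with no genericity hypothesis) by the strictly weaker statement ``substitution works when $\lambda-k$ avoids all roots of $b$ for all $k\ge 1$,'' which is what your proof actually establishes. To prove the theorem as quoted you would need a separate argument handling non-integer $\lambda$ that are congruent modulo $\ZZ_{>0}$ to a non-integer root of $b$, and that is precisely the content your proposal leaves unaddressed.
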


The annihilator ideal of $f^\lambda$ for the $\lambda$ not covered by the theorem can still be calculated by adding to the calculation of $\ann_{D[s]}(f^s)$ the computation of a specific syzygy module. It turns out that the characteristic variety of annihilator ideals is always very well-behaved, see \cite[Theorem 5.3.1]{SST} (c.f. \cite{KashiwaraKawai1981}).

\begin{theorem}\label{thm: annihilators are regular holonomi}
    The ideal $\ann_D(f^\lambda)$ is regular holonomic for any $\lambda\in\CC$.
\end{theorem}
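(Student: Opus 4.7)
The plan is to realize $\scrM_\lambda := D/\ann_D(f^\lambda)$ as a cyclic $D$-submodule of an explicit regular holonomic module, and then conclude by the stability of the regular holonomic subcategory under taking submodules. Concretely, I would set $U = \CC^n \setminus \bV(f)$ with open embedding $j: U \hookrightarrow \CC^n$, and consider the rank-one integrable connection $\scrL_\lambda$ on $U$ whose underlying $\scrO_U$-module is $\scrO_U$ equipped with the twisted flat connection $\nabla = d + \lambda\, d\log f$. On $U$ the formal symbol $f^\lambda$ is a horizontal generator of $\scrL_\lambda$, and the $\scrD_U$-submodule it generates is isomorphic to $\scrL_\lambda$ itself.

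I would then push $\scrL_\lambda$ forward along $j$. Since $j$ is an open embedding of smooth varieties with complement a divisor, and $\scrL_\lambda$ is regular (its connection form $\lambda\, d\log f$ has at most logarithmic poles along $\bV(f)$ and is trivial at infinity after projective completion), the direct image $j_+ \scrL_\lambda$ is regular holonomic by Deligne's theorem on regular singularities combined with Kashiwara's theorem that direct images along quasi-projective morphisms preserve regular holonomicity. This yields the target regular holonomic $D$-module containing a distinguished global section that I also denote by $f^\lambda$.

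Next I would identify $\scrM_\lambda$ with the cyclic $D$-submodule $D\cdot f^\lambda \subset j_+ \scrL_\lambda$: by definition the natural surjection $D \twoheadrightarrow D\cdot f^\lambda$ has kernel exactly $\ann_D(f^\lambda)$. Since the subcategory of regular holonomic $D$-modules is closed under taking subobjects, $\scrM_\lambda$ inherits regular holonomicity, which is the desired conclusion.

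The main subtlety is handling arbitrary $\lambda \in \CC$ uniformly. For $\lambda$ outside the countable set of resonant integer translates of $\alpha_0$ identified in Theorem \ref{thm: exponent substitution}, direct specialization $s \mapsto \lambda$ of $\ann_{D[s]}(f^s)$ produces $\ann_D(f^\lambda)$, and the construction above fits into a flat family of connections $\scrL_s$ to which the argument applies. In the remaining exceptional cases I would invoke the functional equation \eqref{eq: Bernstein-Sato}: iterating $b(s)f^s = P(s)f^{s+1}$ with $s = \lambda, \lambda+1, \dots$ produces $D$-linear maps relating $\scrM_\lambda$ and $\scrM_{\lambda+N}$ for $N$ large enough that $\lambda+N$ avoids every resonance. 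The main technical obstacle is verifying that regular holonomicity descends through this comparison rather than merely holonomicity; this is precisely the content of the Kashiwara--Kawai regularity results cited in the statement, which control how the family $D[s]f^s$ specializes at integer shifts.
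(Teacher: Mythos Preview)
The paper does not supply its own proof of this statement: it is stated with a citation to \cite[Theorem 5.3.1]{SST} (with a pointer to \cite{KashiwaraKawai1981}) and nothing more. So there is no in-paper argument to compare against, and your outline already goes further than what appears here.

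Your core argument---realize $D/\ann_D(f^\lambda)$ as the cyclic submodule $D\cdot f^\lambda$ of the regular holonomic module $j_+\scrL_\lambda \cong \CC[x,f^{-1}]\cdot f^\lambda$, then use closure of the regular holonomic category under subobjects---is the standard route and is correct. One small inaccuracy: the connection form $\lambda\, d\log f$ is not ``trivial at infinity'' after projectivization; writing $f = f_h/x_0^{\deg f}$ shows it picks up an additional logarithmic pole along $\{x_0=0\}$. This is still compatible with regularity, so the conclusion survives, but the justification should be phrased as ``logarithmic along the boundary divisor of a good compactification'' rather than ``trivial at infinity''.

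Your final paragraph, however, is unnecessary and somewhat muddled. The embedding $D\cdot f^\lambda \hookrightarrow j_+\scrL_\lambda$ and the regular holonomicity of the target hold for \emph{every} $\lambda\in\CC$ uniformly; there is no separate argument needed at exceptional values. The subtlety flagged in Theorem~\ref{thm: exponent substitution} concerns only whether $\ann_D(f^\lambda)$ coincides with the naive specialization $s\mapsto\lambda$ of the parametric annihilator---a question about computing generators of the ideal, not about regular holonomicity of the quotient module. The Bernstein--Sato shift maneuver you sketch is therefore irrelevant here, and you can simply delete that paragraph.
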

\begin{remark}
    Throughout this work we will assume $\lambda_1,\ldots,\lambda_p,\nu_1,\ldots,\nu_n$ in the integral \eqref{eq:EM_Int} to be generic.
\end{remark}

\begin{example}\label{ex: Whitney umbrella}
    Let $f=x^2z-y^2$ be the polynomial defining the Whitney umbrella $Y=\bV(f)$. The parametric annihilator of this polynomial is the following left-ideal generated by six terms
        \begin{equation*}
            \mathrm{Ann}_{D[s]}(f^s)=\langle x\partial_x-2z\partial_z,\,
                y\partial_y+2z\partial_z-2s,\,
                xz\partial_y+y\partial_x,\,
                x^2\partial_y+2y\partial_z,\,
                2z^2\partial_y\partial_z+y\partial_x^2+z\partial_y \rangle.
        \end{equation*}
        and the Bernstein-Sato polynomial is given by $b(s)=(s+1)^2(s+3/2)$. By a simple Gr\"obner basis calculation we find the characteristic variety:
        \begin{align*}
            \mathrm{Char}(\mathrm{Ann}_{D}(f^\lambda))=&\mathbf{V}(y\eta + 2z\zeta, x\xi - 2z\zeta, xz\eta + y\xi, x^2\eta + 2y\zeta)\\
            =&\bV(y\eta+2z\zeta,x\xi-2z\zeta,z\eta^2-\xi^2,x\eta^2-2\xi\zeta,xz\eta+y\xi,x^2\eta+2y\zeta,x^2z-y^2)\cup\\
            &\bV(\zeta,y,x)\cup\bV(z,y,x)\cup\bV(\zeta,\eta,\xi)\\
            =&\Con(x^2z-y^2)\cup\Con(x,y)\cup\Con(x,y,z)\cup\Con(0).
        \end{align*}
        The three components except the zero-section $\Con(0)$ are shown in Figure \ref{fig: characteristic variety whitney umbrella}. By calculating the multiplicity of the characteristic ideal along the irreducible components in the characteristic variety we get the characteristic cycle:
        \begin{equation*}
            CC(\mathrm{Ann}_{D}(f^\lambda))=\Con(x^2z-y^2)+\Con(x,y)+\Con(x,y,z)+\Con(0).
        \end{equation*}
        \begin{figure}%
    \begin{subfigure}[t]{0.40\textwidth}
        \centering
         \includegraphics[width=0.8\linewidth]{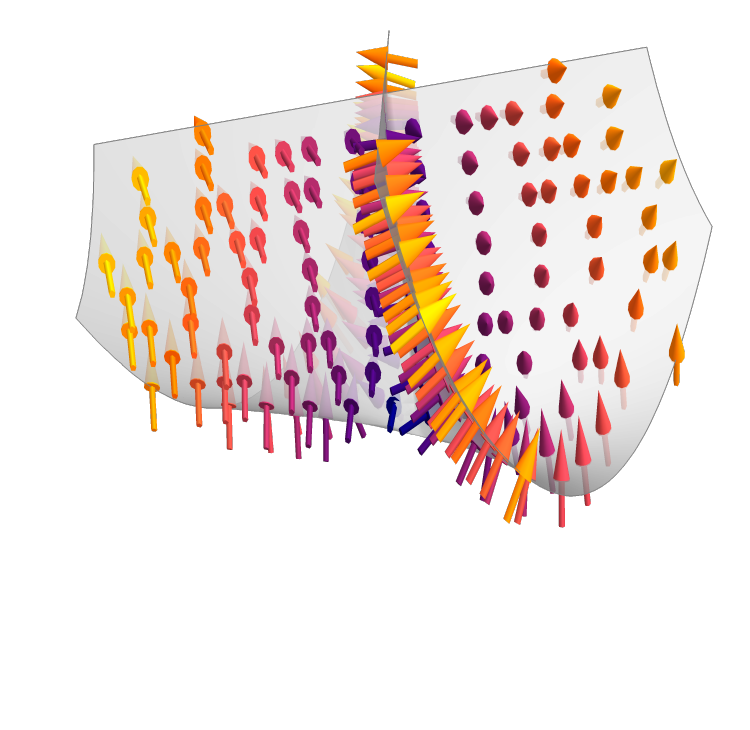}%
         \caption{ $\bV(I_{\Con(Y)})=\Con(x^2z-y^2)$}
    \end{subfigure}%
    \begin{subfigure}[t]{0.3\textwidth}
        \centering
        \vspace{-4.5cm}
         \includegraphics[width=0.8\linewidth]{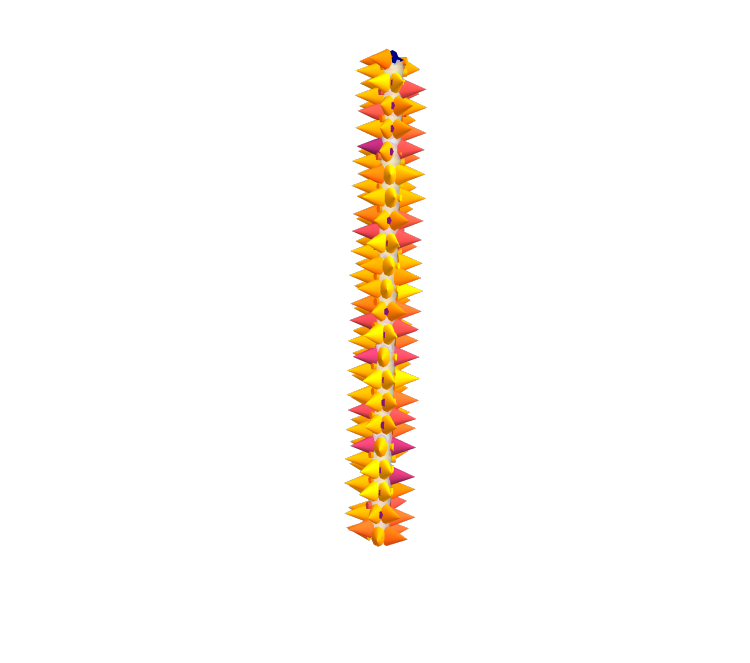}%
         \vspace{1cm}
         \caption{$\bV(\zeta,y,x)=\Con(y,x)$}
    \end{subfigure}%
    \begin{subfigure}[t]{0.3\textwidth}
        \centering
        \vspace{-4.5cm}
        \includegraphics[width=0.8\linewidth]{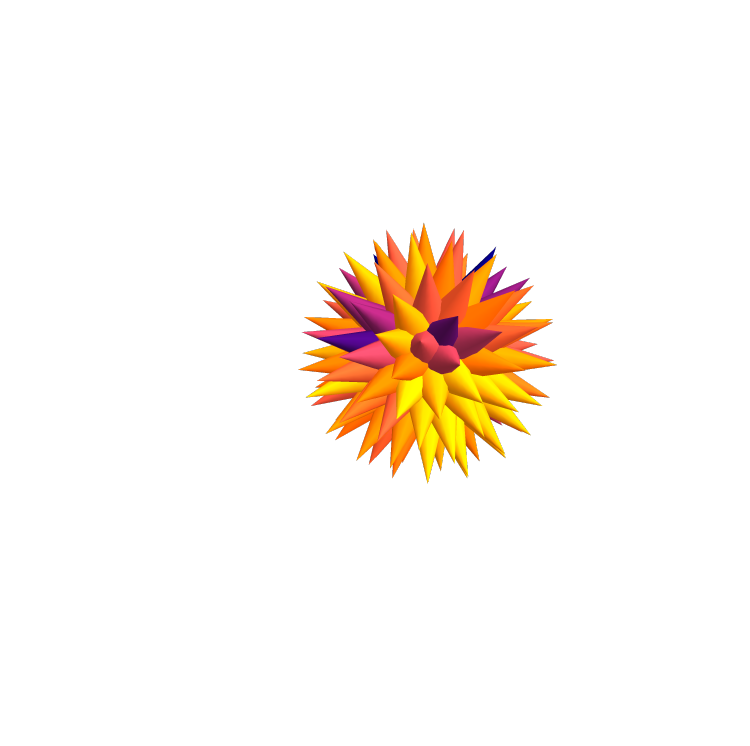}%
        \vspace{0.55cm}
        \caption{$\bV(x,y,z)=\Con(x,y,z)$}
    \end{subfigure}%
    \caption{The characteristic variety of the annihilator of $f=x^2z-y^2$; here \small $I_{\Con(Y)}:=\langle y\eta+2z\zeta,x\xi-2z\zeta,z\eta^2-\xi^2,x\eta^2-2\xi\zeta,xz\eta+y\xi,x^2\eta+2y\zeta,x^2z-y^2\rangle$. }
    \label{fig: characteristic variety whitney umbrella}
    \end{figure}
\end{example}
\subsection{The Characteristic Variety of An Annihilator Module of One Element} 
It is only very special $D$-modules where the structure of the characteristic variety is known.  If a $D$-module is invariant under a group action, the geometry of the different group orbits dictate the characteristic variety \cite[Theorem 5.1.12]{kashiwara1983systems}. An example of this appears in the theory of generalized hypergeometry by Gelfand, Kapranov and Zelevinsky (GKZ) where the group is the algebraic torus and the characteristic variety is the union of conormal bundles of the $A$-discriminants \cite{GKZ} associated to each face of the polytope of the $A$-matrix defining the system \cite[Theorem 4]{Gelfand1989}. The example from the Introduction fits within this framework which we will revisit in Example \ref{ex: bubble}, see also Example \ref{ex: GKZ integral}. We will see in this section that annihilator modules are another example where both the characteristic variety and cycle can be calculated explicitly.

\subsubsection{Kashiwara's $W_f^0$ version of Characteristic Variety}
We begin by recalling an explicit construction of the characteristic variety of $\ann_D(f^\lambda)$.

\begin{definition}\label{def:W0}
    Let $f$ be a holomorphic function on a smooth complex manifold $X$\footnote{For our purposes usually $X=\CC^n$.}.
    \begin{itemize}
        \item[(i)] ${W}_f^{s}$ is the Zariski closure of the set
        \begin{equation*}
            \{(s;\,x,\,\xi)\in\CC\times T^*X\,|\, x\in X,\ f(x)\neq 0,\ \xi= s d\log f(x)\}\quad\mathrm{in}\quad\CC\times T^*X.
        \end{equation*}
        Or in other words 
              \begin{equation*}{W}_f^{s}:=
\overline{            \left\lbrace(s;\,x,\,\xi)\in\CC\times T^*X\,|\, x\in X,\ f(x)\neq 0,\ \xi=\frac{s f'(x)}{f(x)}\right\rbrace}\quad\mathrm{in}\quad\CC\times T^*X.
        \end{equation*}
        \item[(ii)] $W_f^0:=\{(x,\,\xi)\in T^*X\,|\, (0;\, x,\,\xi)\in{W}_f^s\}$
    \end{itemize}
\end{definition}

It is a classical result that $W_f^0$, called the \emph{relative} conormal variety, is a Lagrangian variety. Examining the definition of  $W_f^0$ we see that it can be described as the vanishing set of the following ideal:
\begin{equation}
        \mathbf{I}(W_f^0)=\left(\avg{\xi_1f-\sigma(\partial_1 f),\ldots,\xi_nf-\sigma(\partial_n f)}:\avg{f}^\infty+\avg{\sigma}\right)\cap\QQ[x_1,\ldots,x_n,\xi_1,\ldots,\xi_n].\label{eq:W0Eqs}
    \end{equation}
By {\cite[Theorem 9.1]{Kashiwara2003}} we have that the variety $W_f^0$ is the characteristic variety of the ideal $\ann_D(f^\lambda)$, or in symbols:
\begin{equation}
    W_f^0=\Char(\ann_D(f^\lambda)).
\end{equation}
\begin{example} Like in Example \ref{ex: Whitney umbrella}, let 
    $f=x^2z-y^2$ be the defining equation for the Whitney umbrella. The ideal defining $W_f^0$ before eliminating $s$ is
    {\small
    \begin{align*}
        \avg{\xi(x^2z-y^2)-s(2xz),\,\eta(x^2z-y^2)-s(-2y),\,\zeta(x^2z-y^2)-s(x^2)}:\avg{x^2z-y^2}^\infty+\avg{s}=\nonumber\\
        \avg{y\eta + 2z\zeta - 2s,\, x\xi - 2z\zeta,\, xz\eta + y\xi,\, x^2\eta + 2y\zeta,\, s}
    \end{align*}}
    and eliminating $s$ gives 
    \begin{equation*}
        \mathbf{I}(W_f^0)=\avg{y\eta + 2z\zeta,\, x\xi - 2z\zeta,\, xz\eta + y\xi,\, x^2\eta + 2y\zeta}
    \end{equation*}
    Which is exactly the characteristic variety from Example \ref{ex: Whitney umbrella}.
\end{example}
\subsubsection{Computing $W_f^0$ via a Blow-up}\label{subsubsec: computing W0 via blowup} 

Fix a polynomial $f\in \CC[x_1,\dots, x_n]$ and let ${\rm Bl}_{{\rm Jac}(f)}\CC^n$ be the blow-up of $\CC^n$ along the ideal defined by $${\rm Jac}(f)=\langle \frac{\partial f}{\partial x_1}, \cdots,  \frac{\partial f}{\partial x_n} \rangle.$$ In other words, ${\rm Bl}_{{\rm Jac}(f)}X$ is the variety obtained as the closure of the graph of the rational map defined by the the generators of ${\rm Jac}(f)$:
\begin{equation}
    {\rm Bl}_{{\rm Jac}(f)}X:=\overline{\{(x, \xi)\;|\; x\in (X-\bV(\mathrm{Jac}(f))) , \;\; \xi = \mathrm{Jac}(f(x))\}} \subset X\times \PP^{n-1}.
\end{equation}
Let $E$ be the the exceptional divisor of this blow-up, that is $E={\rm Bl}_{{\rm Jac}(f)}X \cap \bV({\rm Jac}(f)) $, and take $\Lambda_\alpha$, $\alpha=2, \dots, r$, to be the irreducible components of $E$ with $C_\alpha=\pi(\Lambda_\alpha)$.

By \cite[Th\'{e}or\`{e}me 3.3]{le1983varietes}  we have the equality
\begin{equation}
    W_f^0=\Con(0)\cup\Con(\bV(f))\bigcup_{\alpha=2}^r \Lambda_\alpha.
\end{equation}Hence in particular we may compute $W_f^0$ directly from the exceptional divisor $E$ in a straightforward manner. 

\subsection{The Characteristic Variety of An Annihilator Module of a Product of Elements}\label{subsec:CharAnnProd}
The $D$-module corresponding to the integral \eqref{eq:EM_Int} is in fact associated to the annihilator ideal for a product of polynomials, rather than for the single polynomial case considered in the previous section.  It has been shown by Sabbah \cite{Sabbah1987} and Gyoja \cite{Gyoja1993} that the Bernstein-Sato construction for one polynomial can be generalized to a product of several polynomials. For $\boldsymbol{s}=(s_1,\ldots,s_p)$ we define $D[\boldsymbol{s}]:=D\otimes_\CC\CC[\ss]$ and for a product of polynomials we define the parametric annihilator ideal
\begin{equation*}
    \mathrm{Ann}_{D[\ss]}(f_1^{s_1}\cdots f_p^{s_p}):=\{P(\ss)\in D[\ss]\,|\,P(\ss)\bullet f_1^{s_1}\cdots f_p^{s_p}=0\}.
\end{equation*}
We denote by $\scrN$ the associated $D$-module $D[\ss]/\mathrm{Ann}_{D[\ss]}(f_1^{s_1}\cdots f_p^{s_p})$ and its specialization to a fixed $\boldsymbol{\lambda}=(\lambda_1,\ldots,\lambda_p)\in\CC^p$ as $\scrN_{\boldsymbol{\lambda}}:=\scrN/((s_1-\lambda_1)\scrN+\cdots+(s_p-\lambda_p)\scrN)$. As in the case with one variable, for generic $\boldsymbol{\lambda}$ we have
\begin{equation*}
    \scrN_{\boldsymbol{\lambda}}\simeq D/\ann_D(f_1^{\lambda_1}\cdots f_p^{\lambda_p}).
\end{equation*}
Using the multivariate Bernstein-Sato polynomial the genericity assumption for this isomorphism can be relaxed to an assumption of similar flavor as Theorem \ref{thm: exponent substitution}. This means that there exists an integer $N$ large enough so that fixing $\lambda_1=\cdots=\lambda_p=-N$ is generic enough and
\begin{equation*}
    D/\ann_D(f_1^{\lambda_1}\cdots f_p^{\lambda_p})\simeq D/\ann_D(F^{-N})
\end{equation*}
where $F:=f_1\cdots f_p$. Let $Y=\bV(F)$ and $\scrO_X[*Y]$ denote the sheaf of holomorphic functions with singularities along $Y$, then
\begin{equation*}
     D/\ann_D(F^{-N})\simeq \scrO_X[*Y].
\end{equation*}
 By \cite[Theorem 3.1]{le1983varietes} we get the characteristic varieties for generic $\boldsymbol{\lambda}$
\begin{equation}\label{eq: char N generic bold lambda}
\Char(\scrN_{\boldsymbol{\lambda}})=\Char(\scrO_X[*Y])=W_F^0
\end{equation}
where $W_F^0$ is defined in Definition \ref{def:W0} and can be calculated using the blow-up procedure described in the previous section.

\section{Constructible Functions and  Sheaves}\label{sec: sheaves}
We will now switch our point of view to the solutions of the system of PDEs represented by the $D$-module. These solutions will be described by sheaves for which we recommend \cite{Dimca2004sheaves, Laurentiu2022} and the classical work \cite{kashiwara-schapira1}. In the general setting we define the solution complex
\begin{equation}
    Sol(\scrM):=R\mathscr{H}om_{\scrD_X}(\scrM,\scrO_X)[n]
\end{equation}
where the elements of degree zero correspond to the classical solutions of the PDE. It is a fundamental philosophy of the theory that we can go rather freely between a $D$-module and its solution sheaf. This is formalized in the Riemann-Hilbert correspondence \cite{Kashiwara1984,Mebkhout1984,Mebkhout1984b} showing that the category of regular holonomic $D$-modules is equivalent to the category of perverse sheaves. In particular we have for holonomic $D$-modules:
\begin{theorem}[{\cite[Theorem 5.3.2]{Dimca2004sheaves}}]\label{thm: sol perserves cc}Let $\scrM$ be a holonomic $D$-module. 
With the notations above we have that    $CC(Sol(\scrM))=CC(\scrM)$ and $\Char(\scrM)=\Char(Sol(\scrM))$.
\end{theorem}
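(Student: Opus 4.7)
The plan is to prove the variety equality and the cycle equality in turn, since both are microlocal assertions that reduce, via a common Whitney stratification, to compatibilities between the $\scrD$-module filtration side and the constructible sheaf side. Throughout, the key microlocal input is that the singular support of a solution complex and the characteristic variety of the $\scrD$-module are computed from the same local algebraic/topological data.

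For the first equality $\Char(\scrM)=\Char(Sol(\scrM))$, I would invoke Kashiwara's singular support theorem (Kashiwara--Schapira): for any coherent $\scrD_X$-module $\scrM$ the microsupport of $Sol(\scrM)$ coincides with the support of $\gr^F\scrM$ for any good filtration, which by definition is $\Char(\scrM)$. The inclusion $\Char(Sol(\scrM))\subseteq\Char(\scrM)$ is proved by approximating $Sol(\scrM)$ by a Koszul-type complex built from the filtration, so that the microsupport is controlled by $\supp(\gr^F\scrM)$; the reverse inclusion uses holonomicity, since any missing Lagrangian component would force a local vanishing of $\scrM$ along its generic conormal direction and contradict nontriviality.

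For the cycle equality, since $\scrM$ is holonomic there is a Whitney stratification $\{S_\alpha\}$ with $\Char(\scrM)=\bigcup_\alpha \Con(\overline{S_\alpha})$, and by the previous step the same holds for $\Char(Sol(\scrM))$. Both cycles therefore expand uniquely as
\[
CC(\scrM)=\sum_\alpha m_\alpha(\scrM)\,\Con(\overline{S_\alpha}),\qquad CC(Sol(\scrM))=\sum_\alpha m_\alpha(Sol(\scrM))\,\Con(\overline{S_\alpha}),
\]
so it suffices to verify the equality of multiplicities stratum by stratum. On the $\scrD$-module side, $m_\alpha(\scrM)$ is the generic length of $\gr^F\scrM$ along $\Con(\overline{S_\alpha})$ (cf.\ \cite[Proposition 1.8.2]{Bjork1993}); on the sheaf side, formula \eqref{eq: index thm for char fnc} together with \eqref{eq:mult_complex_link} realizes $m_\alpha(Sol(\scrM))$ as a signed Euler characteristic of the normal Morse data of $Sol(\scrM)$ along $S_\alpha$.

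The hard part is exactly the identification of these two multiplicities --- an algebraic length on one side, a topological Euler characteristic on the other --- and this is the content of Kashiwara's local index theorem (\cite{Kashiwara1973,BDK1981,Dubson1984}), which I would invoke rather than re-derive. Granting it, the cycle equality follows termwise. For regular holonomic $\scrM$, as in all applications in this paper, one may alternatively bypass the direct multiplicity comparison by appealing to the Riemann--Hilbert equivalence of Kashiwara and Mebkhout and checking that $Sol$ intertwines the two microlocalization procedures defining $CC$ on each side; the compatibility of $Sol$ with vanishing-cycle functors then transports $CC(\scrM)$ to $CC(Sol(\scrM))$ with matching multiplicities.
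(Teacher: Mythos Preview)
The paper does not prove this theorem at all: it is simply quoted from \cite[Theorem 5.3.2]{Dimca2004sheaves} and used as a black box. There is nothing in the paper to compare your argument against beyond the citation itself.

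Your outline is a reasonable high-level account of the standard route to this result, and the ingredients you name (Kashiwara's theorem that $\micsup(Sol(\scrM))=\Char(\scrM)$ from \cite{kashiwara-schapira1}, the index theorem of \cite{Kashiwara1973,BDK1981,Dubson1984}, and in the regular case the Riemann--Hilbert correspondence of \cite{Kashiwara1984,Mebkhout1984}) are exactly the ones the cited reference assembles. Two small caveats: first, the formulas \eqref{eq: index thm for char fnc} and \eqref{eq:mult_complex_link} you cite from the paper are stated there for the specific sheaf $\CC_Y[d]$, not for a general $Sol(\scrM)$, so to use them you would need the more general form of the index theorem directly; second, your sketch of the reverse inclusion $\Char(\scrM)\subseteq\Char(Sol(\scrM))$ (``a missing Lagrangian component would force a local vanishing of $\scrM$'') is vague and is not how the argument actually runs---one instead uses the involutivity of $\Char(\scrM)$ together with the microlocal structure of holonomic modules, or simply appeals to the full Kashiwara--Schapira equality. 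But since you are in any case invoking these results as black boxes, this is a matter of phrasing rather than a genuine gap.
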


We will now define what we mean by the characteristic cycle of a complex of sheaves. Let $\scrF^\bullet$ be a constructible sheaf on the algebraic variety $X$ with respect to the Whitney stratification $\{S_\alpha\}$. For a regular map $g:X\to Y$ between algebraic varieties, let $\Phi_g(\scrF^\bullet)$ denote the \emph{vanishing cycles} as defined by Delign \cite{SGA7}. The \emph{characteristic cycle} of $\scrF^\bullet$ is the Lagrangian cycle in $T^*X$ given by
\begin{equation}
    CC(\scrF^\bullet):=\sum_\alpha m_\alpha\Con(S_\alpha).
\end{equation}
Let $x\in S_\alpha$ be a point and $g$ a generic function in a neighborhood of $x$ such that $g$ is zero on $X_\alpha$ and $(x,dg(x))\in T_{S_\alpha}^*X$ then the coefficient $m_\alpha$ is given by
\begin{equation}
    m_\alpha(\scrF^\bullet)=-\chi((\Phi_g(\scrF^\bullet))_x)
\end{equation}
i.e.~minus the Euler characteristic of the stalk of $\Phi_g(\scrF^\bullet)$ at $x$. This Euler characteristic can be calculated more easily via stratified Morse theory. For a point $p=(x,\xi)\in T_{S_\alpha}^*X$ let $NMD(\scrF^\bullet,p)$ be the \emph{normal Morse data} \cite[Definition 3.6.1]{SMTbook} and denote by $CL_X(x)$ the \emph{complex link} \cite[Part I, Sec. 2.2]{SMTbook} of $x\in S_\alpha$, then we have the equality
\begin{equation}
    -\chi((\Phi_g(\scrF^\bullet))_x)=\chi(NMD(\scrF^\bullet,p))=1-\chi(CL_X(x)),
\end{equation}hence in particular $m_\alpha(\scrF_x^\bullet)=1-\chi(CL_X(x))$.

For a holonomic $\scrD_X$-module $\scrM$ we define the \emph{index} \cite{Kashiwara1973} as
\begin{equation}\label{eq: def index}
    \chi_\scrM(x):=\sum_i(-1)^i\dim\mathscr{E}xt_{\scrD_X}^i(\scrM,\,\scrO_X)_x.
\end{equation}
Let $X_\circ\subset X$ be a dense set such that $\overline{X}_\circ=X$ and $X-X_0$ is the singular locus of $\scrM$. Then $\dim\mathscr{E}xt_{\scrD_X}^i(\scrM,\,\scrO_X)_x=0$ for $i>0$ and $x\in X_\circ$ \cite[Corollary 4.4]{kashiwara1975maximally}. The index is therefore localized at $i=0$ and since $\mathscr{E}xt^0_{\scrD_X}(\scrM,\scrO_X)=\mathscr{H}om_{\scrD_X}(\scrM,\scrO_X)$ we have
\begin{equation}
    \chi_\scrM(x)=\dim\mathscr{H}om_{\scrD_X}(\scrM,\scrO_X),
\end{equation}
i.e.~the index at a generic point is just the dimension of the solution space of the $\scrD_X$-module, sometimes referred to as the \emph{rank} of the $\scrD_X$-module. The index can also be defined for a complex of sheaves $\scrF^\bullet$ as the the Euler characteristic of the stalks: $\chi(\scrF_x^\bullet)=\sum_i(-1)^i\dim\mathscr{H}^i(\scrF_x^\bullet)$ and these numbers coincides if $\scrF^\bullet=R\mathscr{H} om_{\scrD_X}(\scrM,\scrO_X)$.

For a constructible complex of sheaves Kashiwara-Dubson's index theorem states \cite{Kashiwara1973,BDK1981,Dubson1984}
\begin{equation}
    \chi(\scrF_x^\bullet)=\sum_\alpha(-1)^{d_\alpha}m_\alpha \Eu_{\overline{S}_\alpha}(x),
\end{equation}
with $\Eu$ being the Euler obstruction from \eqref{eq: definition Euler obstruction}. 

Let $Y$ be a hypersurface defined by $F=0$ where $F=f_1\cdots f_p$ and let $\scrO_X[*Y]$ denote the sheaf of holomorphic functions with singularities along $Y$. This sheaf sits in the exact sequence
\begin{equation}
    0\to \scrO_X\to \scrO_X[*Y]\to\scrO_X[*Y]/\scrO_X\to 0
\end{equation}
and using e.g. \cite[Theorem 2.2.3]{hotta2007d} or \cite[Theorem 9.4.5]{kashiwara-schapira1} we have
\begin{equation*}
    CC(\scrO_X[*Y])=CC(\scrO_X)+CC(\scrO_X[*Y]/\scrO_X).
\end{equation*}
At the moment we view this as an equality between characteristic cycles of $D$-modules. To use Theorem \ref{thm: sol perserves cc} we first need the solution complexes. Note that, by \cite[Theorem 1.1]{Mebkhout1977}, we have 
\begin{equation*}
    R\mathscr{H} om_{\scrD_X}(\scrO_X,\,\scrO_X)=\CC_X\ \mathrm{and}\ R\mathscr{H} om_{\scrD_X}(\scrO_X[*Y]/\scrO_X,\,\scrO_X)=\CC_Y[-1].
\end{equation*} 
Shifting these by $n$ to obtain the solution complexes and using $CC(\scrF^\bullet[k])=(-1)^kCC(\scrF^\bullet)$ we find
\begin{equation}
    CC(\scrN_{\boldsymbol{\lambda}})=\Con(0)+CC(\CC_Y[d])
\end{equation}
for generic $\boldsymbol{\lambda}\in\CC^p$ and $d=\dim Y=n-1$. Applying the index theorem to $\CC_Y[d]$ and using $\chi(\CC_Y[d]_x)=(-1)^d$ we obtain
\begin{equation}\label{eq: index thm char func}
    \mathbb{1}_Y(x)=\sum_\alpha(-1)^{d-d_\alpha}m_\alpha \Eu_{\overline{S}_\alpha}(x),
\end{equation}
where $\mathbb{1}_Y$ is the characteristic function of $Y$. In particular this means
\begin{equation}
    CC(\scrN_{\boldsymbol{\lambda}})=\Con(0)+\sum_\alpha m_\alpha\Con(S_\alpha).
\end{equation}

The strata that appear in this cycle with non-zero $m_\alpha$ are precisely given by the blow-up procedure in Section \ref{subsubsec: computing W0 via blowup}. However, an alternative (and potentially faster) method that also gives us the multiplicities $m_\alpha$ explicitly is calculating a Whitney stratification of $Y$, calculating the Euler obstructions using polar multiplicities (see \eqref{eq:Eu_polar_mult}) and finally solving the linear system \eqref{eq: index thm char func}. We explain in more detail how this can be used in practice in the remark below. 
\begin{remark}[Calculating the Euler Characteristic of the Complex Link]\label{remark:EUofComplexLink}
Consider a complex variety $Y=\bV(f_1, \dots, f_s)\subset \CC^n$ of dimension $d=\dim(Y)$ and let $Y=\{S_\alpha\}$ be a Whitney stratification of $Y$. We note that the combination Kashiwara's index theorem, along with  the formulas \eqref{eq:Eu_polar_mult} and \eqref{eq:mult_complex_link}, gives us a method to compute the Euler characteristics $\chi(CL_Y(S_\alpha))$ of the complex link of any stratum $S_\alpha$ to the variety $Y$. Obviously a key step in this calculation is the computation of a Whitney stratification of $Y$, this is made possible by the recent algorithms of \cite{hnFOCM,helmer2023effective}. Starting with only the defining equations $f_1, \dots, f_s$ of the variety $Y$ the procedure is as follows:\begin{enumerate}
     \item Compute a Whitney stratification $Y=\{S_\alpha\}$ using the algorithm of \cite{hnFOCM, helmer2023effective}, e.g.~as implemented in \cite{WhitStratM2}. 
     \item Choose an ordering to place the strata in a list $S_{\alpha_1}, \dots, S_{\alpha_r}$ and, letting $d_\alpha=\dim(S_\alpha)$, construct the matrix: $$\Eu(Y)=\begin{pmatrix}
       (-1)^{d-d_1} \Eu_{\overline{S}_{\alpha_1}}(S_{\alpha_1}) & \cdots &(-1)^{d-d_r
       }\Eu_{\overline{S}_{\alpha_r}}(S_{\alpha_1})\\
        \vdots & \ddots & \vdots\\
        (-1)^{d-d_1}\Eu_{\overline{S}_{\alpha_1}}(S_{\alpha_r}) & \cdots &(-1)^{d-d_r}\Eu_{\overline{S}_{\alpha_r}}(S_{\alpha_r})
    \end{pmatrix}.$$We note that in the construction of the matrix above, if $S_\alpha \not\subset \overline{S}_\beta$ then we have $\Eu_{\overline{S}_\beta}(S_\alpha)=0$. Supposing $S_\alpha\subset \overline{S}_\beta$ we  employ  \eqref{eq:Eu_polar_mult} to compute the integer $\Eu_{\overline{S}_\beta}(S_\alpha)$;  this requires the following calculations:\begin{itemize}
        \item Compute the polar varieties  $P_k(\overline{S}_\beta)$ for $k=d_\alpha+1, \dots, d_\beta$ using \eqref{eq:polarVariety}.
        \item Compute the multiplicities $\mult_{\overline{S}_\alpha}(P_k(\overline{S}_\beta))$ using, e.g., the algorithm of \cite{Harris_2019}, as implemented in \cite{SegreM2}.
    \end{itemize}
    \item By  Kashiwara's index theorem we have the linear system: $$
       \begin{pmatrix}
        1\\
        \vdots \\
        1
    \end{pmatrix}=\Eu(Y)\begin{pmatrix}
        m_{\alpha_1}\\
        \vdots \\
        m_{\alpha_r}
    \end{pmatrix},
    $$hence we obtain the multiplicities $m_{\alpha_i}$ by solving the integer linear system above. 
    \item Using \eqref{eq:mult_complex_link} we obtain $$
    \chi(CL_Y(S_{\alpha_i}))=1-(-1)^{d-d_\alpha}m_{\alpha_i}
    $$for each strata $S_{\alpha_i}$.
 \end{enumerate}
\end{remark}

We note that it is only strata from the unique coarsest Whitney stratification that can appear with non-zero multiplicity in the characteristic cycle of $\mathbb{1}_Y$, as is proved in the proposition below.
\begin{proposition}\label{prop:unnessaryStratMult0}
    If $\{S_\alpha\}$ is a non-minimal Whitney stratification containing the unnecessary strata ${S}_\beta$ which does not appear in the unique minimal Whitney stratification, then $m_\beta=0$ in the decomposition \eqref{eq: index thm char func}.
\end{proposition}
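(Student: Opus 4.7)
The plan is to exploit the intrinsic nature of the characteristic cycle $CC(\mathbb{1}_Y)$ as a Lagrangian cycle in $T^*X$, combined with the uniqueness of its decomposition into irreducible Lagrangian subvarieties. Let $\{S'_\gamma\}$ denote the unique minimal Whitney stratification of $Y$, with multiplicities $m'_\gamma$ arising from the same linear system \eqref{eq: index thm char func}. Since $CC(\mathbb{1}_Y)$ depends only on the constructible function $\mathbb{1}_Y$ and not on any particular Whitney stratification used to compute it, both the given non-minimal stratification $\{S_\alpha\}$ and the minimal $\{S'_\gamma\}$ must produce the same cycle:
\begin{equation*}
    \sum_\alpha m_\alpha\,\Con(\overline{S_\alpha}) \;=\; CC(\mathbb{1}_Y) \;=\; \sum_\gamma m'_\gamma\,\Con(\overline{S'_\gamma}).
\end{equation*}

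Every irreducible Lagrangian subvariety of $T^*X$ is of the form $\Con(V)$ for a unique irreducible subvariety $V \subset X$, so $CC(\mathbb{1}_Y)$ admits a unique decomposition $\sum_V n_V\,\Con(V)$ with integer coefficients $n_V$; reading this off from the minimal side gives $n_V \neq 0$ only when $V=\overline{S'_\gamma}$ for some minimal stratum. The crucial step is then to establish that for the unnecessary $S_\beta$ the closure $\overline{S_\beta}$ is not of the form $\overline{S'_\gamma}$ for any minimal stratum. By Teissier's polar-multiplicity criterion recalled in Section \ref{sec: Whitney stratifications and euler obstructions}, the coarsening from $\{S_\alpha\}$ to $\{S'_\gamma\}$ absorbs $S_\beta$ into a strictly higher-dimensional minimal stratum $S'_{\gamma(\beta)}$ along which the polar multiplicities of $\overline{S'_{\gamma(\beta)}}$ remain constant; hence $\overline{S_\beta} \subsetneq \overline{S'_{\gamma(\beta)}}$ is a proper subvariety of strictly smaller dimension and cannot coincide with the closure of any minimal stratum.

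Combining these observations, $\Con(\overline{S_\beta})$ has coefficient $0$ in the unique Lagrangian decomposition of $CC(\mathbb{1}_Y)$, forcing $m_\beta = 0$ on the non-minimal side. The principal obstacle I anticipate is the bookkeeping when multiple extra strata of $\{S_\alpha\}$ could share a common closure and jointly contribute to the coefficient of a single $\Con(V)$. This is handled by induction on the number of extra strata in $\{S_\alpha\}$ relative to $\{S'_\gamma\}$: peel off one unnecessary stratum at a time, at each step using uniqueness of the Lagrangian decomposition together with the Teissier absorption criterion to conclude that the stratum being removed has multiplicity zero before continuing toward the minimal stratification.
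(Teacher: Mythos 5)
Your argument is correct in substance, but it extracts the conclusion by a different mechanism than the paper. The paper's proof never leaves the index formula \eqref{eq: index thm char func}: it evaluates both expansions at a generic point $x\in S_\beta$, uses that the terms coming from the minimal stratification already sum to $\mathbb{1}_Y(x)=1$ while $\Eu_{\overline{S}_\beta}(x)=1$ there, and reads off $1=1+(-1)^{d-d_\beta}m_\beta$, hence $m_\beta=0$. You instead work at the level of Lagrangian cycles: stratification-independence of $CC(\mathbb{1}_Y)$, uniqueness of its decomposition into conormal cycles of distinct irreducible varieties, and the claim that $\overline{S}_\beta$ is not the closure of any minimal stratum together force the coefficient of $\Con(\overline{S}_\beta)$, i.e.\ $m_\beta$, to vanish. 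Both proofs rest on the same key input (the index theorem holds for any Whitney stratification, so the minimal one computes the same intrinsic object); the paper's pointwise evaluation is shorter and needs only $\Eu_{\overline{S}_\beta}(S_\beta)=1$, while your cycle-level version makes the uniqueness statement explicit and treats arbitrary refinements uniformly, at the cost of having to argue that $\Con(\overline{S}_\beta)$ is genuinely absent from the minimal decomposition.

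Two small points in your write-up deserve attention, though neither breaks the proof. First, the bookkeeping problem you defer to an induction does not arise: two distinct strata of one Whitney stratification are disjoint constructible sets, each dense in its (irreducible) closure, so they can never share a closure, and the coefficients of the $\Con(\overline{S}_\alpha)$ are therefore matched unambiguously without peeling off strata one at a time; the peeling argument would in any case be delicate, since an intermediate partition obtained by merging a single stratum need not be a Whitney stratification. Second, your dimension/Teissier argument only rules out $\overline{S}_\beta=\overline{S'_{\gamma(\beta)}}$ for the minimal stratum absorbing $S_\beta$; to exclude coincidence with the closure of any other minimal stratum $S'_\delta$, use the same disjointness-plus-density observation, since $S_\beta\subset S'_{\gamma(\beta)}$ is disjoint from $S'_\delta$. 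Finally, note (as your appeal to strict dimension drop implicitly assumes, and as the paper's own example indicates) that ``unnecessary'' must be read as a stratum absorbed into a strictly higher-dimensional minimal stratum: the refined stratum of the same dimension as the minimal stratum containing it has the same closure and inherits the (generally nonzero) minimal multiplicity.
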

\begin{proof}
    Let $x$ be a (smooth) point in $S_\beta$. Since \eqref{eq: index thm char func} holds for any Whitney stratification, especially the minimal one, we have
    \begin{equation}
        \mathbb{1}_Y(x)=\sum_\gamma(-1)^{d-d_\gamma}m_\gamma\Eu_{\overline{S}_\gamma}(x)=1,
    \end{equation}
    where we sum over $\overline{S}_\gamma$ lying above $\overline{S}_\beta$ in the flag.
    Including the variety $\overline{S}_\beta$ we obtain
    \begin{equation}
        1=\mathbb{1}_Y(x)=(-1)^{d-d_\beta}m_\beta\Eu_{\overline{S}_\beta}(x)+\sum_\gamma(-1)^{d-d_\gamma}m_\gamma\Eu_{\overline{S}_\gamma}(x)=(-1)^{d-d_\beta}m_\beta+1.
    \end{equation}
    Which is true only if $m_\beta=0$.
\end{proof}
Before showing some examples of calculating characteristic cycles we give some examples of just the Euler obstructions.

\begin{example}\label{ex: euler obstructions}
       \begin{figure}[t]%
    \begin{subfigure}[t]{0.33\textwidth}
        \centering
         \includegraphics[width=\linewidth]{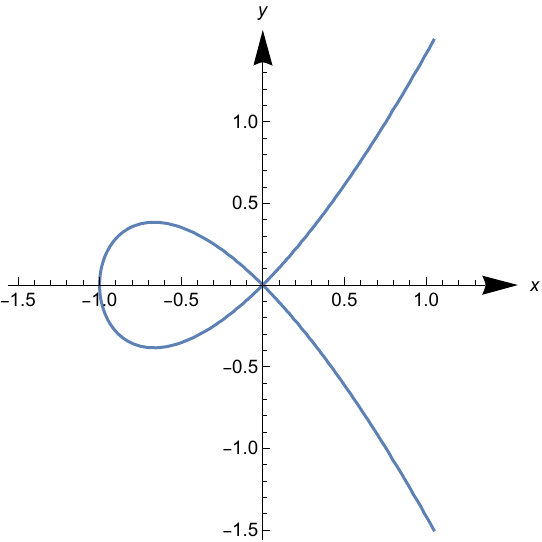}%
         \caption{$f=y^2 - (x + 1)x^2$}\label{fig: curve}
    \end{subfigure}%
    \begin{subfigure}[t]{0.33\textwidth}
        \centering
         \includegraphics[width=\linewidth]{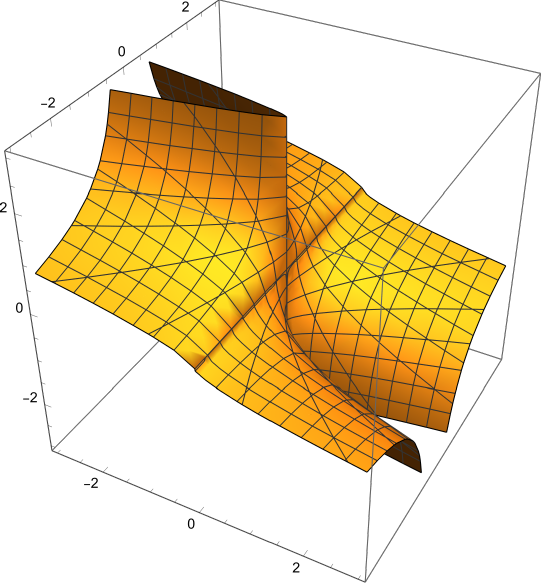}%
         \caption{$f=x^3 + y^4z^5$}\label{fig: special surface}
    \end{subfigure}%
    \begin{subfigure}[t]{0.33\textwidth}
        \centering
        \includegraphics[width=\linewidth]{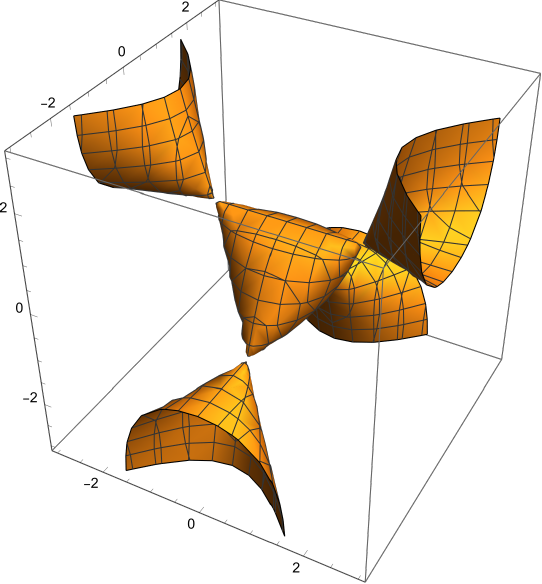}%
        \caption{\footnotesize$f=x^2 + y^2 + z^2 - x^2z + y^2z - 1$}\label{fig: cayley surface}
    \end{subfigure}%
    \caption{The three varieties used in Example \ref{ex: euler obstructions}.}
    \end{figure}

\textbf{(A)} For a curve the Euler obstructions are just the multiplicities of the points on the curve. For the curve in Figure \ref{fig: curve} every point is smooth (i.e. Euler obstruction one) except the origin where the multiplicity and therefore the Euler obstruction is two. For curves the polar multiplicity coincides with the standard multiplicity.

\textbf{(B)} For a surface $x^n+y^pz^q=0$ with $\mathrm{gcd}(n,p,q)=1$ we have $\Eu_Y(0)=\min(n,p)+\min(n,q)-n$ (see e.g. \cite{Kashiwara1973}). In the case of Figure \ref{fig: special surface} we have $\Eu_Y(0)=3$

\textbf{(C)} For a hypersurface in $\CC^n$ with isolated singularities the Euler obstructions can be obtained from the Milnor number at the singularity. If $x$ is the singular point then $\Eu_Y(x)=1+(-1)^{n}\mu$ where $\mu$ is the Milnor number associated to a generic hyperplane section of $Y$ containing $x$, see e.g.~\cite[Example 4.1.35]{Dimca2004sheaves}. For the surface in Figure \ref{fig: cayley surface} this number is one so the Euler obstruction at the singular points vanishes. Calculating the polar multiplicities at one of these points we find $\{0,2,2\}$ so indeed $\Eu_Y(x)=0$.
\end{example}

\begin{example}
    Let $Y=\bV(x^2z-y^2)$ be the affine Whitney umbrella, see Example~\ref{ex: Whitney umbrella}. Then the coarsest Whitney stratification is given by $\overline{S}_3\subset\overline{S}_2\subset\overline{S}_1=\bV(x,y,z)\subset\bV(x,y)\subset\bV(x^2z-y^2)$. The Euler obstruction is constant on open strata in a Whitney stratification and in this particular case we have the Euler obstruction matrix $\Eu(Y)$ is given in the table below:
    \begin{equation}
        \begin{array}{c|ccc}
             & \Eu_{\overline{S}_3} & \Eu_{\overline{S}_2} & \Eu_{\overline{S}_1} \\
             \hline\\
             S_3& 1 & -1 & 1 \\
             S_2&0&-1&2\\
             S_1&0&0&1
        \end{array}
    \end{equation}
    The multiplicities $m_\alpha$ are calculated by solving the linear system \eqref{eq: index thm char func}
    \begin{equation}
        \begin{pmatrix}
        1\\1\\1
        \end{pmatrix}=\begin{pmatrix}
            1&-1&1\\
            0&-1&2\\
            0&0&1
        \end{pmatrix}\begin{pmatrix}
            m_3\\m_2\\m_1
        \end{pmatrix}
    \end{equation}
    giving us that all multiplicities are 1 and therefore
    \begin{equation}
        CC(\mathbb{1}_Y)=\Con(x^2z-y^2)+\Con(x,y)+\Con(x,y,z).
    \end{equation}

    According to Proposition \ref{prop:unnessaryStratMult0} unnecessary strata always come with multiplicity zero in the characteristic cycle. We now show this for the Whitney umbrella. Add a point $\overline{S}_4=\bV(x,y,z-1)$ to the Whitney stratification such that it becomes non-minimal. The matrix of Euler obstructions then becomes
    \begin{equation}
        \begin{array}{c|cccc}
             & \Eu_{\overline{S}_4}& \Eu_{\overline{S}_3} & \Eu_{\overline{S}_2} & \Eu_{\overline{S}_1} \\
             \hline\\
             S_4 & 1& 0 & -1 & 2\\ 
             S_3&0& 1 & -1 & 1 \\
             S_2&0&0&-1&2\\
             S_1&0&0&0&1
        \end{array}
    \end{equation}
    Solving the linear system for the multiplicities, we see that the new multiplicity $m_4$ has to be zero.
\end{example}
We will now see an example showing that the converse of Proposition \ref{prop:unnessaryStratMult0} is not true, i.e. that elements in the unique coarsest Whitney stratification can appear with multiplicity zero in the characteristic cycle. The following example was used in \cite{Briancon1975} in a different context. \footnote{We thank David Massey for showing us this example.}
\begin{example}
Let $Y=\bV(z^3+ty^3z+y^4x+x^9)\subset\CC^4.$ The coarsest Whitney stratification of this variety has four strata:
    \begin{equation*}
    \{\overline{S}_4,\, \overline{S}_3\}\subset\overline{S}_2\subset\overline{S}_1=\{\bV(x,y,z,t),\,
            \bV(x,y,z,t^{12}+65536/729)\}\subset\bV(x,y,z)\subset Y.
    \end{equation*}
    Below we show the Euler obstruction matrix together with the non-trivial lists of polar multiplicities.
    \begin{equation*}
    \begin{array}{c|cccc}
             & \Eu_{\overline{S}_4}&\Eu_{\overline{S}_3} & \Eu_{\overline{S}_2} & \Eu_{\overline{S}_1} \\
             \hline\\
             S_4 & -1& 0 & 1 & -6/\{0,1,10,3\}\\
             S_3&0& -1 & 1 & -5 /\{0,1,9,3\} \\
             S_2&0&0&1&-6/\{0,0,9,3\}\\%
             S_1&0&0&0&1
        \end{array}
    \end{equation*}
    The linear system \eqref{eq: index thm char func} becomes
    \begin{equation}
        \begin{pmatrix}
        1\\1\\1\\1
        \end{pmatrix}=\begin{pmatrix}
            -1&0&1&-6\\
            0&-1&1&-5\\
            0&0&1&-6\\
            0&0&0&1
        \end{pmatrix}\begin{pmatrix}
            m_4\\m_3\\m_2\\m_1
        \end{pmatrix}\iff \begin{pmatrix}
            m_4\\m_3\\m_2\\m_1
        \end{pmatrix} =\begin{pmatrix}
            0\\1\\7\\1
        \end{pmatrix}
    \end{equation}
    so the characteristic cycle of $\mathbb{1}_Y$ is given by
    \begin{equation}
        CC(\mathbb{1}_Y)=\Con(Y)+7\Con(x,y,z)+\Con(x,y,z,t^{12}+65536/729).
    \end{equation}
\end{example}

\section{Integration}\label{sec: integration}
Before defining the integration of distributions and $D$-modules we introduce the necessary maps and spaces. 
For a morphism $\varphi:M\to N$ between manifolds we define a canonical morphism $\varphi_d$
\begin{equation}
    \varphi_d:M\times_NT^*N\to T^*M
\end{equation}
where the fibre product is defined such that
\begin{equation*}
    M\times_NT^*N=\{(m,n,\eta)\in M\times T^*N\,|\,\varphi(m)=\pi_N(n,\eta)\},
\end{equation*}where $\pi_N: T^*N\to N$ is the canonical projection. 
The morphism $\varphi_d$ is the dual of the differential of $\varphi$ which maps $T_xM\to T_{\varphi(x)}M $. Together with the canonical projection $\varphi_\pi$
\begin{equation}
    \varphi_\pi:M\times_NT^*N\to T^*N
\end{equation}
we have the commutative diagram:
\begin{equation}\label{eq: commutative f}
    \begin{tikzcd}
        T^*M \arrow[d, "\pi_M"] &\arrow[l,"\varphi_d"']  M\times_NT^*N \arrow[d, "\pi"] \arrow[r,"\varphi_\pi"] & T^*N \arrow[d,"\pi_N"] \\
         M \arrow[r,equal] &M \arrow[r,"\varphi"] & N.
\end{tikzcd}
\end{equation}
Let $X\hookrightarrow M$ and $Y\hookrightarrow N$ be algebraic varieties equipped with Whitney stratifications $\cS=\cup S_\alpha$ and $\cT=\cup T_\beta$ respectively and assume that $\varphi:X\to Y$ is a proper stratified submersion. Then the canonical projection $\varphi_\pi$
\begin{equation*}
    \varphi_\pi: (M\times_NT^*N)|_X\to T^*N|_Y
\end{equation*}
is proper and for each strata $S_\alpha\in \cS$ there exists a strata $T_\beta\in\cT$ such that
\begin{equation}
    \varphi_\pi(\varphi_d^{-1}(T_{S_\alpha}^*M))\subset T_{T_{\beta}}^*N
\end{equation}
where $T_{T_{\beta}}^*N$ is the conormal bundle. On the level of Lagrangian cycles, $\varphi_{\pi *}\varphi_d^*$ is a group homomorphism mapping Lagrangian cycles in $T^*M$ with respect to $\cS$ to Lagrangian cycles in $T^*N$ with respect to $\cT$. For a constructible function $\alpha$ on $X$ we define the push-forward
\begin{equation}\label{eq: proper push-forward constructible function}
    \varphi_*(\alpha)(y):=\int_{\{X\cap\varphi^{-1}(y)\}}\alpha d\chi
\end{equation}
for $y\in Y$. In the case $\alpha=\mathbb{1}_X$ we simply get $\varphi_*(\mathbb{1}_X)(y)=\chi(X\cap\varphi^{-1}(y))$, see e.g.~\cite[Proposition 1]{MacPherson1974} or \cite[\S4.1, Proposition 4.1.31]{Dimca2004sheaves}. The push-forward of Lagrangian cycles and constructible functions are compatible with the characteristic cycle map:
\begin{proposition}[{\cite[Proposition 3.46]{Laurentiu2022}}]
    Let $\alpha$ be a constructible function on $X$ with respect to the Whitney stratification $\cS$. Then
    \begin{equation*}
        CC(\varphi_*\alpha)=\varphi_{\pi *}\varphi_d^*CC(\alpha)
    \end{equation*}
    is a Lagrangian cycle in $T^*N$ with respect to the Whitney stratification $\cT$.
\end{proposition}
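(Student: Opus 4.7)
The plan is to exploit the fact that, for any Whitney stratification $\cS$ of $X$, both the group of constructible functions on $X$ that are constant on every stratum and the group of Lagrangian cycles supported on $\bigcup_{\gamma}\overline{\Con(S_\gamma)}$ are free abelian of rank $|\cS|$, and that the characteristic cycle map is a $\ZZ$-linear bijection between them sending (up to signs) each Euler obstruction function $\Eu_{\overline{S}_\gamma}$ to $\Con(S_\gamma)$; this is exactly the content of inverting \eqref{eq: index thm for char fnc}. Since $\varphi_*$ on constructible functions and $\varphi_{\pi*}\varphi_d^*$ on Lagrangian cycles are both $\ZZ$-linear, it suffices to verify the identity when $\alpha=\Eu_{\overline{S}_\gamma}$ for a single stratum $S_\gamma$, and then sum.

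For such a $\gamma$ I would first analyze the right-hand side geometrically. Because $\varphi$ is a proper stratified submersion, $\varphi_d^{-1}(\overline{\Con(S_\gamma)})$ is a closed subvariety of $M\times_N T^*N$ whose image under the proper map $\varphi_\pi$ is a finite union of conormal closures $\overline{\Con(T_\beta)}$ indexed by those strata $T_\beta\in\cT$ lying in $\overline{\varphi(S_\gamma)}$, and the resulting cycle is isotropic in $T^*N$ of the correct dimension, hence Lagrangian; the multiplicity of $\overline{\Con(T_\beta)}$ in $\varphi_{\pi*}\varphi_d^*CC(\Eu_{\overline{S}_\gamma})$ equals the generic fiber degree of $\varphi_\pi$ above $T^*_{T_\beta}N$. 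On the left-hand side, the pushforward formula \eqref{eq: proper push-forward constructible function} gives $(\varphi_*\Eu_{\overline{S}_\gamma})(y)=\chi(\varphi^{-1}(y);\Eu_{\overline{S}_\gamma})$, and by local triviality of the restricted submersion $\overline{S}_\gamma\cap\varphi^{-1}(T_\beta)\to T_\beta$ this is a constant $e_\beta$ on each $T_\beta$. Applying \eqref{eq: index thm char func} on $N$ then expresses the multiplicities of $CC(\varphi_*\Eu_{\overline{S}_\gamma})$ as the unique solution of the linear system built from $\{e_\beta\}$ and the Euler obstruction matrix of $\cT$.

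The main obstacle is matching these two recipes for the multiplicities: one reads them off from fiberwise degrees of a map of Lagrangian cycles, the other from fiberwise Euler characteristics of a constructible function. The cleanest way to close the gap is indirect, through the Riemann-Hilbert correspondence. Lift $\Eu_{\overline{S}_\gamma}$ to a shifted intersection cohomology complex $\scrF^\bullet$ whose characteristic cycle is $\Con(S_\gamma)$ up to a known sign, then invoke the classical compatibility of characteristic cycles with proper direct image of constructible complexes, namely $CC(R\varphi_*\scrF^\bullet)=\varphi_{\pi*}\varphi_d^*CC(\scrF^\bullet)$ as established in \cite[Ch.~9]{kashiwara-schapira1}, and pass to the constructible-function side by taking stalkwise Euler characteristics and using the identity $\chi((R\varphi_*\scrF^\bullet)_y)=\int_{\varphi^{-1}(y)}\chi(\scrF^\bullet)\,d\chi$ together with \eqref{eq: index thm char func} applied on both $X$ and $N$. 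Evaluating at a generic point of each $T_\beta$ matches the generic fiber degree of $\varphi_\pi$ on the RHS with the fiberwise Euler characteristic $e_\beta$ on the LHS, and confirms that the common cycle is Lagrangian with respect to $\cT$, completing the argument.
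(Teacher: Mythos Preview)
The paper does not prove this proposition at all: it is quoted verbatim from \cite[Proposition~3.46]{Laurentiu2022} and used as a black box, with no accompanying \texttt{proof} environment. So there is nothing in the paper to compare your argument against.

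That said, your strategy is a standard and essentially correct one. Reducing by $\ZZ$-linearity to the basis $\{\Eu_{\overline{S}_\gamma}\}$ is exactly the right move, since $CC$ sends this basis to $\{\Con(S_\gamma)\}$ up to sign. Your identification of the ``main obstacle'' is also accurate: the honest content of the proposition is precisely the equality of the intersection-theoretic multiplicities coming from $\varphi_{\pi*}\varphi_d^*$ with the Euler-characteristic multiplicities coming from $\varphi_*$ on functions. Your resolution---lifting to the IC complex of $\overline{S}_\gamma$, invoking the sheaf-level compatibility $CC(R\varphi_*\scrF^\bullet)=\varphi_{\pi*}\varphi_d^*CC(\scrF^\bullet)$ from \cite[Ch.~9]{kashiwara-schapira1}, and then descending via stalkwise Euler characteristics and proper base change---is the standard route and works. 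One small caution: the step ``the multiplicity of $\overline{\Con(T_\beta)}$ equals the generic fiber degree of $\varphi_\pi$'' is only literally true when that fiber is zero-dimensional over the generic point of $\Con(T_\beta)$; in general the pushforward of a cycle involves either degree or zero depending on relative dimension, so you should phrase this more carefully. But this does not affect the validity of the indirect argument through sheaves, which handles all cases uniformly.
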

\subsection{Integration of Distributions}\label{sec: integration of distributions}
In this section we are interested in the integration of regular holonomic distributions.
\begin{definition}
    A distribution $u$ is said to be \emph{regular holonomic} if its annihilator ideal defines a regular holonomic $D$-module.
\end{definition}
As every regular holonomic distribution is tempered (cf.~\cite{Kallstrom1989}), they can always be extended to projective space $\RR_x^n\times\RR_z^m\hookrightarrow\PP_x^n\times\RR_z^m$ so that the projection $\varphi:\PP_x^n\times\RR_z^m\to\RR_z^m\,:\,(x,z)\mapsto z$ is proper. If $u(x,z)$ is a regular holonomic distribution on $\RR^n\times\RR^m$, there exists a sheaf of left ideals $\scrI$ of $\scrD_{\PP^n\times\CC^m}$ such that $\scrD_{\PP^n\times\CC^m}/\scrI$ is a sheaf of regular holonomic $\scrD_{\PP^n\times\CC^m}$-modules. If $H$ is any hyperplane and we identify $\CC^n$ with $\PP^n-H$, then $\scrI|_{\CC^n\times\CC^m}$ is equal to the sheaf of left ideals in $\scrD_{\CC^n\times\CC^m}$ annihilating $u$. Using \cite[Proposition 5]{Kashiwara1987} there is a global section $u_h$ in $\mathrm{Hom}_{\scrD_{\PP^n\times\CC^n}}(\scrD_{\PP^n\times\CC^m}/\scrI,\cD'(\PP^n\times\CC^m))$ extending $u$. This means that we identify the integration of $u$ against the measure $dx/x$ with the integration of $u_h$ against the canonical one-form $\Omega$ given by
\begin{equation*}
    \Omega=\sum_{i=0}^n(-1)^{n-1-i}dx_0/x_0\wedge\cdots\wedge\widehat{dx_i}/x_i\wedge\cdots\wedge dx_n/x_n.
\end{equation*}
This identification is well-known in the physics literature where it goes under the Cheng-Wu theorem \cite[Chapter 2, Theorem 1]{Weinzierl:2022eaz}. We therefore make the identification between the following pairings
\begin{equation*}
    \avg{\int_{\RR^n} u\frac{dx}{x},\,\phi(x,z)}:=\avg{\int_{\PP^n} u_h\,\Omega,\,\phi_h(x_0,x,z)}:=\avg{\varphi_*(u_h),\phi_h}=\avg{u_h,\,\varphi^*(\phi_h)}
\end{equation*}
where $\phi$ is a function with rapid decrease in the Schwartz space $\cS(\RR_x^n\times\RR_z^m)$, $\phi_h$ its homogenization in $x$ and $u_h$ contains a factor $x_0^{\nu_0}$ where $\nu_0$ is fixed by projective invariance.

As our objective is to make sense of the integral \eqref{eq:EM_Int} and in particular the Feynman integral in Lee-Pomeransky form:
     \begin{equation}\label{eq: Lee-Pomeransky}
         \int_{\RR_+^n}\frac{x_1^{\nu_1}\cdots x_n^{\nu_n}}{(f(x,z)-i0)^\lambda}\frac{dx_1}{x_1}\wedge\cdots\wedge\frac{dx_n}{x_n},
     \end{equation}
we have to make sure that the integrand can be understood as a regular holonomic distribution so that the above construction applies. 

We first change the integration domain from $\RR_+^n$ to $\RR^n$ by defining the the tempered distribution 
     \begin{equation*}
         x_+^\nu=x^\nu\ \mathrm{for}\ x>0,\quad x_+^\nu=0\ \mathrm{for}\ x\le 0.
     \end{equation*}
     In the Feynman integral the meaning of $i0$ is as follows. Let $f$ be a polynomial with real coefficients. For a complex parameter $\lambda$ we then define the distributions
\begin{equation}\label{eq: def i0}
    (f\pm i0)^\lambda :=f_+^\lambda +e^{\pm i\pi \lambda}f_-^\lambda
\end{equation}
where
\begin{equation*}
    f_+^\lambda=f^\lambda\ \mathrm{for}\ f>0,\quad f_+^\lambda=0\ \mathrm{for}\ f\le 0
\end{equation*}
and similarly
\begin{equation*}
    f_-^\lambda=0\ \mathrm{for}\ f\ge 0,\quad f_-^\lambda=|f|^\lambda\ \mathrm{for}\ f< 0.
\end{equation*}
For $\re\,\lambda>0$ the right-hand side in \eqref{eq: def i0} has polynomial growth and is thus well-defined as a tempered distribution. Both of the distributions $x_+^\nu$ and $f_+^\lambda$ can be extended to an entire analytic function of $\nu$ and $\lambda$ by renormalizing them with a certain product of $\Gamma$-functions coming from their Bernstein-Sato polynomial \cite[Lemma 2.10]{kashiwara1979characteristic}.

Distributions can famously not always be multiplied. A sufficient condition for multiplication to be allowed is given by Hörmander's product theorem \cite[Theorem 8.5.3] {HormanderVolI}, this condition states that the product of two distributions $u$ and $v$ can be well-defined if there exists no points $(x,\xi)\in WF_A(u),\ (y,\eta)\in WF_A(v)$ such that $x=y$ and $\xi+\eta=0$ (the colloquial term for this phenomenon is that the wave fronts \emph{collide}). The analytic wave front sets for our distributions of interest are
\begin{align}
    WF_A((x\pm i0)^\nu)&\subseteq\{(0;\,\xi)\,|\,\xi\gtrless 0\},\qquad WF_A(x_\pm^\nu)\subseteq\{(0;\xi)\,|\, \xi\in\RR\setminus 0\},
    \\
    WF_A\left(\frac{1}{(f\pm i0)^\lambda}\right)&\subseteq\bigcup_\alpha\Con(S_\alpha),
\end{align}
where $\cup S_\alpha$ is the subset of the Whitney stratification of the hypersurface $\bV(f)$ appearing with non-zero multiplicity in \eqref{eq: index thm for char fnc}. It is not hard to construct examples where the analytic wave front sets of the integrand in \eqref{eq: Lee-Pomeransky} collide. Take e.g. $f$ to be the polynomial defining the Whitney umbrella, see Example \ref{ex: Whitney umbrella}, there all three strata have colliding points with those in $WF_A(x_{i+}^\nu)$.

It is clear from Theorem \ref{thm: annihilators are regular holonomi} that all distributions in the integrand are regular holonomic. In \cite[Theorem 1]{KashiwaraKawai1979} it was claimed without proof that this product is well-defined. A more general theorem with a proof was given in \cite[Theorem 2.11]{kashiwara1979characteristic}. This theorem both tell us that (i) the product is well-defined and defines a regular holonomic distribution and (ii), that the characteristic variety of the annihilator module of the product is given by (this follows from Equation \eqref{eq: char N generic bold lambda} for generic $\lambda$)
\begin{equation}
    \Char(\mathrm{Ann}_D(f^{\lambda_1} g^{\lambda_2}))=\Char(\mathrm{Ann}_D((f\cdot g)^\lambda))
\end{equation}
for generic fixed complex numbers $\lambda_i$.

\subsection{Integration module}
We again work in the setting described at the beginning of this Section (Section \ref{sec: integration}). Let $I$ be a $D$-ideal, let $\{x_1,\ldots,x_n,z_1,\ldots,z_m\}$ local coordinates on $M$ and let $\{z_1,\ldots,z_m\}$ local coordinates on $N$. Define the left $D_N$ and right $D_M$ bimodule $D_{N\leftarrow M}:$
\begin{equation*}
    D_{N\leftarrow M}:=D_M/(\partial_1D_M+\cdots+\partial_nD_M).
\end{equation*}
We then define the \emph{integration module} as the tensor product
\begin{equation}
    D_{N\leftarrow M}\otimes_{D_M}D_M/I\simeq D_M/(I+\partial_1D_M+\cdots+\partial_nD_M)
\end{equation}
with the \emph{integration ideal} being
\begin{equation}
    (I+\partial_1D_M+\cdots+\partial_nD_M)\cap D_N.
\end{equation}

We note that the integral of a regular holonomic $D$-module is again a regular holonomic $D$-module, \cite[Theorem 5.5.3]{SST}. Given a morphism of manifolds $\varphi:M\to N$ we will denote the integral of a $D$-module $\scrM$ on $M$ along the fibres of $\varphi$ by $\int_\varphi \scrM$. In \cite{Kashiwara2003} Kashiwara proves the following theorem (see also \cite[Remark 2.5.2]{hotta2007d}). 

\begin{theorem}[Kashiwara, \cite{Kashiwara2003}, Theorem 4.27]
    Let $\varphi:M\to N$ be a morphism of manifolds and $\scrM$ a  regular holonomic $D_M$-module such that $\varphi$ is proper on $\mathrm{Supp}\,\scrM\to N$. Then
    \begin{equation}
        \Char\left(\int_\varphi \scrM \right)\subset \varphi_\pi \varphi_d^{-1}\Char(\scrM).
    \end{equation}\label{thm:KashiwaraIntegral}
\end{theorem}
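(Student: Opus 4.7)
The plan is to work within the standard filtered $D$-module framework. First, I would equip $\scrM$ with a good filtration $F_\bullet\scrM$ by coherent $\scrO_M$-modules compatible with the order filtration on $D_M$, so that the associated graded $\gr^F\scrM$ becomes a coherent $\gr D_M \simeq \scrO_{T^*M}$-module whose support is exactly $\Char(\scrM)$. The existence of such a filtration is guaranteed by coherence of $\scrM$, and this is the bridge that converts a microlocal statement about $\Char(\scrM)$ into a geometric statement about supports of quasi-coherent sheaves on $T^*M$.

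Second, the transfer bimodule $D_{N\leftarrow M}$ carries a natural filtration induced from the order filtrations on $D_M$ and $D_N$. The geometric content I would use is that at the associated graded level, tensoring by $D_{N\leftarrow M}$ over $D_M$ realizes the pullback--pushforward operation $(\varphi_\pi)_*(\varphi_d)^*$ on quasi-coherent sheaves on $T^*M$, via the correspondence diagram \eqref{eq: commutative f}. This is the mechanism that transports the characteristic variety through the integration functor.

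Third, I would invoke the properness hypothesis. Since $\varphi$ is proper on $\mathrm{Supp}\,\scrM$, the restriction of $\varphi_\pi$ to $\varphi_d^{-1}(\Char(\scrM))$ is proper, so pushforward of a coherent sheaf supported on $\Char(\scrM)$ remains coherent on $T^*N$. Properness also allows one to upgrade a good filtration on $\scrM$ to a good filtration on $\int_\varphi\scrM = R\varphi_*(D_{N\leftarrow M}\otimes^L_{D_M}\scrM)$, and at the associated graded level (up to the relative-dimension shift) one obtains
$$\gr\!\left(\int_\varphi\scrM\right) \simeq R(\varphi_\pi)_*(\varphi_d)^*\gr^F\scrM.$$
The support of the right-hand side is visibly contained in $\varphi_\pi(\varphi_d^{-1}(\Char(\scrM)))$, and since the support of the associated graded of any good filtration equals the characteristic variety, this yields the stated inclusion.

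The hard part will be handling the derived picture rigorously: a priori the integration is a complex of $D_N$-modules, so I must control the characteristic variety of each cohomology sheaf separately rather than just of the ``total'' object. The filtered derived pushforward produces a convergent spectral sequence with $E_2$-terms of the form $R^q(\varphi_\pi)_*(\varphi_d)^*\gr^F\scrM$, and properness is exactly the hypothesis that simultaneously makes these terms coherent and forces the spectral sequence to converge, thereby bounding the characteristic variety of every cohomology by $\varphi_\pi\varphi_d^{-1}\Char(\scrM)$. Verifying that properness of $\varphi$ on $\mathrm{Supp}\,\scrM$ transfers to properness of $\varphi_\pi$ on the piece of the conormal correspondence that actually contributes is the central technical step, and without it both the coherence claim and the identification of the graded module with the pushforward along the correspondence can fail.
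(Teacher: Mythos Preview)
The paper does not give its own proof of this statement: it is quoted verbatim as Theorem~4.27 from Kashiwara's book \cite{Kashiwara2003} and used as a black box (the paper only adds, via Remark~\ref{remark:CC_integral_Finite_equal} and Lemma~\ref{lemma:int_Dmod}, the observation that in the specific Euler--Mellin setting the inclusion is an equality because $\varphi_\pi$ is finite on $\varphi_d^{-1}\Char(\scrM)$). So there is nothing in the paper to compare your argument against.

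That said, your sketch is essentially the standard filtered $D$-module proof one finds in Kashiwara's book or in Hotta--Takeuchi--Tanisaki: equip $\scrM$ with a good filtration, use the induced filtration on the transfer bimodule so that passing to associated graded realizes $(\varphi_\pi)_*(\varphi_d)^*$ on $\scrO_{T^*M}$-modules, and invoke properness to control coherence and the spectral sequence. Your identification of the ``hard part''---checking that properness of $\varphi|_{\mathrm{Supp}\,\scrM}$ transfers to properness of $\varphi_\pi$ on the relevant piece of the correspondence, and that the filtered pushforward genuinely yields a good filtration on each cohomology of $\int_\varphi\scrM$---is accurate; this is exactly where the work lies in the textbook treatments.
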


\begin{remark}\label{remark:CC_integral_Finite_equal}
    From \cite[page~79]{Kashiwara2003}  we have 
    that equality holds in the above theorem if
    \begin{equation*}
        \varphi_\pi:\varphi_d^{-1}\Char(\scrM)\to T^*N
    \end{equation*}
    is finite. We will establish that this is fact always the case in our setting in Lemma \ref{lemma:int_Dmod} below.  
\end{remark}

In the special case of Euler-Mellin integrals, we can identify $M=\PP^n_x\times\CC^m_z$ and $N=\CC_z^m$ with $\varphi$ being the canonical projection 
The commutative diagram \eqref{eq: commutative f} now becomes
\begin{equation}
    \begin{tikzcd}
        \PP^n_x\times\CC^m_z\times\check{\PP}_\xi^n\times\check{\PP}_\zeta^{m-1} \arrow[d, "\pi_M"] &\arrow[l,"\varphi_d"']  \PP^n_x\times\CC^m_z\times\check{\PP}_\zeta^{m-1} \arrow[d, "\pi"] \arrow[r,"\varphi_\pi"] & \CC^m_z\times\check{\PP}_\zeta^{m-1} \arrow[d,"\pi_N"] \\
         \PP^n_x\times\CC^m_z \arrow[r,equal] &\PP^n_x\times\CC^m_z \arrow[r,"\varphi"] &  \CC^m_z.
\end{tikzcd}\label{eq:integeralCommDiagram}
\end{equation}

For the integral \eqref{eq:EM_Int} we first define the annihilator module of the integrand
\begin{equation*}    \scrN_{\boldsymbol{\lambda},\boldsymbol{\nu}}=D_{\CC^n\times\CC^m}/\ann_{D_{\CC^n\times\CC^m}}(x_1^{\nu_1}\cdots x_n^{\nu_n}f_1^{-\lambda_1}\cdots f_p^{-\lambda_p})
\end{equation*}
and then by the construction in the beginning of Section \ref{sec: integration of distributions} we lift this to a module on $\PP^n\times\CC^m$ generated by the annihilator of the product of homogenization of all the $f_i$ and $x_0^{\nu_0}\cdots x_1^{\nu_1}$.
 
In practice this means to calculate the characteristic variety of the integration ideal we simply have to calculate the fibre of $\varphi_d^{-1}$ and eliminate the integration variables and their corresponding cotangent variables. Explicitly we have that \small
\begin{equation}\label{eq:f_d_inv}\mathbf{I}(\varphi_d^{-1}\Char(\scrM))= 
    (\mathbf{I}(\Char(\scrM))+\avg{\xi_0,\ldots,\xi_n})\cap\QQ[x_0, \dots, x_n, z_1,\ldots,z_m,\zeta_1,\ldots,\zeta_m]
\end{equation}\normalsize
hence to obtain $ \varphi_\pi \varphi_d^{-1}\Char(\scrM)$ we just have to calculate (the radical of) the ideal $$(\mathbf{I}(\Char(\scrM))+\avg{\xi_0,\ldots,\xi_n})\cap\QQ[z_1,\ldots,z_m,\zeta_1,\ldots,\zeta_m].$$
We now show that the assertion of Remark \ref{remark:CC_integral_Finite_equal} holds in our case of interest. 
\begin{lemma}\label{lemma:int_Dmod}
     Let $\varphi:\PP^n_x\times \CC_z^m \to \CC_z^m$ be the canonical projection and let $\scrM$ be a regular holonomic $D_M$-module. Then
    \begin{equation}
        \Char(\int_\varphi\scrM)= \varphi_\pi \varphi_d^{-1}\Char(\scrM).
        \end{equation}
\end{lemma}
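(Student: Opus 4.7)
The plan is to combine Kashiwara's Theorem \ref{thm:KashiwaraIntegral}, which supplies the inclusion $\Char(\int_\varphi\scrM) \subseteq \varphi_\pi\varphi_d^{-1}\Char(\scrM)$, with the criterion of Remark \ref{remark:CC_integral_Finite_equal}: the reverse inclusion (and hence equality) will follow once it is shown that the restriction of $\varphi_\pi$ to $\varphi_d^{-1}\Char(\scrM)$ is a finite morphism. Since $\int_\varphi\scrM$ is again regular holonomic by \cite[Theorem 5.5.3]{SST}, both sides are conic Lagrangian subvarieties of $T^*N$ of pure dimension $m$, which keeps the dimension accounting clean throughout.

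Properness follows at once from diagram \eqref{eq:integeralCommDiagram}: the restriction of $\varphi_\pi$ factors as the closed inclusion of $\varphi_d^{-1}\Char(\scrM)$ into $\PP^n_x \times \CC^m_z \times \check{\PP}^{m-1}_\zeta$ composed with the canonical projection onto $\CC^m_z \times \check{\PP}^{m-1}_\zeta$, and the latter is proper because $\PP^n_x$ is projective. For quasi-finiteness the plan is a Lagrangian dimension count: $\Char(\scrM)$ is pure of dimension $n+m$ in $T^*M$, while $\varphi_d$ realizes $M\times_N T^*N$ as a codimension-$n$ closed subvariety of $T^*M$ (cut out by $\xi_0=\cdots=\xi_n=0$ after accounting for the projectivization of the cotangent fibre), so $\varphi_d^{-1}\Char(\scrM)$ has expected dimension $m$. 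Combined with the Kashiwara inclusion the image $\varphi_\pi\varphi_d^{-1}\Char(\scrM)$ also has dimension $m$, matching its source, so the generic fibre of $\varphi_\pi$ on $\varphi_d^{-1}\Char(\scrM)$ is zero-dimensional. The explicit elimination description \eqref{eq:f_d_inv} makes this verifiable by direct computation.

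The principal obstacle I anticipate is ruling out positive-dimensional fibres at non-generic points. These can arise from components $\overline{T^*_W M}$ of $\Char(\scrM)$ whose underlying stratum $W$ is contained in a single fibre of $\varphi$ — the extreme case being the zero section, where the fibre is all of $\PP^n_x$. The plan to handle this is to decompose $\Char(\scrM) = \bigcup_i \overline{T^*_{W_i}M}$ using a Whitney stratification of $\mathrm{supp}(\scrM)$ and to argue stratum-by-stratum: each image $\varphi_\pi\varphi_d^{-1}\overline{T^*_{W_i}M}$ is identified with the closure of the conormal bundle to $\overline{\varphi(W_i)}$ inside $T^*N$, which is Lagrangian of dimension $m$. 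Then, invoking the cycle-theoretic pushforward formula $CC(\int_\varphi\scrM) = \varphi_{\pi*}\varphi_d^{*}CC(\scrM)$ — the $D$-module analogue of \cite[Proposition 3.46]{Laurentiu2022} combined with Theorem \ref{thm: sol perserves cc} — together with the nonnegativity of characteristic cycle multiplicities, which precludes cancellation in the pushforward, one concludes that each such conormal closure appears as a component of $\Char(\int_\varphi\scrM)$. This yields the reverse inclusion and completes the proof.
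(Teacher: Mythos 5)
Your first half follows the paper's route: reduce via Theorem \ref{thm:KashiwaraIntegral} and Remark \ref{remark:CC_integral_Finite_equal} to showing that $\varphi_\pi$ is finite on $\varphi_d^{-1}\Char(\scrM)$, and obtain properness from the compactness of $\PP^n_x$. But the criterion in the remark requires \emph{all} fibres to be finite, and your Lagrangian dimension count only controls the generic fibre; once you concede that positive-dimensional fibres may occur at non-generic points, the criterion can no longer be invoked, so your first half does not combine with your second half to yield the asserted equality. The paper closes this step differently: it regards $\varphi_\pi$ restricted to $\varphi_d^{-1}\Char(\scrM)$ as a map onto its image, notes that this map is dominant by construction and proper, and concludes finiteness by citing \cite[p.~323]{stasica2002effective}; no stratum-by-stratum or cycle-theoretic argument appears.

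The fallback you propose has concrete gaps. First, $\varphi_\pi\varphi_d^{-1}\overline{T^*_{W_i}M}$ is in general \emph{not} the closure of the conormal bundle of $\overline{\varphi(W_i)}$: for a stratum $W_i$ dominating $\CC^m_z$ the generic conormal directions have non-zero $\xi$-component and are discarded by the condition $\xi_0=\cdots=\xi_n=0$ in \eqref{eq:f_d_inv}, and what survives is a union of conormals of discriminant-type loci in $\CC^m_z$ (this is exactly how the Landau singularities arise), not $\Con(\overline{\varphi(W_i)})$. Second, the no-cancellation step is unsupported: the Lagrangian pushforward $\varphi_{\pi*}\varphi_d^*$ of a \emph{single} conormal cycle is a signed integer combination of conormals of strata of the image, with coefficients given by relative Euler characteristics that can vanish or be negative, so nonnegativity of the multiplicities of $CC(\int_\varphi\scrM)$ does not prevent the total coefficient of a component of the set-theoretic image $\varphi_\pi\varphi_d^{-1}\Char(\scrM)$ from being zero; ruling out exactly this vanishing is the content of the equality you are trying to prove, and it is also why Theorem \ref{thm:KashiwaraIntegral} is only an inclusion in general. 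Third, the identity $CC(\int_\varphi\scrM)=\varphi_{\pi*}\varphi_d^{*}CC(\scrM)$ for $D$-modules requires the compatibility of $Sol$ with proper direct images in addition to \cite[Proposition 3.46]{Laurentiu2022} and Theorem \ref{thm: sol perserves cc}, which is not established at this point of the paper. As written, your argument therefore does not deliver the reverse inclusion, whereas the paper's proof rests entirely on establishing finiteness of $\varphi_\pi$ on $\varphi_d^{-1}\Char(\scrM)$.
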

\begin{proof} Note that by Remark \ref{remark:CC_integral_Finite_equal} it is enough to show that $\varphi_\pi:\varphi_d^{-1}\Char(\scrM) \to \CC_z^m \times \check{\PP}_\zeta^{m-1}$ is a finite map; recall a map is finite if and only if it is proper and quasi-finite, i.e.~proper and has finite fibers (see e.g.~\cite[Corollary 6.5]{MumfordOda}). Note $\varphi_\pi$ is proper as all fibers are projective varieties and hence compact. Hence it is enough to show that for each $p\in \varphi_d^{-1}\Char(\scrM) $ the fiber $\varphi_\pi^{-1}(\varphi_\pi(p))$ is finite; i.e.~that the map $\varphi_\pi:\varphi_d^{-1}\Char(\scrM) \to {\varphi_\pi(\varphi_d^{-1}\Char(\scrM))}$ is a finite map. 
Now, since we only consider the map  $\varphi_\pi$ onto its image it is dominant by construction, and by, e.g.~\cite[page 323]{stasica2002effective}, we know that a dominant map between complex varieties is finite if and only if it is proper, hence the conclusion follows.
\end{proof}

\begin{remark}[Proof of Theorem \ref{thm:A} and Theorem \ref{thm:B}]
We note that the portion of the conclusions of Theorem \ref{thm:A} and Theorem \ref{thm:B} relating the characteristic variety of the integral \eqref{eq:EM_Int} to that of the integrand follow directly from Lemma \ref{lemma:int_Dmod} above. The (implicit) statement in Theorem \ref{thm:A} that the characteristic variety associated to the integrand is given by the blowup construction follows from  the discussions in Subsections \ref{subsubsec: computing W0 via blowup} and \ref{subsec:CharAnnProd} (in particular \eqref{eq: char N generic bold lambda}). The (implicit) statement in Theorem \ref{thm:B} that the characteristic variety associated to the integrand is given by the computation of conormal varieties of Whitney strata and multiplicities follows from \eqref{eq: char N generic bold lambda}, and the discussions in Section \ref{sec: sheaves}, see in particular Remark \ref{remark:EUofComplexLink} for the multiplicity portion.
\end{remark}

We will now show how to obtain not just the characteristic variety but in fact the full characteristic cycle of the $D$-module annihilating the integral. As seen in Section \ref{sec: sheaves} there is a close relationship between constructible functions and Lagrangian cycles, i.e. our characteristic cycles, this is actually an equivalence of categories proven by MacPherson~\cite{MacPherson1974}. 

Let $Y\subset M$ be a hypersurface defined defined by $f=0$. In Section \ref{sec: sheaves} we showed that the characteristic cycle of the annihilator ideal of $f$ is determined by $Y$, it follows directly that
\begin{equation}
    CC(\mathrm{Ann}_D(f^\lambda))=CC(\mathbb{1}_M-\mathbb{1}_Y).
\end{equation}
With $M=\PP_x^n\times\CC_z^m$, $\varphi$ being the canonical projection onto $\CC_z^m$ and $Y=\bV(x_0\cdots x_nF_1\cdots F_p)$ we apply the proper push-forward from \eqref{eq: proper push-forward constructible function} we obtain
\begin{equation*}
    \varphi_*(\mathbb{1}_{\PP^n\times\CC^m}-\mathbb{1}_Y)(z)=\chi((\PP^n\times\CC^m)\cap\varphi^{-1}(z))-\chi(Y\cap\varphi^{-1}(z))=n+1-\chi(Y\cap\varphi^{-1}(z)).
\end{equation*}
This constructible function is simply the index \eqref{eq: def index} of the integration module $\mathscr{J}=\int_\varphi\scrN_{\boldsymbol{\lambda},\boldsymbol{\nu}}$. For a generic $z$ (especially outside the singular locus) this gives the rank, i.e. the dimension of the solution space, of this module.

Note that the integer $\mathrm{rank}(\scrJ)$ is precisely the maximum likelihood degree \cite{Huh2013} of $\bV(x_1\cdots x_n f_1\cdots f_p)$, i.e. the Euler characteristic of its complement in the algebraic torus:
\begin{equation}
  \mathrm{rank}(\scrJ)=  \chi((\CC^*)^n-\bV(x_1\cdots x_n f_1\cdots f_p)).
\end{equation}
Let $Z={\rm Sing}(\mathscr{J})\subset \CC_z^m$ be the singular locus of  $\mathscr{J}$. If $\pi_z:\CC_z^m\times\PP_\zeta^{m-1}\to\CC_z^m$ is the canonical projection, then $\mathrm{Sing}(\mathscr{J}):=\overline{\pi_z(\Char^*(\mathscr{J}))}$. Let $\{W_\alpha\}$ be a Whitney stratification of $Z$. To get the characteristic cycle we can again employ Kashiwara's index theorem:
\begin{equation}\label{eq: index theorem integral}
    \chi_{\scrJ}(z)=n+1-\chi(Y\cap\varphi^{-1}(z))=\mathrm{rank}(\scrJ)-\sum_\alpha(-1)^{d-d_\alpha}\mu_\alpha \Eu_{\overline{W}_\alpha}(z),
\end{equation}
 where $\mu_\alpha$ is an integer (giving the multiplicity), $d$ is the dimension of the singular locus $Z$, and $d_\alpha=\dim(W_\alpha).$ Since the Euler obstruction is constant on strata, the index $\chi_\scrJ(z)$ is also constant on strata. By ordering the strata by dimension this again leads to an upper-triangular system where we can solve for the $\mu_\alpha$ explicitly. We give details as to how the values appearing in this computation may be carried our explicitly in below. 

 Finally, we note that \eqref{eq: index theorem integral} also implies \cite[Theorem 5.1]{Fevola:2024acq}.

 \begin{remark}[Computing multiplicities for the characteristic cycle of the integral]\label{remark:compMultCCIntegral} As above consider  the polynomial $G=x_0\cdots x_n F_h$ in $\CC[x,z]$, where $F_h$ is the homogenization of the product $F=f_1\cdots f_p$ in \eqref{eq:EM_Int} with respect to the new variable $x_0$. Again take $Y=\bV(G) \subset \pp_x^n\times \CC_z^m$, let $\varphi: \pp_x^n\times \CC_z^m\to \CC_z^m $ be the projection, and let $\scrM$ denote the $D$-module which annihilates the integral \eqref{eq:EM_Int}. 

From either Theorem \ref{thm:A} or Theorem \ref{thm:B} we obtain the characteristic variety $\Char(\scrM)\subset \CC_z^m\times \PP^{m-1}_\zeta$ of the integral.  Considering the projection map $\pi_z:\CC_z^m\times \PP^{m-1}_\zeta\to \CC_z^m$ we have that the singular locus of $\scrM$ is given by $Z:={\rm Sing}(\scrM)=\overline{\pi_z(\Char^*(\scrM))}\subset \CC_z^m$ where $\Char^*(\scrM)$ denotes the characteristic variety with the zero-section removed. Let $\{W_i \; |i= 1, \dots, \rho\}$, $Z=\cup W_i$, be a Whitney stratification of the algebraic variety ${\rm Sing}(\scrM)$ with some fixed ordering of strata.  Set $c(G):=\chi((\CC^*)^n-\bV(x_1\cdots x_n f_1(x,z_0)\cdots f_p(x,z_0)))-(n+1)$ where the polynomials $f_i$ are  evaluated at a generic point $z_0$ outside of the singular locus $Z$, and the $x_i$ are treated as variables with the Euler characteristic calculation taking place in $\CC^n$. 

Consider the problem of computing the multiplicities $\mu_\alpha$ in \eqref{eq: index theorem integral}. From \eqref{eq: index theorem integral} we can write down the linear system (given also above in \eqref{eq:CCIntMultsEu}): 
 \footnotesize\begin{equation*}
    \begin{pmatrix}
       c(G) +\chi(Y\cap\varphi^{-1}(z_1))\\
        \vdots\\
        c(G)+\chi(Y\cap\varphi^{-1}(z_\rho))
    \end{pmatrix}=\Eu(Z)\begin{pmatrix}
        \mu_{1}\\
        \vdots \\
        \mu_{\rho}
    \end{pmatrix}=\begin{pmatrix}
       (-1)^{d-d_1} \Eu_{\overline{W}_{1}}(W_{1}) & \cdots &(-1)^{d-d_\rho
       }\Eu_{\overline{W}_{\rho}}(W_{1})\\
        \vdots & \ddots & \vdots\\
        (-1)^{d-d_1}\Eu_{\overline{W}_{1}}(W_{\rho}) & \cdots &(-1)^{d-d_\rho}\Eu_{\overline{W}_{\rho}}(W_{\rho})
    \end{pmatrix}\begin{pmatrix}
        \mu_{1}\\
        \vdots \\
        \mu_{\rho}
    \end{pmatrix}\end{equation*}\normalsize
 Using the procedure described in Remark \ref{remark:EUofComplexLink} we can compute the integer matrix $\Eu(Z)$. It remains to compute the integers $c(G)$ and $\chi(Y\cap\varphi^{-1}(z_i))$. We first note that the Euler characteristics of affine or projective complex varieties (and of Zariski open subsets of affine or projective varieties via the inclusion/exclusion property of Euler characteristics) may be computed with any algorithm to compute the Euler characteristic of a projective variety (e.g.~\cite{helmer2016proj}, as implemented in \cite{charClassM2}). Hence the main practical question remaining is how to deal with picking random $z_i$ of the required type which can be effectively represented on a computer (i.e.~most practical sampling methods deliver floating point numbers as output which can then not be used as input into the Euler characteristic calculation in a straightforward way). 

 First consider the integer $\chi((\CC^*)^n-\bV(x_1\cdots x_n f_1(x,z_0)\cdots f_p(x,z_0)))$ appearing in $c(G)$; in this case, since we seek a general element $z_0$ of $\CC^m$ not in $Z$ we may simply choose a random vector of rational numbers (since $Z$ is a proper subvariety of $\CC^m$). Next consider the integers $\chi(Y\cap\varphi^{-1}(z_i))$, where $z_i$ is a general element in $W_i$. Let $I_{W_i}$ denote the radical ideal defining $\overline{W_i}.$ One remedy for this is as follows:\begin{enumerate}
     \item Take $d_i=\dim(W_i)$ general linear equations $\ell_j$ defining the idea $L_i=\langle \ell_{1}, \dots , \ell_{d_i} \rangle$ in $\CC[z]$, so that $\bV(L_i)$ is a general linear space of codimension $d_i$ in $\CC^m_z$. In practice we can construct $\ell_j$ by taking random rational coefficients, and hence can represent $L_i$ on a computer.  
     \item Compute the Euler characteristic of $Y \cap \bV(I_{W_i}+L_i)$ where we abuse notation and consider the ideals $I_{W_i}$ and $L_i$ both as ideals in $\CC[z]$ and in $\CC[x,z]$. Note that by construction $\bV(I_{W_i}+L_i)\subset \CC^m_z$, considered as an variety in $\CC^m_z$, consists of exactly $\deg(\overline{W_i})$ many points; let $\Theta_i$ denote this finite set of $\deg(\overline{W_i})$ many general points in $W_i$. It follows that $Y \cap \bV(I_{W_i}+L_i)=Y\cap \varphi^{-1}(\Theta_i)$. 
     \item Hence we have $$\chi(Y\cap\varphi^{-1}(z_i))=\frac{\chi(Y \cap \bV(I_{W_i}+L_i))}{\deg(\overline{W_i})}.$$
 \end{enumerate}
 \end{remark}

\begin{remark}[Proof of Theorem \ref{thm:C}]
    The proof of Theorem  \ref{thm:C} follows immediately from the discussions above in this subsection. 
\end{remark}

\begin{example}\label{ex: simple integral}

    We consider the integral
    \begin{equation*}
        \int_\sigma\frac{dx}{z-x^2}.
    \end{equation*}
    The ideal annihilating the integrand is $I=\avg{2x\partial_z+\partial_x,\, x\partial_x+2z\partial_z+2}$ and the integration ideal $J=\avg{2z\partial_z+1}.$ The characteristic varieties are simply
    \begin{align*}
        \Char(I)&=\bV(2x\zeta+\xi,\,2z\zeta+x\xi,\,x^2-z)\cup\bV(\xi,\zeta)=\Con(x^2-z)\cup\Con(0)\\
        \Char(J)&=\bV(z)\cup\bV(\zeta)=\Con(z)\cup\Con(0)
    \end{align*}
    and it is readily verified that $\Char(J)=\bV((\mathrm{in}_{(0,1)}(I)+\avg{\xi})\cap\QQ[z,\zeta])$. The student of distribution theory will recognize this example. We have the distributional identity
\begin{equation}
    \int\frac{dx}{z-x^2 + i0}=-i\pi(z+i0)^{-1/2}
\end{equation}
with the analytic wave front sets
\begin{align*}
    WF_A((z-x^2+i0)^{-1})&=\{(z,x;\,\lambda z,-2\lambda x)\in \RR^2\times\RR^2\setminus 0\,|\,z-x^2=0,\,\lambda>0\},\\ WF_A((z+i0)^{-1/2})&=\{(0;\,\zeta)\,|\,\zeta>0\},
\end{align*}
depicted in Figure \ref{fig: integral parabola normal}.
\begin{figure}[t]
    \centering
    \includegraphics[width=0.3\linewidth]{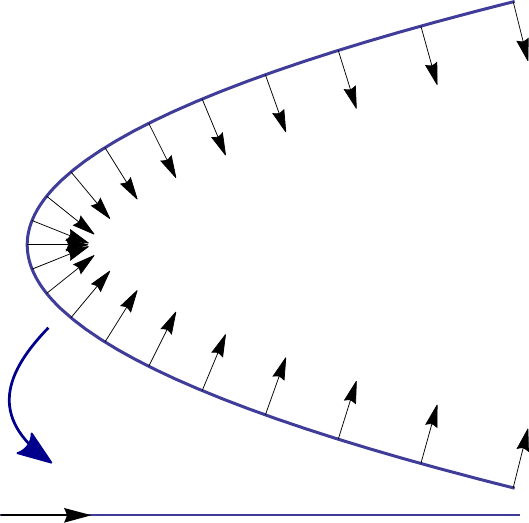}
    \caption{The analytic wave front sets in Example \ref{ex: simple integral}.}
    \label{fig: integral parabola normal}
\end{figure}
\end{example}

\begin{example}\label{ex: GKZ integral}
    We now consider one of the simplest Euler-Mellin integrals:
    \begin{equation}
        \int_0^{\infty}(z_1+z_2x_1+z_3x_1^2)^ax_1^{-b}\frac{dx_1}{x_1}.
    \end{equation}
    This integral fits within the framework of generalized hypergometry with Gelfand, Kapranov, and Zelevinsky and is annihilated hypergeometric system $H_A(\beta)$ defined by
    \begin{equation}
        A=\begin{pmatrix}
            1&1&1\\
            0&1&2
        \end{pmatrix},\qquad \beta=\begin{pmatrix}
            a\\ b
        \end{pmatrix}.
        \end{equation}
        For GKZ systems we can not only calculate the characteristic variety but in fact the full characteristic cycle directly as the sum of the conormal varieties \cite[Theorem 4 and 5]{Gelfand1989} of the factors in the principal $A$-determinant \cite[Chapter 10, Theorem 1.2]{GKZ} weighted by their exponents:
        \begin{equation}
           CC(H_\beta(A))=2\Con(0)+\Con(z_2^2-4z_1z_3)+\Con(z_1)+\Con(z_3).
        \end{equation}
        We will now calculate the characteristic variety, i.e. the union of the above terms, using the methods in this paper.
        
        Let $Y\subset\PP_x^1\times\CC_z^3$ be the hypersurface defined by $G=x_0x_1(z_1x_0^2+z_2x_0x_1+z_3x_1^2)$ equal to zero. The minimal Whitney stratification of this variety contains 13 elements. Solving the linear system \eqref{eq: index thm char func} with the Euler obstruction matrix
        \begin{equation*}
            \Eu(Y)=\left(\!\begin{array}{ccccccccccccc}
1&-1&-1&-1&0&1&1&-1&-1&-1&1&1&2\\
0&-1&0&0&1&1&0&-1&-1&0&1&1&1\\
0&0&-1&0&1&0&1&-1&0&-1&1&1&1\\
0&0&0&-1&0&1&1&-1&-1&-1&1&1&2\\
0&0&0&0&1&0&0&-1&0&0&1&1&1\\
0&0&0&0&0&1&0&-1&-1&0&1&1&2\\
0&0&0&0&0&0&1&-1&0&-1&1&1&2\\
0&0&0&0&0&0&0&-1&0&0&1&1&2\\
0&0&0&0&0&0&0&0&-1&0&1&0&1\\
0&0&0&0&0&0&0&0&0&-1&0&1&1\\
0&0&0&0&0&0&0&0&0&0&1&0&0\\
0&0&0&0&0&0&0&0&0&0&0&1&0\\
0&0&0&0&0&0&0&0&0&0&0&0&1
\end{array}\!\right)
        \end{equation*}
        shows that the four components of dimension zero and one: $\bV(x_0,x_1,z_1,z_2),\ \bV(x_0,x_1,z_1,z_3),$ $\bV(x_0,x_1,z_2,z_3),\ \bV(x_0,x_1,z_1,z_2,z_3)$, have multiplicity zero. The characteristic cycle is given by
        \begin{align*}     CC(\mathbb{1}_Y)=&\Con(z_1x_0^2+z_2x_0x_1+z_3x_1^2)+\Con(x_0)+\Con(x_1)+\Con(x_1,z_1)+\Con(x_0,z_3)\\
            & +3\Con(x_0,x_1)+\Con(x_0,x_1,z_2^2-4z_1z_3)+\Con(x_0,x_1,z_1)+\Con(x_0,x_1,z_3).
       \end{align*}

       This example is small enough so the characteristic variety of $\mathbb{1}_Y$ can also be calculated via the blow-up method. The ideal defining $\mathrm{Bl}_{\mathrm{Jac}(G)}(\PP_x^1\times\CC_z^3)$ is
       \begin{equation*}
           \avg{\zeta_2^2-\zeta_1\zeta_3,\,x_1\zeta_2-x_0\zeta_3,\,x_1\zeta_1-x_0\zeta_2,\,x_1\xi_1-z_1\zeta_1-2z_2\zeta_2-3z_3\zeta_3,\,x_0\xi_0-3z_1\zeta_1-2z_2\zeta_2-z_3\zeta_3}
       \end{equation*}
       Let $D_\alpha$ be the irreducible components of the exceptional divisor ${\rm Bl}_{\mathrm{Jac}(G)}(\PP_x^1\times\CC_z^3)\cap\mathrm{Jac}(G)$, the characteristic variety is
        \begin{align*}
            \Char(\mathbb{1}_Y)&=\bigcup D_i \cup \Con(Y)\\&=\Con(z_1x_0^2+z_2x_0x_1+z_3x_1^2)\cup\Con(x_0)\cup\Con(x_1)\cup\Con(x_1,z_1)\cup\Con(x_0,z_3)\\
            &\ \ \ \ \ \cup\Con(x_0,x_1)\cup\Con(x_0,x_1,z_2^2-4z_1z_3)\cup\Con(x_0,x_1,z_1)\cup\Con(x_0,x_1,z_3).
        \end{align*}
        
    Denote by $I$ the ideal defining the variety $\Char(\mathbb{1}_Y)$, or equivalently the support of $CC(\mathbb{1}_Y)$. The ideal defining the characteristic variety of the $D$-module is then given by $I_{\Char}=(I+\avg{\xi_0,\,\xi_1})\cap\QQ[z,\zeta]$ so the characteristic variety is given by
    \begin{align*}
        \bV(I_\Char)=&\bV(\zeta_2^2-\zeta_1\zeta_3,\,z_2\zeta_2+2z_3\zeta_3,\,z_1\zeta_1-z_3\zeta_3)\\
        =&\Con(0)\cup\Con(z_1)\cup\Con(z_3)\cup\Con(z_2^2-4z_1z_3).
    \end{align*}
    
    We now want to ccalculate the multiplicities of the components in the characteristic cycle. The singular locus $Z$ of this $D$-module is given by $\bV(E_A=z_1z_3(z_2^2-4z_1z_3))$ and the Whitney stratification of this variety contains 7 strata: 
    \small
    \begin{align*}
        Z_0&=\bV(z_1,z_2,z_3)\\
        Z_1&=\bV(z_1,z_3)\cup\bV(z_2,z_3)\cup\bV(z_1,z_2)\\
        Z_2&=\bV(z_3)\cup\bV(z_1)\cup\bV(z_2^2-4z_1z_3).
    \end{align*}
    \normalsize
    With the order indicated here, the Euler obstruction matrix is given by
\begin{equation*}
    \Eu(Z)=\left(\!\begin{array}{ccccccc}
1&-1&-1&-1&1&1&0\\
0&-1&0&0&1&1&0\\
0&0&-1&0&1&0&1\\
0&0&0&-1&0&1&1\\
0&0&0&0&1&0&0\\
0&0&0&0&0&1&0\\
0&0&0&0&0&0&1
\end{array}\!\right).
\end{equation*}
    The linear system we want to solve in order to obtain the multiplicities is the system in \eqref{eq: index theorem integral}. Now we note that the rank of the $D$-module annihilating the integral is given by the Euler characteristic $\chi(Y\cap\varphi^{-1}(z))$, for $z$ not in the singular locus, which is two. The index on each of the seven strata is given by the Euler characteristic $2-\chi(Y\cap\varphi^{-1}(z))$ for a generic $z$ in each strata, this gives the vector of indices $(0,0,0,0,1,1,1)^T$. Plugging the rank, index and Euler obstruction matrix into \eqref{eq: index theorem integral} and rearranging the terms we find the linear system
    \begin{equation}
        \begin{pmatrix}
            2\\2\\2\\2\\1\\1\\1
        \end{pmatrix}=\left(\!\begin{array}{ccccccc}
1&-1&-1&-1&1&1&0\\
0&-1&0&0&1&1&0\\
0&0&-1&0&1&0&1\\
0&0&0&-1&0&1&1\\
0&0&0&0&1&0&0\\
0&0&0&0&0&1&0\\
0&0&0&0&0&0&1
\end{array}\!\right)\begin{pmatrix}
    m_1\\m_2\\m_3\\m_4\\m_5\\m_6\\m_7
\end{pmatrix}\iff\begin{pmatrix}
    m_1\\m_2\\m_3\\m_4\\m_5\\m_6\\m_7
\end{pmatrix}=\begin{pmatrix}
    0\\0\\0\\0\\1\\1\\1
\end{pmatrix}.
    \end{equation}
    The characteristic cycle of this $D$-module is therefore
    \begin{equation*}
        2\Con(0)+\Con(z_3)+\Con(z_1)+\Con(z_2^2-4z_1z_3).
    \end{equation*}
    That these were the only components contributing to the cycle we already knew from the calculation of the characteristic variety. This cycle is of course the same cycle as obtained from the GKZ theory.
\end{example}
\subsection{Feynman integrals}
We conclude the paper with two examples of Feynman integrals from physics.
\begin{example}[Bubble]\label{ex: bubble} 
    The example from the Introduction is on of the simplest Feynman integrals and we revisit this example here just to verify the result. We do this by again noting that this integral in fact fits within the GKZ framework. The hypergoemetric system $H_A(\beta)$ annihilating this integral is generated by
    \begin{equation*}
        A=\begin{pmatrix}
            1&1&1&1&1\\
            1&0&1&2&0\\
            0&1&1&0&2
        \end{pmatrix},\quad \beta=\begin{pmatrix}
            -D/2\\-\nu_1\\-\nu_2.
        \end{pmatrix}
    \end{equation*}
    This system has characteristic cycle
    \begin{equation*}
        CC(H_A(\beta))=3\Con(0)+\Con(z_3)+\Con((z_1+z_2-z_3)^2-4z_1z_2)+\Con(z_1)+\Con(z_2)
    \end{equation*}
    and the support of this cycle is the same characteristic variety as obtained in the introduction.

\end{example}
\begin{example}[Parachute] In this example we consider the Feynman integral specified by the diagram below:
    \begin{center}
\begin{tikzpicture}[baseline=-\the\dimexpr\fontdimen22\textfont2\relax]
            \begin{feynman}
            \vertex (v1) at ( -\xs, 0);
            \vertex (v2) at ( {.5*\xs}, {-0.86602540378*\xs});
            \vertex (v3) at ( {.5*\xs}, {0.86602540378*\xs});

            \vertex (i1) at ( {-1.9*\xs}, 0){\(p_3\)};
            \vertex (i2) at ( {1.9*\xs/2}, {-1.9*\xs*0.86602540378}){\(p_2\)};
            \vertex (i3) at ( {1.9*\xs/2}, { 1.9*\xs*0.86602540378}){\(p_1\)};
    \diagram*{
        (v1)[dot]--[edge label'=\(m_4\)](v2)--[edge label'=\(m_1\)](v3)--[edge label'=\(m_3\)](v1),(v2)--[half right, edge label'=\(m_2\)](v3),
        (i1) -- [fermion](v1), (i2) -- [fermion](v2),(i3) -- [fermion](v3);
    };
    \end{feynman}
    \end{tikzpicture}
\end{center} 
where we fix the masses $m_1^2=1$, $m_2^2=m_3^2=0$ and $m_4^2=2$, along with two of the three external momenta $p_1^2=-1$, and $p_2^2=0$. The homogenized integrand is
\begin{align*}
    G=&x_0x_1x_2x_3x_4( -p_3^2x_1x_3x_4-p_3^2x_2x_3x_4+x_0x_1x_2+x_1^2x_2+x_0x_1x_3+x_1^2x_3+x_0x_2x_3\\
    &+2x_1x_2x_3+x_0x_1x_4+x_1^2x_4+x_0x_2x_4+3x_1x_2x_4+2x
      _1x_3x_4+2x_2x_3x_4+2x_1x_4^2+2x_2x_4^2)
\end{align*}
where we treat $p_3^2$ as an irreducible complex variable. Taking $Y=\bV(G)$ and computing a minimal Whitney stratification we obtain 72 strata, all with non-zero multiplicity:
\begin{align*}
 \{&1, 1, 3, 4, 5, 1, 1, 1, 1, 3, 1, 1, 33, 1, 1, 2, 8, 3, 7, 1, 1, 12, 2, 1,
      1, 3, 6, 3, 1, 1, 1, 3, 1, 3, 2,\\
      &1, 1, 1, 1, 2, 2, 3, 1, 3, 1, 1, 1, 1, 1,
      1, 1, 1, 1, 1, 1, 1, 1, 1, 1, 1, 1, 1, 2, 1, 1, 1, 1, 1, 1, 1, 1, 1\}.
\end{align*}
The characteristic variety of the $D$-module annihilating this integral is generated by the zero-section and the unions of the conormal varieties of the five points $\{p_3^2-2,\,p_3^2-1,\,p_3^2,\,p_3^2+1,\,p_3^2+2\}$ which also constitutes the singular locus. This coincides with \cite{Helmer:2024wax}. The rank of this $D$-module is 8 and the characteristic cycle is
\begin{equation*}
    8\Con(0)+2\Con(p_3^2-2)+\Con(p_3^2-1)+3\Con(p_3^2)+2\Con(p_3^2+1)+\Con(p_3^2+2).
\end{equation*}

\end{example}


\section*{Acknowledgments}
The authors would like to thank David Massey for a very helpful and informative correspondence.

FT is funded by the Royal Society grant number URF\textbackslash R1\textbackslash 201473. For the purpose of Open Access, the author has applied a CC BY public copyright licence to any Author Accepted Manuscript (AAM) version arising from this submission.

MH is supported by the Air Force Office of Scientific Research (AFOSR) under award
number FA9550-22-1-0462, managed by Dr.~Frederick Leve, and by the Royal Society under grant RSWF\textbackslash R2\textbackslash 242006, and would like to gratefully acknowledge this support. 
\bibliographystyle{JHEP}
\bibliography{library}
\end{document}